\documentclass[twoside,11pt]{entics}
\usepackage{enticsmacro}
\usepackage{graphicx}
\usepackage[all]{xy}
\usepackage{macros}

\newtheorem{ques}[thm]{Question}

\usepackage[capitalize]{cleveref}
\Crefname{remark}{Remark}{Remarks}
\crefname{remark}{remark}{remarks}
\Crefname{question}{Question}{Questions}
\crefname{question}{question}{questions}

\def\conf{MFPS 2026} 	
\volume{NN}			
\def\lastname{Cavallo, Höfer} 
\begin{document}
\begin{frontmatter}
 \title{Univalence without function extensionality\thanksref{ALL}} 
 \thanks[ALL]{
   We thank Lorenzo Perticone for first (inadvertently) calling our attention to this question, and we thank the Gothenburg Logic and Types unit for many lunchtime discussions on the topic.
   We also thank Andr{\'a}s Kov{\'a}cs for his Agda formalization of the polynomial model, which was of great help to us in understanding the construction.}   
  \author{Evan Cavallo\thanksref{b}\thanksref{myemail}\thanksref{KAW}} 
  \author{Jonas Höfer\thanksref{b}\thanksref{coemail}\thanksref{ERC}} 
  \thanks[myemail]{Email: \href{mailto:evan.cavallo@gu.se} {\texttt{\normalshape evan.cavallo@gu.se}}}
  \address[b]{Department of Computer Science and Engineering\\University of Gothenburg and Chalmers University of Technology
    \\ Gothenburg, Sweden} 
  \thanks[coemail]{Email: \href{mailto:hoferj@chalmers.se} {\texttt{\normalshape hoferj@chalmers.se}}}
  \thanks[KAW]{Supported by the Knut and Alice Wallenberg Foundation (KAW), Grant No.\ 2019.0116}
  \thanks[ERC]{Supported by ForCUTT project, ERC advanced grant 10105329}
\begin{abstract}
  It is a well-known theorem of homotopy type theory, originally due to Voevodsky, that function extensionality holds inside any univalent universe.
  We consider a weaker variant of the univalence axiom, asserting that the wild category formed by the universe is univalent, which we call categorical univalence.
  We show that categorical univalence does not imply function extensionality by an analysis of Von Glehn's polynomial model construction, which produces models of Martin-L\"of type theory that always refute function extensionality.
  We find in particular that when the base model has a univalent universe, its polynomial model has a universe that is categorically univalent but lacks function extensionality.
\end{abstract}
\begin{keyword}
  univalence, function extensionality, homotopy type theory, type theory, polynomial functor
\end{keyword}
\end{frontmatter}

\section{Introduction}\label{sec:introduction}

In 2010, Voevodsky~\cite{Voevodsky2010Funext,Voevodsky2010NSF} discovered that any universe of intensional Martin-L\"of type theory (\(\ITT\)) satisfying his univalence axiom also satisfies function extensionality: (dependent) functions between types in the universe are equal as soon as they are homotopic.
This result became a foundational pillar of Homotopy Type Theory / Univalent Foundations.
For constructivists, it was an additional motivation to justify univalence constructively---noted for example by Bezem, Coquand, and Huber \cite{BezemCoquandHuber2013}---given the historical difficulty of integrating function extensionality with constructive type theory.

At the same time, the connection between univalence and function extensionality has always seemed contingent.
It is unclear whether univalence implies extensionality principles for other negative type formers, such as coinductive \cite{Voevodsky2014Coinductives} or modal \cite[Conjecture 11.2.2]{Gratzer2023} types, which suggests functions might be privileged simply because they appear in the statement of univalence.
Furthermore, minor variations on the univalence axiom are not known to imply function extensionality.

In a post on MathOverflow in 2013 \cite{Dorais2013}, Fran\c{c}ois G.\ Dorais proposed\footnote{With some input from Mike Shulman.} one such variation.
To contextualize Dorais' axiom, let us first review the standard definitions.
For functions \(f,g \co \prod_{a \co A} B(a)\), we write \(f \sim g \coloneqq \prod_{a \co A} fa =_{B(a)} ga\) for the type of \emph{homotopies} from \(f\) to \(g\).
The type \(A \simeq B\) of \emph{(homotopy) equivalences} between types \(A,B\) is the type of homotopy bi-invertible maps, that is, \(f \co A \to B\) equipped with \(s,r \co B \to A\) such that \(fs \sim \id_B\) and \(rf \sim \id_A\) \cite[\S9.2]{Rijke2025}.
We assume a fixed universe \(\U\).

\begin{definition}\label[definition]{defn:function-extensionality}
  \emph{Function extensionality} (\(\FE\)) is the principle that for every family of types \(a \co A \vdash B(a)\) and \(f,g \co \prod_{a \co A} B(a)\), the map \((f =_{\prod_{a \co A} B(a)} g) \to (f \sim g)\) is an equivalence.
  We write \(\FE_\U\) for the relativization of \(\FE\) to \(\U\), i.e., its restriction to the case where \(A \co \U\) and \(B \co A \to \U\).
\end{definition}

\begin{definition}\label[definition]{defn:univalence}
  \emph{Univalence} (\(\UA_\U\)) is the principle that the map \(\idToEquiv \co (A =_\U B) \to (A \simeq B)\) is an equivalence for all \(A,B \co \U\).
\end{definition}

Dorais observed essentially that the map \(\idToEquiv \co (A =_\U B) \to (A \simeq B)\), which sends the reflexive path to the identity equivalence, factors up to homotopy through an intermediate type
\[
  \begin{tikzcd}[row sep=0em]
    & (A \catIso B) \ar[dashed,near start]{dr}{\CEqToEq} \\
    (A =_\U B) \ar[dashed,near end]{ur}{\idToCEq} \ar{rr}[below]{\idToEquiv} && (A \simeq B)
  \end{tikzcd}
\]
of what we call \emph{categorical equivalences}: maps \(f \co A \to B\) equipped with \(s,r \co B \to A\) such that \(fs =_{B\to B} \id_B\) and \(rf =_{A\to A} \id_A\), i.e., with left and right inverses up to equality rather than homotopy.
This suggests Dorais' proposed weakening of univalence:

\begin{definition}\label[definition]{defn:categorical-univalence}
  \emph{Categorical univalence} (\(\CUA_\U\)) is the principle that \(\idToCEq \co (A =_{\U} B) \to (A \catIso B)\) is an equivalence for all \(A,B \co \U\).
\end{definition}

The type \(A \CEq B\) can be described as the type of isomorphisms from \(A\) to \(B\) in the \emph{wild category} of types in \(\U\) and functions between them.
\(\CUA_\U\) states exactly that \(\U\) is a univalent wild category.
In the presence of function extensionality in \(\U\), the map \((A \catIso B) \to (A \simeq B)\) is an equivalence, and so \(\FE_{\U} + \CUA_{\U}\) implies \(\UA_{\U}\); conversely, the fact that \(\UA_{\U}\) implies \(\FE_{\U}\) means that it also implies \(\CUA_{\U}\).
Dorais asked whether the converse is true: does \(\CUA_{\U}\) imply \(\UA_{\U}\), or equivalently \(\FE_{\U}\)?

We answer this question in the negative, identifying a model of \(\ITT\) with a universe that validates \(\CUA_{\U}\) but not \(\FE_{\U}\).
Actually, we prove the consistency of \(\lnot\FE_{\U}\) with a slightly stronger statement:

\begin{definition}\label[definition]{defn:familial-categorical-univalence}
  \emph{Familial categorical univalence} (\(\FamCUA_\U\)) is the principle that for all \(I \co \U\), the wild category \(\U^I\)---whose objects are families \(A \co I \to \U\) and whose morphisms \(A \to B\) are families of functions \(\prod_{i\co I} A(i) \to B(i)\)---is a univalent wild category.
\end{definition}

We assume strict \(\eta\) laws for unit and \(\Pi\) types, so \(\FamCUA_\U\) implies \(\CUA_\U\) by taking \(I = 1\).
We show the independence of \(\FamCUA_{\U}\) from \(\FE_{\U}\) using Von Glehn's \emph{polynomial model} construction \(\Poly{-}\) \cite{Glehn2015,MossGlehn2018}, a known source of models of type theory that refute function extensionality.
Specifically, we prove:

\begin{theorem}[{\ref{prop:poly-inherits-famcua}}]
  Let \(\cat{C}\) be a model of \(\ITT\) with extensive finite coproducts of types satisfying the strict \(\eta\) rule.
  If \(\cat{C} \vDash \FamCUA_{\U}\), then \(\Poly{\cat{C}} \vDash \FamCUA_{\U}\).
\end{theorem}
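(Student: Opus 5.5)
The plan is to unfold \(\FamCUA_\U\) as interpreted in \(\Poly{\cat{C}}\) into a statement internal to \(\cat{C}\), and then derive that statement from \(\cat{C} \vDash \FamCUA_\U\) by applying familial categorical univalence of \(\cat{C}\) twice, once for positive parts and once for negative parts. I recall the shape of the construction. A type of \(\Poly{\cat{C}}\) is a pair \((A^+, A^-)\) with \(A^-\) depending on \(A^+\). A term has a forward component together with a backward component valued in the negative part of the context. The identity type has positive part \(x =_{A^+} y\) and trivial (empty) negative part. The universe \(\U\) of \(\Poly{\cat{C}}\) has positive part \(\sum_{X \co \U} (X \to \U)\), i.e.\ containers in \(\U\), and trivial negative part.

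First I would compute the wild category \(\U^I\) of \(\Poly{\cat{C}}\) for a \(\Poly{\cat{C}}\)-type \(I = (I^+, I^-)\). Since the negative part of \(\U\) is trivial, the backward components of a family \(I \to \U\) are forced. Such a family therefore amounts, in \(\cat{C}\), to a family of containers \(i \co I^+ \vdash (A^+(i), A^-(i,-))\). A morphism \(A \to B\) has positive part
\[
\prod_{i \co I^+} \prod_{x \co A^+ i} \sum_{y \co B^+ i} \bigl(B^-(i,y) \to A^-(i,x)\bigr),
\]
i.e.\ a family of lenses. Composition and identities in \(\Poly{\cat{C}}\) compute to lens composition and identity lenses. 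Because identity types in \(\Poly{\cat{C}}\) are identity types of positive parts, the type \(A \catIso B\) in \(\Poly{\cat{C}}\) is interpreted as the \(\cat{C}\)-type of categorical isomorphisms of lenses. Similarly, \(A =_{\U^I} B\) is interpreted as the identity type in \(\cat{C}\) of \(I^+ \to \sum_{X \co \U}(X \to \U)\), and \(\idToCEq\) is interpreted as the evident map between these.

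The key step is to decompose lens isomorphisms without function extensionality. Using strict \(\eta\) for \(\Pi\) (and for \(\Sigma\)/unit), the type \(\prod_{x}\sum_{y} P(x,y)\) is \emph{definitionally} isomorphic to \(\sum_{f \co \prod_x B(x)} \prod_x P(x, fx)\). A definitional isomorphism induces equivalences on identity types, so no \(\FE\) is needed. Hence a lens is a pair of a forward family of maps \(f \co \prod_i A^+ i \to B^+ i\) and a backward family \(\prod_{i,x} B^-(i, f x) \to A^-(i,x)\). An equality of lenses is then a pair consisting of an equality of forward maps and a transported equality of backward maps. With this, I would exhibit \(A \catIso B\) as equivalent to
\[
\sum_{e \co A^+ \catIso B^+} \bigl(\text{backward-iso from } B^- \circ e \text{ to } A^-\bigr),
\]
where the first factor is a categorical isomorphism in \(\cat{C}\)'s \(\U^{I^+}\). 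For the second factor I would first transport along \(e\) (via \(\FamCUA\) in \(\cat{C}\) at index \(I^+\), reducing to \(e = \id\) by path induction). It then becomes a categorical isomorphism in \(\U^{\sum_{i} A^+ i}\) (opposite direction, which is harmless). Likewise, \(A =_{\U^I} B\) splits, by the same \(\eta\)-isomorphism and the characterization of paths in \(\Sigma\)-types, into a path \(A^+ = B^+\) and a path of negative families over it. Applying \(\FamCUA_\U\) of \(\cat{C}\) at index \(I^+\) and then at index \(\sum_{i \co I^+} A^+ i\) shows the total map is an equivalence, once I check that \(\idToCEq\) respects the splitting, which I would do by path induction.

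I expect the main obstacle to be the bookkeeping in the decomposition step. One must check that the lens-level identity laws \(gf = \id\) and \(fh = \id\) split exactly into forward identity laws and backward ones, with the backward laws living over the forward equalities. One must also check that this splitting matches \(\idToCEq\) up to homotopy. To keep this tractable I would phrase everything via the total-space/fundamental-theorem argument: show \(\sum_B (A \catIso B)\) is contractible in \(\U^I\) of \(\Poly{\cat{C}}\). By the splitting this reduces to contractibility of \(\sum_{B^+}(A^+ \catIso B^+)\) and then of \(\sum_{B^-}(\text{iso to } A^-)\), both given by \(\FamCUA\) in \(\cat{C}\). A further point to verify is that the index \(\sum_i A^+ i\) is a legitimate index for \(\FamCUA\) in \(\cat{C}\), i.e.\ lies in \(\U\); since \(\FamCUA_\U\) quantifies over \(I \co \U\), I would also need \(I^+ \co \U\), and I would confirm this from the interpretation of \(\U\)-small types in \(\Poly{\cat{C}}\). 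The extensive coproducts are needed only for \(\Poly{\cat{C}}\) to be a model with these type formers, and play no further role.
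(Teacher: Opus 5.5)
Your overall architecture is the paper's: split an isomorphism into a forward and a backward part, path-induct on the forward isomorphism using \(\FamCUA_{\U}\) of \(\cat{C}\) at index \(I^+\), then apply \(\FamCUA_{\U}\) at index \(\sum_i A^+ i\). But your computation of the morphisms of \(\U^I\) in \(\Poly{\cat{C}}\) is wrong, and the error hides the main technical step. Function types in \(\Poly{\cat{C}}\) are not lens types. By the construction of \(\Pi\) types, the backward component is a \emph{partial} map, and the positive part of \(\U^I(A,B)\) is (\Cref{prop:characterization-of-families-of-functions})
\[
  \sum_{f \co \prod_{i} A^+ i \to B^+ i} \; \prod_{i \co I^+,\, x \co A^+ i} B^-(i, f_i x) \to 1 + \bigl(I^-(i) + A^-(i,x)\bigr).
\]
Here \(1\) stands for the abstracted context positions, and \(I^-(i)\) comes from the positions of the domain \(\sum_I A\). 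The summand \(1\) is present even when \(I = 1\). Accordingly, composition is Kleisli composition (errors are passed through), and identities include into the last summand. After your splitting and path induction, the backward factor is therefore the type of isomorphisms between \(A^-\) and \(B^-e\) in the wild category \(\U^{\widetilde I}_J\) of families of partial maps, where \(\widetilde I = \sum_{i \co I^+} A^+ i\) and \(J(i,x) = 1 + I^-(i)\). It is not the type of isomorphisms in \(\U^{\widetilde I}\), so \(\FamCUA_{\U}\) of \(\cat{C}\) does not apply to it as it stands.

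The missing idea is an equivalence \((A^- \catIso[\U^{\widetilde{I}}] B^-e) \simeq (A^- \catIso[\U^{\widetilde{I}}_J] B^-e)\), proved without \(\FE\). This takes real work. You need three facts, so that the inclusion of morphisms has propositional fibers inhabited over isomorphisms:
\begin{enumerate}
\item every isomorphism of partial maps is total (if \(f \cc g\) is total then so is \(g\), identities are total, and totality transports along paths);
\item being total is a proposition;
\item total partial families are equivalent to ordinary families.
\end{enumerate}
The third point cannot be done pointwise up to homotopy, since \(\Pi\) does not preserve equivalences without \(\FE\). It needs a \emph{strict} isomorphism \(Y(k) \sequiv \sum_{u \co J(k) + Y(k)} \isInc{1}(u)\), which uses large elimination and the strict \(\eta\) rule for coproducts. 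The same rules make \(\isInc{1}\) a strict proposition and make \(\U^{\widetilde{I}}_J\) a wild category with reflexive unitors and associators. So your claim that extensive coproducts play no role beyond building the model is false: they are exactly what reduces the backward factor to \(\FamCUA_{\U}\) in \(\cat{C}\).

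A smaller point: contractibility in \(\Poly{\cat{C}}\) is not literally contractibility of shapes in \(\cat{C}\), because a center carries a backward component. The clean reduction is that \(\sum_B (A \catIso B)\) is inhabited by the identity in \(\Poly{\cat{C}}\), and that a type of \(\Poly{\cat{C}}\) is a proposition iff its shape is, because identity types have empty positions.
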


Familial categorical univalence arises naturally in the construction: just to show \(\Poly{\cat{C}} \vDash \CUA_{\U}\), we already require \(\cat{C} \vDash \FamCUA_{\U}\).
Function extensionality always fails in polynomial models \cite[\S4.5]{Glehn2015}, so it remains to provide a suitable input model.
Off-the-shelf cubical and simplicial models of homotopy type theory will do, as Moss and Von Glehn have already observed \cite[\S6]{MossGlehn2018}.
We conclude:

\begin{theorem}[{\ref{prop:famcua-not-implies-fe}}]
  \(\ITT + \FamCUA_{\U} \nvdash \FE_{\U}\).
\end{theorem}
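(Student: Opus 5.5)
The plan is to exhibit a single model of \(\ITT\) with a universe \(\U\) that validates \(\FamCUA_{\U}\) but refutes \(\FE_{\U}\). By soundness, this shows that \(\FE_{\U}\) is not derivable from \(\ITT + \FamCUA_{\U}\). The model will be the polynomial model \(\Poly{\cat{C}}\) built from a suitable base model \(\cat{C}\). The preceding theorem (\ref{prop:poly-inherits-famcua}) supplies \(\FamCUA_{\U}\) in \(\Poly{\cat{C}}\). Von Glehn's analysis \cite[\S4.5]{Glehn2015} supplies the failure of function extensionality. So the argument comes down to choosing \(\cat{C}\) and checking the hypotheses on it.

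For \(\cat{C}\) I would take a standard model of homotopy type theory with a univalent universe. One option is the simplicial set model over a Grothendieck universe; another is a cubical set model such as the CCHM model, which is constructive. Following \cite[\S6]{MossGlehn2018}, I then check three things in order.
\begin{enumerate}
\item \(\cat{C}\) is a model of \(\ITT\) with strict \(\eta\) for unit and \(\Pi\) types. In both models \(\Pi\)-types are interpreted by genuine exponentials and the unit type by the terminal object, so \(\eta\) holds judgmentally.
\item \(\cat{C}\) has extensive finite coproducts of types. Disjoint unions of Kan (or fibrant cubical) fibrations are again fibrations. Extensivity is inherited from the presheaf topos, where coproducts are disjoint and stable under pullback.
\item \(\cat{C} \vDash \FamCUA_{\U}\). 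Since \(\cat{C} \vDash \UA_{\U}\), Voevodsky's theorem gives \(\FE_{\U}\). Under \(\FE_{\U}\), homotopies between families of maps \(\prod_{i\co I} A(i)\to B(i)\) are equivalent to identifications, so categorical isomorphisms in \(\U^I\) coincide with families of equivalences. Identifications of families \(A = B\) in \(I\to\U\) correspond by \(\FE_{\U}\) to pointwise identifications, and hence by \(\UA_{\U}\) to families of equivalences. It remains to check that the composite of these correspondences is homotopic to the canonical map \(\idToCEq\). By path induction this holds because reflexivity goes to the identity.
\end{enumerate}
In step 3 one has to note that \(\FE_{\U}\) is needed for the function type \(I\to\U\), which lies in a larger universe rather than in \(\U\) itself. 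In both models this is fine: there the universe is itself univalent in a larger univalent universe, or one can simply cite \(\FE\) for all types, which both models satisfy.

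Applying the preceding theorem then gives \(\Poly{\cat{C}} \vDash \FamCUA_{\U}\). For the failure of \(\FE_{\U}\) I would cite \cite[\S4.5]{Glehn2015}. The counterexample there concerns functions out of the empty type, or between small types. If the cited refutation only addresses \(\FE\) for arbitrary types, I would adapt it by taking the same counterexample with types that are codes in the polynomial universe. This works because the polynomial universe contains the empty and unit types, and the refutation only uses small types.

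The main obstacle is making sure the hypotheses of the preceding theorem match what the off-the-shelf models provide. In particular, one must confirm that the universe in \(\cat{C}\) is closed under the coproducts used by the polynomial construction, and that strict \(\eta\) holds in the specific presentation of the model. For the cubical model I would rely on \cite[\S6]{MossGlehn2018}, where these verifications are already carried out.
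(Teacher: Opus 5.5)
Your proposal is correct and is essentially the paper's proof: take a cubical (or simplicial) model \(\cat{C}\) of \(\ITT + \FE + \UA_{\U}\) with extensive coproducts satisfying strict \(\eta\), derive \(\FamCUA_{\U}\) there from \(\FE + \UA_{\U}\), transfer it to \(\Poly{\cat{C}}\) via \Cref{prop:poly-inherits-famcua}, and cite Von Glehn for \(\lnot\FE_{\U}\). One small correction concerns how you describe that counterexample: functions out of the empty type are strictly unique in \(\Poly{\cat{C}}\) (by \(\eta\) for \(0\)), so extensionality holds for them; the actual witnesses are the two functions \(\TopGun{1+1} \to \TopGun{1}\) of \Cref{prop:evil-functions}, and these lie in \(\U\) not because the universe contains \(0\) and \(1\), but because it codes every type whose shapes and positions are small (\Cref{prop:polynomial-model:universe}).
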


Part of the appeal of weak foundations is that they allow us to tease apart the components of mathematics.
Each type former of Martin-L\"of's type theory has a distinct, well-defined purpose.
Univalence fits in uneasily in this picture.
While it has beautiful consequences, it also has \emph{many} consequences, and---like impredicativity or the law of the excluded middle---it may be hiding finer structure beneath its surface.

By scratching at that surface, we hope to understand what makes univalence tick.
The polynomial model offers some motivation and a testing ground for weaker forms of the axiom.
We are left with more questions than answers; unlike in the case with \(\FE\), where superficial variations on univalence usually turn out to be equivalent, here we find subtly distinct axioms with no canonical choice among them.
Still, we hope our results can provoke further reflection on the foundations of homotopy type theory.

\subsection{Outline}

In \Cref{sec:decomposing-univalence}, we recall some basic definitions, then observe that \(\FE_{\U}\) holds if and only if the canonical map \(\CEqToEq \co (A \CEq B) \to (A \simeq B)\) is an equivalence for all \(A,B \co \U\), meaning that univalence quite literally factors into function extensionality and categorical univalence.
In \Cref{sec:von-glehn-polynomial} we recall Von Glehn's polynomial model construction.
The main technical contribution is \Cref{sec:categorical-univalence}, where we show that the polynomial model \(\Poly{\cat{C}}\) inherits \(\FamCUA_{\U}\) from the base model \(\cat{C}\).
In \Cref{sec:polynomial-refutation} we apply this result to a univalent base model to conclude that \(\ITT + \FamCUA_{\U} \nvdash \FE_{\U}\).
We discuss and compare other possible weakenings of \(\UA_{\U}\) in \Cref{sec:other-univalences}, and finish with a review of related work in \Cref{sec:related-work}.

\section{Decomposing univalence}\label{sec:decomposing-univalence}

Our basic theory \(\ITT\) is Martin-L\"of type theory with \(\Sigma\) types, \(\Pi\) types, intensional identity types, binary coproduct types, empty and unit types, and one universe \(\U\) closed under all of these type formers.\footnote{There is no issue extending our results to multiple universes, but we only need one.}
We use the term \emph{strict equality} and symbol \(\seq\) for equality on the judgmental level; we use \(=\) for identity types.
We use \(\sequiv\) for strict isomorphisms: two functions in opposite directions composing strictly to the identity.
Note that \(A \sequiv B\) is not a type, and \(e \co A \sequiv B\) is merely a shorthand for a meta-level assumption.
Besides strict \(\beta\) rules for all type formers, we include strict \(\eta\) rules for \(\Sigma\) types, \(\Pi\) types, and the unit type.

For basic results, we cite Rijke's book~\cite{Rijke2025}, which does not introduce \(\FE\) until Chapter 13; we only use results from earlier chapters.
Crucially, we have basic facts about contractible types and that \(\Sigma\) types respect equivalences in both arguments.
Note that the analogous fact does not hold for \(\Pi\) types absent \(\FE\).
In contrast to Rijke \cite{Rijke2025}, we assume the strict \(\eta\) rule for \(\Sigma\) types, not only for \(\Pi\) types.
This means, for example, that the equivalence witnessing the distributivity of \(\Pi\) types over \(\Sigma\) types is a strict isomorphism.

\subsection{Univalent wild categories}

The universe of \(\ITT\) has the structure of an \((\infty,1)\)-category, with type of objects \(\U_0 \coloneqq \U\) and type of morphisms \(\U_1(A,B) \coloneqq (A \to B)\).
The first layer of such an \((\infty,1)\)-categorical structure is captured by the Capriotti and Kraus' notion of \emph{wild category} \cite[Definition 4.1]{CapriottiKraus2017}.

\begin{definition}
  A \emph{wild category}\footnote{Capriotti and Kraus call this a \emph{wild precategory}.} \(\cat{C}\) is a type \(\cat{C}_0\) and family of types \(x,y \co \cat{C}_0 \vdash \cat{C}_1(x,y)\) equipped with
  \begin{enumerate}
  \item composites \(g \cc f \co \cat{C}_1(x,z)\) for all \(g \co \cat{C}_1(y,z)\), \(f \co \cat{C}_1(x,y)\),
  \item identities \(\id_{x} \co \cat{C}_1(x,x)\) for all \(x \co \cat{C}_0\),
  \item associators \(\alpha_{h,g,f} \co h \cc (g \cc f) = (h \cc g) \cc f\) for all \(h \co \cat{C}_1(z,w)\), \(g \co \cat{C}_1(y,z)\), \(f \co \cat{C}_1(x,y)\), and
  \item unitors \(\lambda_f \co \id_y \cc f = f\) and \(\rho_f \co f \cc \id_x = f\) for all \(f \co \cat{C}_1(x,y)\).
  \end{enumerate}
  If clear from context, we omit the subscripts when referring to the type of objects or family of morphisms.
  We write \(x \to y\) for \(\cat{C}_1(x,y)\) when \(\cat{C}\) is clear, and we sometimes write \(gf\) for \(g \cc f\).
\end{definition}

\begin{example}\label[example]{ex:our-wild-categories}
  As noted above, the universe \(\U\) has a wild category structure with \(\U(A,B) \coloneqq (A \to B)\), composition and identities given by the usual composition of functions and identity functions, and reflexive equalities for the associators and unitors.
  More generally, for every type \(I\) there is a wild category \(\U^I\) whose objects are families \(A \co I \to \U\) and whose morphisms are indexed functions, \(\U^I(A,B) \coloneqq \prod_{i\co I} A(i) \to B(i)\).

  These wild categories are really strictly coherent \((\infty,1)\)-categories: the associators and unitors are strict equalities and satisfy all higher coherence laws (e.g., the pentagon) up to strict equality.
  All of the concrete wild categories we encounter in this article are of this kind.
\end{example}

Importantly, wild-categorical structure suffices to define \emph{isomorphism}.

\begin{definition}
  Given \(s \co x \to y\) and \(r \co y \to x\) in a wild category \(\cat{C}\), we say that \(r\) is a \emph{retraction of \(s\)} and \(s\) is a \emph{section of \(r\)} if \(rs = \id_x\).
  For a morphism \(f\), we write \(\Sec(f)\) and \(\Ret(f)\) for the types of sections and retractions of \(f\) respectively.
  We say \(f\) is a \emph{\(\cat{C}\)-isomorphism} if we have an element of the type \(\isCatIso[\cat{C}](f) \coloneqq \Sec(f) \times \Ret(f)\) and write \(x \catIso[\cat{C}] y\) for the type of isomorphisms between two objects \(x, y \co \cat{C}\).
\end{definition}

\begin{remark}
  The isomorphisms in the wild category \(\U\) are exactly the categorical equivalences introduced in \Cref{sec:introduction}.
  To avoid confusion, we refer to a pair of functions \(s \co A \to B\) and \(r \co B \to A\) between types such that \(r s \sim \id_A\) as a \emph{homotopy section} and \emph{homotopy retraction} respectively.
  The term \emph{isomorphism} is sometimes used in the literature to refer to what the HoTT Book~\cite{HoTTBook2013} calls \emph{quasi-inverses}, that is, maps \(f \co A \to B\) and \(g \co B \to A\) with homotopies \(gf \sim \id_A\) and \(fg \sim \id_B\).
  We never use the term in this way.
\end{remark}

\begin{remark}
  Categorical equivalences are related to the point-free propositions of Donk\'{o} and Kaposi~\cite{DonkoKaposi2022}.
  A type \(A\) is a point-free proposition if the first and second projection \(A \times A \to A\) are equal as functions.
  This is equivalent to one of the projections \(A \times A \to A\) or the diagonal \(A \to A \times A\) being a categorical equivalence.
  Their dependent point-free propositions can be similarly characterized in \(\U^I\).
\end{remark}

The following holds by a Yoneda style argument.

\begin{lemma}\label[lemma]{prop:cat-isomorphism-characterization}
  For a morphism \(f \co x \to y\) in a wild category \(\cat{C}\), the following are logically equivalent:
  \begin{enumerate}
    \item \(f\) is an isomorphism,
    \item \(f^* \co \cat{C}(x, z) \to \cat{C}(y, z)\) is an equivalence for all \(z \co \cat{C}\),
    \item \(f_* \co \cat{C}(z, y) \to \cat{C}(z, x)\) is an equivalence for all \(z \co \cat{C}\).
  \end{enumerate}
\end{lemma}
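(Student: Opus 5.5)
The plan is a direct Yoneda-style argument using only the wild-category data (composites, identities, associators $\alpha$, unitors $\lambda,\rho$) and basic facts about equivalences from the early chapters of Rijke~\cite{Rijke2025}; no function extensionality is needed. I read $f^*$ as precomposition $g \mapsto g \cc f$ and $f_*$ as postcomposition $g \mapsto f \cc g$. The hom-types in the statement are written in the order opposite to these maps. The argument does not depend on that order, since every equivalence has an inverse equivalence.

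For (1)$\Rightarrow$(2), I fix $s \co \Sec(f)$ with $p \co f s = \id_y$ and $r \co \Ret(f)$ with $q \co r f = \id_x$. I show precomposition with $f$ is bi-invertible by giving explicit one-sided inverses.
\begin{itemize}
\item Precomposition with $r$ is a homotopy section of $f^*$. For $h \co x \to z$, the identification $(h r) f = h$ is the composite of $\alpha^{-1}$, then $\mathsf{ap}_{h \cc -}(q)$, then $\rho_h$.
\item Precomposition with $s$ is a homotopy retraction of $f^*$. For $g \co y \to z$, the identification $(g f) s = g$ is built from $\alpha^{-1}$, $\mathsf{ap}(p)$ and $\rho_g$ in the same way.
\end{itemize}
These are pointwise identifications in the hom-types, which is exactly what a homotopy between maps of types requires. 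So $f^*$ is an equivalence in the sense of Rijke. The case (1)$\Rightarrow$(3) is dual: postcomposition with $s$ and with $r$ serve as the two one-sided inverses of $f_*$, using $\lambda$ in place of $\rho$.

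For (2)$\Rightarrow$(1), I take $z \coloneqq x$. The homotopy section of the equivalence $f^*$ gives some $r$ with $r f = \id_x$, so $r \co \Ret(f)$. I then show that $r$ is also a section of $f$. Taking $z \coloneqq y$, I compute $f^*(f r) = (f r) f = f (r f) = f \id_x = f = \id_y f = f^*(\id_y)$, using $\alpha$, $q$, $\rho$ and $\lambda$. Since $f^*$ is an equivalence, it is an embedding: $\mathsf{ap}_{f^*}$ is an equivalence on identity types, which is proved in Rijke before function extensionality. Hence $f r = \id_y$, so $r \co \Sec(f)$ and $f$ is an isomorphism. The case (3)$\Rightarrow$(1) is dual. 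Taking $z \coloneqq y$ gives $s$ with $f s = \id_y$. Then $f_*(s f) = f (s f) = (f s) f = f = f_*(\id_x)$, and injectivity of $f_*$ on identifications gives $s f = \id_x$.

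The only step needing care is this last use of injectivity. It relies on the fact that equivalences are embeddings, and I expect it to be the main obstacle. That fact is stated for bi-invertible maps of types, not for anything involving function types, so it is available without $\FE$. I will cite it rather than reprove it. Everything else is bookkeeping of associators and unitors, which are reflexivities in the strictly coherent examples of \Cref{ex:our-wild-categories}.
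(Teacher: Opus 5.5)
Your proof is correct. It is the Yoneda-style argument the paper invokes without spelling out; reading the reversed hom-types in the statement as a typo is the right call, and you only ever need the instances $z \coloneqq x$ and $z \coloneqq y$. The step you flag as the main obstacle is actually immediate: if $\ell$ is a homotopy retraction of $f^*$, then $fr = \ell(f^*(fr)) = \ell(f^*(\id_y)) = \id_y$, so you do not need the theorem that equivalences are embeddings.
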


\begin{lemma}\label[lemma]{prop:sections-of-equivalence-are-unqiue}
  Given an isomorphism \(f \co x \to y\) in a wild category, the type \(\Sec(f)\) is contractible.
\end{lemma}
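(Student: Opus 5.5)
The plan is to identify \(\Sec(f)\) with a fiber of an equivalence, and then use the standard fact that fibers of equivalences are contractible (Rijke, before Chapter 13). By definition,
\[
  \Sec(f) \seq \textstyle\sum_{s \co y \to x} (f \cc s = \id_y),
\]
which is literally the fiber of the post-composition map \(f_* \co \cat{C}(y,x) \to \cat{C}(y,y)\), \(s \mapsto f \cc s\), over the point \(\id_y\). (Here \(f_*\) is post-composition with \(f\). The statement of \Cref{prop:cat-isomorphism-characterization} writes the direction of \(f_*\) as \(\cat{C}(z,y)\to\cat{C}(z,x)\), but the intended map is clearly \(\cat{C}(z,x) \to \cat{C}(z,y)\), \(g \mapsto f\cc g\); I read the lemma this way.)

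First, I apply \Cref{prop:cat-isomorphism-characterization} (1)\(\Rightarrow\)(3) to the isomorphism \(f\), taking \(z \coloneqq y\). This gives that \(f_* \co \cat{C}(y,x) \to \cat{C}(y,y)\) is an equivalence. Second, since an equivalence (bi-invertible map) is contractible, i.e.\ all of its fibers are contractible, the fiber \(\fib_{f_*}(\id_y)\) is contractible (Rijke, Chapter 10). Third, that fiber coincides strictly with \(\Sec(f)\), by the definition of fiber, \(\fib_{g}(b) \coloneqq \sum_a g(a) = b\). So \(\Sec(f)\) is contractible.

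The only real content is therefore the lemma itself, which we may assume. To be safe, I would also check directly the direction (1)\(\Rightarrow\)(3) used above. Given a section \(s\) and a retraction \(r\) of \(f\), the map \(r_*\) is a retraction of \(f_*\) up to homotopy: \(r(fg) = (rf)g = \id\, g = g\), using the associator, the path \(rf = \id_x\) transported under \(-\cc g\) via \(\mathsf{ap}\), and the unitor. Symmetrically, \(s_*\) is a homotopy section of \(f_*\). Hence \(f_*\) is bi-invertible.

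This direct check is the one place where care is needed. No function extensionality is required, since we only use \(\mathsf{ap}\) on identity types of morphisms and the wild-categorical coherence paths.
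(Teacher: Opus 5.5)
Your proof is correct, but it is organized differently from the paper's. The paper's proof takes the retraction \(f^{-1}\) and computes \(\Sec(f) \simeq \sum_{g \co y \to x} \paren{f^{-1}(fg) = f^{-1}\id_y} \simeq \sum_{g \co y \to x} \paren{g = f^{-1}}\). The first step uses that post-composition with \(f^{-1}\) is an equivalence of hom-types (via \Cref{prop:cat-isomorphism-characterization}), so its action on paths is an equivalence. The second step composes with the paths given by the associator, \(f^{-1}f = \id_x\), and the unitors. The paper then concludes because based path spaces are contractible.

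You instead observe that \(\Sec(f)\) is by definition the fiber of \(f_* \co \cat{C}(y,x) \to \cat{C}(y,y)\) over \(\id_y\). You then invoke that equivalences have contractible fibers, the same \cite[Theorem 10.4.6]{Rijke2025} the paper uses in the proof of \Cref{prop:funext-characterizations}. Your route is shorter and applies the Yoneda-style lemma only to \(f\) itself. The paper tacitly also needs post-composition with \(f^{-1}\) to be an equivalence. With your route, the dual statement for \(\Ret(f)\) also follows verbatim using \(f^*\). In exchange, the paper's route names the center of contraction explicitly: every section is equal to the given retraction \(f^{-1}\). It also needs only that an equivalence induces equivalences on path types, rather than contractibility of its fibers.

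You are right that the directions of \(f^*\) and \(f_*\) are swapped in the statement of \Cref{prop:cat-isomorphism-characterization}, and your reading is the intended one. Your direct check of (1)\(\Rightarrow\)(3), via the associator, the action on paths of \(- \cc g\) applied to \(rf = \id_x\), and the unitor, is fine.
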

\begin{proof}
  Denote by \(f^{-1}\) the retraction of \(f\).
  By \Cref{prop:cat-isomorphism-characterization} and \cite[Exercise~9.1]{Rijke2025} we have the equivalences \( \paren{\sum_{g \co y  \to x} fg = \id_y} \simeq \paren{\sum_{g \co y \to x} f^{-1}(fg) = f^{-1}\id} \simeq \paren{\sum_{g \co y \to x} g = f^{-1}} \).
  In the last step we use that composition with a path is an equivalence.
  The last type is contractible.
\end{proof}
\begin{corollary}\label[corollary]{prop:being-isomorphism-is-proposition}
  For a morphism \(f\) in a wild category, \(\isCatIso(f)\) is a proposition.
\end{corollary}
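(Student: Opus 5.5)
We show that \(\isCatIso(f) \coloneqq \Sec(f) \times \Ret(f)\) is contractible as soon as it is inhabited. This suffices, since a type whose inhabitation implies its contractibility is a proposition (\cite[Ch.~12]{Rijke2025} characterizes propositions this way, using only material before \(\FE\)). So we assume an element \((s, r) \co \Sec(f) \times \Ret(f)\). In particular \(f\) is an isomorphism. We then prove that each factor is contractible; a product of contractible types is contractible.

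The factor \(\Sec(f)\) is contractible directly by \Cref{prop:sections-of-equivalence-are-unqiue}. For \(\Ret(f)\) we argue by duality. We can either pass to the opposite wild category \(\cat{C}^{\mathrm{op}}\), in which the types of retractions and sections are exchanged and \(f\) is still an isomorphism, and apply \Cref{prop:sections-of-equivalence-are-unqiue} there. Alternatively, we repeat its argument on the other side, starting from the chain
\[
  \Ret(f) \seq \sum_{g \co y \to x} g f = \id_x \simeq \sum_{g \co y \to x} (gf)s = \id_x s \simeq \sum_{g \co y \to x} g = s .
\]
The first equivalence uses that precomposition \(s^*\) is an equivalence. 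By \Cref{prop:cat-isomorphism-characterization} this holds because \(s\) is itself an isomorphism: its retraction is \(f\), and \(r\) is a section of \(s\) up to the usual identification. The second equivalence uses the associator, the unitors and \(fs = \id_y\). The last type is a singleton and hence contractible.

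The main obstacle is bookkeeping. For the opposite-category route, we need to check that the opposite of a wild category is again a wild category, by inverting associators and swapping unitors, and that its isomorphisms correspond to those of \(\cat{C}\). For the direct route, we need to check that composing with the associator and unitor paths gives equivalences of the relevant path types. Both are routine, since composition with a fixed path is an equivalence and \(\Sigma\) types respect fiberwise equivalences. We never use \(\FE\), because every step is an equivalence of \(\Sigma\) types over a fixed base rather than a statement about \(\Pi\) types.
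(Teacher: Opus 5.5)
Your proof is correct and is the paper's proof: assume an inhabitant, so that \(f\) is an isomorphism, then get \(\Sec(f)\) contractible from \Cref{prop:sections-of-equivalence-are-unqiue} and \(\Ret(f)\) contractible from its dual, and conclude with \cite[Proposition 12.1.3]{Rijke2025}. One small slip in your direct route: \(r \co y \to x\) cannot be a section of \(s \co y \to x\); the section of \(s\) is \(f\), because \(r = r(fs) = (rf)s = s\) gives \(sf = rf = \id_x\), so your conclusion that \(s\) is an isomorphism still holds.
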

\begin{proof}
  We show that \(\isCatIso(f)\) is contractible, assuming that \(f\) is an isomorphism~\cite[Proposition 12.1.3]{Rijke2025}.
  By \Cref{prop:sections-of-equivalence-are-unqiue} and its dual, both \(\Sec(f)\) and \(\Ret(f)\) are contractible. 
\end{proof}

\begin{lemma}\label[lemma]{prop:isomorphisms-satisfy-2-out-of-3}
  Isomorphisms in a wild category satisfy \(2\)-out-of-\(3\).
\end{lemma}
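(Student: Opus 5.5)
The plan is to route everything through \Cref{prop:cat-isomorphism-characterization}, which turns the wild-categorical statement into 2-out-of-3 for equivalences of types. We know equivalences of types satisfy 2-out-of-3 without function extensionality (this is in the early chapters of Rijke's book). Concretely, let \(f \co x \to y\) and \(g \co y \to z\). We have to show that if two of \(f\), \(g\), \(g \cc f\) are isomorphisms, then so is the third.

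The key step is to relate precomposition with \(g \cc f\) to the composite of precompositions. For each object \(w\) the map \((g\cc f)^* \co \cat{C}(z,w) \to \cat{C}(x,w)\) sends \(h\) to \(h \cc (g \cc f)\). The composite \(f^* \circ g^*\) sends \(h\) to \((h \cc g) \cc f\). The associator \(\alpha_{h,g,f}\) supplies a homotopy \((g \cc f)^* \sim f^* \circ g^*\) between these two maps of types. Here ``homotopy'' means a pointwise family of identifications, which is exactly what the associator provides; no function extensionality is needed. Being an equivalence (bi-invertibility) is invariant under homotopy of maps, again without \(\FE\).

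We then apply 2-out-of-3 for equivalences to the triangle \(g^*, f^*, f^*\circ g^*\), for every \(w\). This shows that if two of \(f^*\), \(g^*\), \((g\cc f)^*\) are equivalences for all \(w\), so is the third. By \Cref{prop:cat-isomorphism-characterization} (i)\(\Leftrightarrow\)(ii), translating back gives the claim for \(f\), \(g\) and \(g\cc f\). Since the lemma only asserts a logical statement, the mere logical equivalence provided by \Cref{prop:cat-isomorphism-characterization} suffices.

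I expect the main obstacle to be checking that 2-out-of-3 for homotopy equivalences really is available without \(\FE\). The direction ``\(f^*\) and \(f^*\circ g^*\) equivalences \(\Rightarrow\) \(g^*\) equivalence'' needs inverses constructed from homotopies. Working directly with bi-invertible maps, or passing through quasi-inverses, only uses pointwise reasoning, so it should be fine.

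As a fallback, I would prove the three cases directly by explicit algebra with sections and retractions. For example, if \(f\) and \(g\cc f\) are isomorphisms, then \(f \cc (g\cc f)^{-1}\) is a section-candidate for \(g\), with witnessing paths assembled from associators and unitors via \(\mathsf{ap}\).
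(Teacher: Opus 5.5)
Your proposal is correct and is essentially the paper's own proof. The paper also uses the associator to get a homotopy \((g \cc f)^* \sim f^* \circ g^*\), transfers the equivalence structure across that homotopy, and concludes via 2-out-of-3 for equivalences (Rijke's Exercise 9.4, which precedes any use of \(\FE\)) together with \Cref{prop:cat-isomorphism-characterization}.
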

\begin{proof}
  By associativity we have \((fg)^* \sim f^* g^*\).
  The structure of an equivalence transfers across homotopies.
  Hence, the claim follows from \(2\)-out-of-\(3\) for equivalences~\cite[Exercise~9.4]{Rijke2025}.
\end{proof}

For every object \(x \co \cat{C}\) in a wild category, the identity \(\id_x \co x \to x\) is an isomorphism.
By path induction, we may generalize to a map \(\idToCatIso \co x =_{\cat{C}} y \to x \catIso[\cat{C}] y\) for \(x,y \co \cat{C}\).
We define \cite[Definition 4.16]{CapriottiKraus2017}:

\begin{definition}
  A wild category \(\cat{C}\) is \emph{univalent} if \(\idToCatIso \co x =_{\cat{C}} y \to x \catIso[\cat{C}] y\) is an equivalence for \(x,y \co \cat{C}\).
\end{definition}
\begin{lemma}\label[lemma]{prop:contractability-characterization-of-univalence}
  A wild category \(\cat{C}\) is univalent exactly if \(\sum_{y \co \cat{C}} x \catIso[\cat{C}] y\) is contractible for all \(x\).
\end{lemma}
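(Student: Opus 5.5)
This is the fundamental theorem of identity types, applied fibrewise to the family \(y \mapsto x \catIso[\cat{C}] y\). It needs no function extensionality, since it concerns only \(\Sigma\) types and identity types, and \cite[Chapter~11]{Rijke2025} develops it before function extensionality is introduced.

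Fix \(x \co \cat{C}\) and consider the family of maps
\[
  \idToCatIso_{y} \co (x =_{\cat{C}} y) \to (x \catIso[\cat{C}] y), \qquad y \co \cat{C}.
\]
This is the fibrewise map defined by path induction, sending \(\refl_x\) to \(\id_x\) together with its isomorphism structure. It therefore induces a map on total spaces
\[
  \textstyle \sum_{y \co \cat{C}} (x = y) \longrightarrow \sum_{y \co \cat{C}} (x \catIso[\cat{C}] y).
\]

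The argument then runs in three steps.

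\begin{enumerate}
  \item A family of maps is a fibrewise equivalence if and only if the induced map on total spaces is an equivalence \cite[Theorem~11.1.3]{Rijke2025}. So \(\cat{C}\) is univalent exactly when the total map above is an equivalence for every \(x\).
  \item The domain \(\sum_{y} (x = y)\) is contractible, since it is the type of based paths.
  \item A map out of a contractible type is an equivalence if and only if its codomain is contractible. This is two-out-of-three for equivalences, applied to the maps into the unit type.
\end{enumerate}

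Combining these gives the claim: \(\cat{C}\) is univalent if and only if \(\sum_{y \co \cat{C}} x \catIso[\cat{C}] y\) is contractible for all \(x\).

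For the converse direction one can also argue directly via the fundamental theorem of identity types \cite[Theorem~11.2.2]{Rijke2025}. Suppose the total space is contractible and we are given a family of maps \((x = y) \to (x \catIso y)\). Any such family is then a fibrewise equivalence, so in particular \(\idToCatIso\) is one.

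No step is a real obstacle. The only thing to check is that none of the cited results uses function extensionality, and this holds because they are stated before Chapter~13 of \cite{Rijke2025}. Nor does the argument depend on the specific wild-category structure, for instance on \Cref{prop:being-isomorphism-is-proposition}. It uses only the fact that the family \(x \catIso[\cat{C}] y\) is equipped with the map \(\idToCatIso\).
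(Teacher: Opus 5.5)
Your proposal is correct and matches the paper's proof, which simply cites the fundamental theorem of identity types \cite[Theorem~11.2.2]{Rijke2025}. Your three steps (fibrewise equivalences versus equivalences of total spaces, contractibility of the based path space, and two-out-of-three with the map to the unit type) just unfold that theorem's proof, and you rightly note that none of it requires function extensionality.
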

\begin{proof}
  By the fundamental theorem of identity types~\cite[Theorem 11.2.2]{Rijke2025}.
\end{proof}

Univalence of a universe \(\U\) (\(\UA_{\U}\), \cref{defn:univalence}) cannot be formulated on the level of an arbitrary wild category, as it refers to homotopy of functions.
As \(\UA_{\U}\) implies \(\FE_{\U}\), however, it also implies that \(\U\) is a univalent wild category.
Absent \(\FE_{\U}\), the converse may fail: as we will see, \(\U\) can be a univalent wild category without \(\idToEquiv\) being an equivalence.
We can consider ordinary univalence as the conjunction of two equivalences: one between \(A =_{\U} B\) and \(A \cong_{\U} B\) and one between \(A \cong_{\U} B\) and \(A \simeq B\).

\subsection{Categorical equivalences and function extensionality}
\label{sec:decomposing-univalence:funext}

In comparing \(A \cong_{\U} B\) and \(A \simeq B\), it is natural to forget about universes entirely.
Recall from \Cref{sec:introduction} that a function \(f \co A \to B\) between (possibly large) types is a \emph{categorical equivalence} if it admits a section and retraction, that is, \(s,r \co B \to A\) with \(fs = \id\) and \(rf = \id\).
We write \(\isCEq(f)\) for the type of witnesses that \(f\) is a categorical equivalence, and \(A \CEq B\) for the type of categorical equivalences from \(A\) to \(B\).

In \(\ITT\), by canonicity, the only closed categorical equivalences are the strict isomorphisms, such as \(A \times B \sequiv B \times A\).
With \(\CUA_{\U}\) (\Cref{defn:categorical-univalence}), there are more; for example, any \(e \co A \cong B\) in \(\U\) yields \((a =_A a') \cong (ea =_B ea')\) for \(a,a' \co A\).
The map \(\CEqToEq \co (A \catIso B) \to (A \simeq B)\), which converts equalities in function types to homotopies, becomes an equivalence under \(\FE\).
In fact, it is an equivalence \emph{only} if \(\FE\) holds.

\begin{definition}\label[definition]{defn:equivalence-improvement}
  \emph{Equivalence improvement} (\(\EI\)) is the principle that for all types \(A,B\), the map \(\CEqToEq \co (A \catIso B) \to (A \simeq B)\) is an equivalence.
\end{definition}

We recall a lemma familiar from proofs that \(\UA_{\U}\) implies \(\FE_{\U}\) \cite[Theorem 4.9.4]{HoTTBook2013} \cite[Theorem 17.3.2]{Rijke2025}:

\begin{lemma}\label[lemma]{prop:dependent-product-is-equivalent-to-fiber}
  For any family of types \(a \co A \vdash B(a)\), the type \(\prod_{a \co A} B(a)\) is equivalent to the fiber of \(\pi_* \co \left(\sum_{a \co A} B(a)\right)^A \to A^A\) at \(\id_A \co A^A\), where \(\pi_*\) is post-composition with the first projection.
  Equivalently, the following strictly commutative square is a homotopy pullback:
  \[
    \begin{tikzcd}
      {\displaystyle\prod_{a \co A} B(a)}
      \ar[d]
      \ar[r, "\pair{\id_A,-}"]
      \ar[phantom,start anchor=center,to path={-- ++({0-45}:8ex) \tikztonodes}, "\lrcorner"]
    &
      {\displaystyle\paren[\bigg]{\sum_{a \co A} B(a)}^A}
      \ar[d, "\pi_*"]
    \\
      {1}
      \ar[r, "\id_A"']
    &
      {A^A\rlap{.}}
    \end{tikzcd}
  \]
\end{lemma}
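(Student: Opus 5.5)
We construct a strict isomorphism between \(\prod_{a \co A} B(a)\) and the fiber of \(\pi_*\) over \(\id_A\) up to a contractible factor, without using \(\FE\). The fiber is
\[
  \mathrm{fib}_{\pi_*}(\id_A) \coloneqq \sum_{h \co A \to \sum_{a\co A} B(a)} \pi_* h = \id_A .
\]
Its first component is rewritten using the strict distributivity of \(\Pi\) over \(\Sigma\) mentioned in \Cref{sec:decomposing-univalence}. This gives a strict isomorphism \(\paren{A \to \sum_{a\co A} B(a)} \sequiv \sum_{f \co A \to A} \prod_{x\co A} B(f x)\), sending \(h\) to \(\pair{\pi_1 \circ h, \pi_2 \circ h}\). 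Under it, \(\pi_* h\) is strictly \(\pi_1\circ h\), which is the first component \(f\).

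Reassociating the \(\Sigma\) types (again a strict isomorphism, by the \(\eta\) rule for \(\Sigma\)), we get
\[
  \mathrm{fib}_{\pi_*}(\id_A) \sequiv \sum_{\pair{f,p} \co \sum_{f \co A\to A} f = \id_A} \prod_{x\co A} B(fx).
\]
The type \(\sum_{f \co A \to A} f = \id_A\) is contractible, since it is a based path space, with centre \(\pair{\id_A,\refl}\). A \(\Sigma\) type over a contractible base is equivalent to the fiber at the centre (\cite[Lemma 11.1.?]{Rijke2025}-style, using only the basic facts about contractible types from the early chapters). This fiber is \(\prod_{x \co A} B(\id_A x) \seq \prod_{x\co A} B(x)\). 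No function extensionality is involved: the only path used is \(p \co f = \id_A\), an identification in the function type itself, never a pointwise homotopy.

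Composing these equivalences gives \(\prod_{a \co A} B(a) \simeq \mathrm{fib}_{\pi_*}(\id_A)\). We then check that this composite agrees, up to homotopy, with the canonical gap map \(g \mapsto \pair{\pair{\id_A, g}, \refl}\) induced by the square, where \(\pair{\id_A, g}\) abbreviates \(\lambda x.\pair{x, g x}\). The inverse of the last step sends \(g\) to \(\pair{\pair{\id_A,\refl}, g}\). Transporting this back through the strict isomorphisms gives exactly \(\pair{\lambda x.\pair{x,gx}, \refl}\), so the agreement holds on the nose. A square whose gap map into the fiber is an equivalence is a homotopy pullback, which gives the second formulation.

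The only step needing care is the contraction of the base. We must use the version of the lemma in which the equivalence is the map into the total space at the centre, namely \(\pair{\pair{\id_A,\refl},-}\), rather than an abstract transport, so that the gap map is identified without any homotopy between functions.
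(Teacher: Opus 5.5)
Your proof is correct. It reaches the result by a slightly different route from the paper's.

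The paper proves the more general fiberwise statement \(\fiber{\pi_*}(t) \simeq \prod_{a \co A} B(ta)\) for an arbitrary \(t \co A \to A\). It writes down \(s(f) \coloneqq \pair{\lambda a.\pair{ta,fa},\refl}\) and defines \(r(\pair{g,\refl})\) as the pointwise second projection of \(g\). It observes \(rs \seq \id\) and obtains \(sr \sim \id\) by path induction; letting \(t\) vary freely is exactly what makes that induction available. You instead keep the endpoint fixed at \(\id_A\) and factor the argument into three steps: the strict \(\Pi\)--\(\Sigma\) distributivity, a reassociation, and contractibility of the singleton \(\sum_{f \co A \to A} f = \id_A\). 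That contractibility is the same path induction in disguise, so the mathematical content agrees.

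What your packaging buys is transparency. The only non-strict step is visibly a fact about the identity type of \(A \to A\) itself, so \(\FE\) visibly never enters. You also spell out the homotopy-pullback reformulation, which the paper leaves as ``equivalently''. The paper's version is shorter, covers every \(t\), and for \(t \seq \id_A\) its \(s\) is literally the gap map, so it gets the same on-the-nose identification you work for at the end.

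Two small remarks:
\begin{enumerate}
\item The vague citation to Rijke can be dropped. The step follows directly by path induction on \(p\): \(\pair{\pair{f,p},g} \mapsto \mathrm{tr}(p,g)\) inverts \(g \mapsto \pair{\pair{\id_A,\refl},g}\), strictly on one side.
\item Your closing caution is stronger than necessary. Being an equivalence transfers along pointwise homotopies of maps without \(\FE\); \(\FE\) would only be needed to turn such a homotopy into a path, and that never arises here.
\end{enumerate}
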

\begin{proof}
  We prove that the fiber of \(\pi_*\) at an arbitrary \(t \co A \to A\) is equivalent to \(\prod_{a \co A} B(ta)\).
  Define
  \begin{mathpar}
    \begin{array}{l}
      s \co \left(\prod_{a \co A} B(ta)\right) \to \fiber{\pi_*}(t) \\
      f \mapsto \pair{\lambda a.\pair{ta,fa},\refl}
    \end{array}
    \and
    \begin{array}{l}
      r \co \fiber{\pi_*}(t) \to \prod_{a \co A} B(ta) \\
      \pair{g,\refl} \mapsto \pi_1 \circ g
    \end{array}
  \end{mathpar}
  where \(\pi_1\) is the second projection from \(\sum_{a\co A} B(a)\).
  We have \(rs \seq \id\) and a homotopy \(sr \sim \id\) by path induction.
\end{proof}

\begin{theorem}\label[theorem]{prop:funext-characterizations}
  In \(\ITT\), the following are logically equivalent:
  \begin{enumerate}
  \item\label{prop:funext-characterizations:equiv} \(\EI\)\emph{:} for all types \(A,B\), \(\CEqToEq \co (A \catIso B) \to (A \simeq B)\) is an equivalence,
  \item\label{prop:funext-characterizations:section} for all types \(A,B\), \(\CEqToEq \co (A \catIso B) \to (A \simeq B)\) admits a homotopy section,
  \item\label{prop:funext-characterizations:isequiv-is-prop} for all types \(A,B\) and every \(f \co A \to B\), the type \(\isEquiv(f)\) is a proposition,
  \item\label{prop:funext-characterizations:improve-homotopy-to-id} for every type \(A\) and \(f \co A \to A\), if \(f \sim \id_A\) then \(f = \id_A\),
  \item\label{prop:funext-characterizations:isequiv} for all types \(A,B\) and every \(f \co A \to B\), we have \(\isEquiv(f) \to \isCEq(f)\),
  \item\label{prop:funext-characterizations:comp} for all types \(A,B\) and every \(f \co A \to B\), we have \(\isEquiv(f) \to \isEquiv(f_*)\),
  \item\label{prop:funext-characterizations:weak} \emph{Weak \(\FE\):} for every family of contractible types \(a \co A \vdash P(a)\), the type \(\prod_{a \co A} P(a)\) is contractible,
  \item\label{prop:funext-characterizations:funext} \(\FE\)\emph{:} for every \(a \co A \vdash B(a)\) and \(f,g \co \prod_{a \co A} B(a)\), the map \((f = g) \to (f \sim g)\) is an equivalence.
  \end{enumerate}
\end{theorem}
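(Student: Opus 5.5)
I will prove the equivalences as a cycle of implications, using the standard fact that weak \(\FE\) implies \(\FE\) \cite[Chapter 13]{Rijke2025} for the step \ref{prop:funext-characterizations:weak}\(\Rightarrow\)\ref{prop:funext-characterizations:funext}. The planned chain is
\ref{prop:funext-characterizations:funext}\(\Rightarrow\)\ref{prop:funext-characterizations:equiv}\(\Rightarrow\)\ref{prop:funext-characterizations:section}\(\Rightarrow\)\ref{prop:funext-characterizations:isequiv}\(\Rightarrow\)\ref{prop:funext-characterizations:comp}\(\Rightarrow\)\ref{prop:funext-characterizations:weak}\(\Rightarrow\)\ref{prop:funext-characterizations:funext}.
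I will attach \ref{prop:funext-characterizations:isequiv-is-prop} and \ref{prop:funext-characterizations:improve-homotopy-to-id} to this loop separately.

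\textbf{The easy links.}
\begin{itemize}
\item \ref{prop:funext-characterizations:funext}\(\Rightarrow\)\ref{prop:funext-characterizations:equiv}: under \(\FE\), the types \(fs = \id\) and \(fs \sim \id\) are equivalent. Because \(\Sigma\) types respect equivalences, \(\CEqToEq\) is a \(\Sigma\) of equivalences and hence an equivalence.
\item \ref{prop:funext-characterizations:equiv}\(\Rightarrow\)\ref{prop:funext-characterizations:section}: immediate.
\item \ref{prop:funext-characterizations:section}\(\Rightarrow\)\ref{prop:funext-characterizations:isequiv}: given \(f\) with \(\isEquiv(f)\), apply the section to \(\pair{f,e}\). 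This yields a categorical equivalence \(\pair{f',c}\) together with a path \(\CEqToEq\pair{f',c} = \pair{f,e}\), so \(f' = f\). Transport along this path gives \(\isCEq(f)\).
\item \ref{prop:funext-characterizations:isequiv}\(\Rightarrow\)\ref{prop:funext-characterizations:comp}: \(f\) becomes an isomorphism in the wild category of types. By \Cref{prop:cat-isomorphism-characterization}, \(f_*\) is then an equivalence.
\item \ref{prop:funext-characterizations:funext}\(\Rightarrow\)\ref{prop:funext-characterizations:isequiv-is-prop}: this is standard.
\item \ref{prop:funext-characterizations:funext}\(\Rightarrow\)\ref{prop:funext-characterizations:improve-homotopy-to-id}: trivial.
\end{itemize}

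\textbf{The key step \ref{prop:funext-characterizations:comp}\(\Rightarrow\)\ref{prop:funext-characterizations:weak}.} This is Voevodsky's argument, via \Cref{prop:dependent-product-is-equivalent-to-fiber}. If each \(P(a)\) is contractible, then \(\pi \co \sum_a P(a) \to A\) is an equivalence, so \(\pi_*\) is an equivalence by \ref{prop:funext-characterizations:comp}. Its fibers are therefore contractible, and the lemma identifies \(\prod_a P(a)\) with the fiber of \(\pi_*\) at \(\id_A\). Hence \(\prod_a P(a)\) is contractible.

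\textbf{Closing the loop through \ref{prop:funext-characterizations:isequiv-is-prop} and \ref{prop:funext-characterizations:improve-homotopy-to-id}.}
\begin{itemize}
\item \ref{prop:funext-characterizations:improve-homotopy-to-id}\(\Rightarrow\)\ref{prop:funext-characterizations:isequiv}: given a homotopy section \(s\) with \(fs \sim \id_B\), apply \ref{prop:funext-characterizations:improve-homotopy-to-id} to \(fs \co B \to B\) to get \(fs = \id_B\). The retraction is handled the same way.
\item \ref{prop:funext-characterizations:isequiv-is-prop}\(\Rightarrow\)\ref{prop:funext-characterizations:improve-homotopy-to-id}: this is the subtle direction. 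Given \(H \co f \sim \id_A\), the map \(f\) is an equivalence: take \(\id\) as its homotopy section and retraction, using \(H\). Consider the type \(E \coloneqq \sum_{g \co A \to A} \isEquiv(g)\). I want to connect \(\pair{f,-}\) to \(\pair{\id,-}\) in \(E\), but \ref{prop:funext-characterizations:isequiv-is-prop} alone seems to give only uniqueness of the witnesses, not a path between the maps.
\end{itemize}
My first attempt at this direction exploits that being a proposition gives a path between any two witnesses of \(\isEquiv(\id)\). One witness is \(\pair{\pair{\id,\mathrm{refl}},\pair{\id,\mathrm{refl}}}\). The other is \(\pair{\pair{f,H'},\pair{f,H''}}\), with homotopies \(\id \circ f \sim \id\) obtained from \(H\). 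Projecting the resulting path onto the section component yields \(f = \id\) as elements of \(A \to A\). This is the obstacle I expect: getting the homotopies right so that the witness is well typed. It should work because \(\id \circ f \seq f\) strictly, so a homotopy \(f \sim \id\) is literally a witness that \(f\) is a homotopy section of \(\id\).
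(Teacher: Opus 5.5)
Your proof is correct, and its main steps coincide with the paper's: (8)\(\Rightarrow\)(1) by definition of \(\CEqToEq\), (5)\(\Rightarrow\)(6) via \Cref{prop:cat-isomorphism-characterization}, (6)\(\Rightarrow\)(7) via \Cref{prop:dependent-product-is-equivalent-to-fiber}, and Voevodsky's (7)\(\Rightarrow\)(8). Your (3)\(\Rightarrow\)(4) trick is also exactly the paper's. A homotopy \(f \sim \id_A\) strictly types as both a section and a retraction homotopy for \(\id_A\), so \(f\) is a homotopy inverse of \(\id_A\) alongside \(\id_A\) itself, and propositionality of \(\isEquiv(\id_A)\) gives \(f = \id_A\).

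The only difference is how (3) is wired in. The paper runs a single cycle (1)\(\Rightarrow\)(2)\(\Rightarrow\)(3)\(\Rightarrow\)(4)\(\Rightarrow\)(5). It gets (2)\(\Rightarrow\)(3) by observing that a homotopy section of \(\CEqToEq\) exhibits \(\isEquiv(f)\) as a homotopy retract of \(\isCEq(f)\), which is a proposition already in \(\ITT\) by \Cref{prop:being-isomorphism-is-proposition}. You instead shortcut (2)\(\Rightarrow\)(5) by transporting along the first-component path; this is precisely the section half of the paper's retract argument. You then reach (3) from (8) by citing the standard fact that \(\FE\) makes \(\isEquiv(f)\) a proposition.

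Your shortcut is slightly more direct for (5). The paper's arrangement is self-contained, needing no external \(\FE\)-dependent result. It also makes explicit that propositionality of \(\isEquiv\) is inherited for free from that of \(\isCEq\) once \(\CEqToEq\) has a section. Your separate (8)\(\Rightarrow\)(4) link is redundant given (8)\(\Rightarrow\)(3)\(\Rightarrow\)(4).
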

\begin{proof}
  That~\eqref{prop:funext-characterizations:equiv}~\(\implies\)~\eqref{prop:funext-characterizations:section} is immediate.
  For~\eqref{prop:funext-characterizations:section}~\(\implies\)~\eqref{prop:funext-characterizations:isequiv-is-prop}, note that any homotopy section of \(\CEqToEq\) exhibits \(\isEquiv(f)\) as a homotopy retract of the proposition \(\isCEq(f)\) (cf.~\cref{prop:being-isomorphism-is-proposition}).
  For~\eqref{prop:funext-characterizations:isequiv-is-prop}~\(\implies\)~\eqref{prop:funext-characterizations:improve-homotopy-to-id}, observe that a homotopy \(f \sim \id_A\) implies that \(f\) is a homotopy inverse of \(\id_A\).
  As \(\id_A\) is also its own homotopy inverse, \eqref{prop:funext-characterizations:isequiv-is-prop} implies that \(f = \id_A\).
  That~\eqref{prop:funext-characterizations:improve-homotopy-to-id}~\(\implies\)~\eqref{prop:funext-characterizations:isequiv} is immediate, and \eqref{prop:funext-characterizations:isequiv}~\(\implies\)~\eqref{prop:funext-characterizations:comp} follows from \Cref{prop:cat-isomorphism-characterization}.
  The implication~\eqref{prop:funext-characterizations:comp}~\(\implies\)~\eqref{prop:funext-characterizations:weak},which appears in standard proofs of \(\FE_{\U}\) from \(\UA_{\U}\)~\cite[Theorem 4.9.4]{HoTTBook2013} \cite[Theorem 17.3.2]{Rijke2025}, follows from \Cref{prop:dependent-product-is-equivalent-to-fiber} and the fact that the fibers of an equivalence are contractible~\cite[Theorem 10.4.6]{Rijke2025}.
  That~\eqref{prop:funext-characterizations:weak}~\(\implies\)~\eqref{prop:funext-characterizations:funext} is due to Voevodsky; see for example \cite[Theorem 4.9.5]{HoTTBook2013} or \cite[Theorem 13.1.2]{Rijke2025}.
  That~\eqref{prop:funext-characterizations:funext}~\(\implies\)~\eqref{prop:funext-characterizations:equiv} is by definition of \(\CEqToEq\).
\end{proof}

\Cref{prop:funext-characterizations} relativizes to \(\U\).
From this we recover that \(\UA_{\U}\) implies \(\FE_{\U}\) (and thus \(\CUA_{\U}\)).

\begin{corollary}
  \(\ITT \vdash \UA_{\U} \leftrightarrow (\CUA_{\U} \land \FE_{\U})\).
\end{corollary}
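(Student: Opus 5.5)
We prove the two implications separately, relativized to \(\U\) using the relativized form of \Cref{prop:funext-characterizations}.

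\emph{Forward direction.} Assume \(\UA_{\U}\). First we obtain \(\FE_{\U}\). Every \(f \co A \to B\) with \(A,B \co \U\) and \(\isEquiv(f)\) lies in the image of \(\idToEquiv\) up to equality, so it equals \(\idToEquiv(p)\) for some \(p\). By path induction on \(p\), the equivalence \(\idToEquiv(p)\) has underlying map \(\mathsf{transport}\) along \(p\), and this map is a categorical equivalence: for \(p \seq \refl\) it is \(\id\), with strict section and retraction \(\id\). Transporting along the equality of equivalences shows that \(f\) itself is a categorical equivalence. This is item~\eqref{prop:funext-characterizations:isequiv}, so the theorem gives \(\FE_{\U}\).

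Item~\eqref{prop:funext-characterizations:equiv} then says that \(\CEqToEq\) is an equivalence. We know \(\idToEquiv \sim \CEqToEq \circ \idToCEq\); this holds by path induction, since both sides send \(\refl\) to the identity equivalence, with the same underlying map and witnesses related canonically. Since \(\idToEquiv\) and \(\CEqToEq\) are both equivalences, 2-out-of-3 for equivalences \cite[Exercise~9.4]{Rijke2025} makes \(\idToCEq\) an equivalence. This gives \(\CUA_{\U}\).

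\emph{Backward direction.} Assume \(\CUA_{\U}\) and \(\FE_{\U}\). By \Cref{prop:funext-characterizations}, \(\FE_{\U}\) makes \(\CEqToEq\) an equivalence for \(A,B \co \U\). Using the same homotopy \(\idToEquiv \sim \CEqToEq \circ \idToCEq\), the composite of two equivalences is an equivalence, and being an equivalence transfers across homotopies. Hence \(\idToEquiv\) is an equivalence, which is \(\UA_{\U}\).

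The main obstacle is checking the factorization homotopy \(\idToEquiv \sim \CEqToEq \circ \idToCEq\) precisely. It is a path induction, but it requires the chosen witnesses to agree at \(\refl\). If they do not agree definitionally, we can use instead that \(\isEquiv\) is a proposition under \(\FE\), which suffices in the backward direction. In the forward direction we already have \(\FE\) before this homotopy is needed, so the same argument applies there.
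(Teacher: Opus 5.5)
Your proof is correct and follows the paper's argument. The backward direction is identical, and in the forward direction you likewise get \(\FE_{\U}\) from a homotopy section of \(\idToEquiv\) via the relativized \Cref{prop:funext-characterizations}, then \(\CUA_{\U}\) by 2-out-of-3. The only difference is the entry point into that chain. You start at item~\eqref{prop:funext-characterizations:isequiv}, showing by path induction that the underlying map of \(\idToEquiv(p)\) is a categorical equivalence. The paper starts at item~\eqref{prop:funext-characterizations:section}, post-composing the section with \(\idToCEq\) to obtain homotopy sections of \(\CEqToEq\). Your choice defers the factorization homotopy until \(\FE_{\U}\) is available, though with the standard definitions that homotopy already holds by a direct path induction, so your fallback is not needed.
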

\begin{proof}
  For \(A,B \co \U\), consider the following homotopy commutative triangle.
  \[
    \begin{tikzcd}[row sep=0em]
      & (A \CEq B) \ar[dashed,near start]{dr}{\CEqToEq} \\
      (A =_\U B) \ar[dashed,near end]{ur}{\idToCEq} \ar{rr}[below]{\idToEquiv} && (A \simeq B)
    \end{tikzcd}
  \]
  The map \(\CEqToEq\) is an equivalence for all \(A,B \co \U\) if and only if \(\FE_{\U}\) holds, by \eqref{prop:funext-characterizations:equiv} \(\iff\) \eqref{prop:funext-characterizations:funext} of \Cref{prop:funext-characterizations} in \(\U\).
  Thus, if both \(\CUA_{\U}\) and \(\FE_{\U}\) hold, then \(\UA_{\U}\) holds by 2-out-of-3 for equivalences.

  Conversely, if \(\UA_\U\) holds, then \(\idToEquiv\) has in particular a homotopy section for all \(A,B \co \U\).
  Post-composing these homotopy sections with \(\idToCEq\) yields homotopy sections of \(\CEqToEq\) for all \(A,B \co \U\).
  By \eqref{prop:funext-characterizations:section} \(\implies\) \eqref{prop:funext-characterizations:funext}, \eqref{prop:funext-characterizations:equiv} of \Cref{prop:funext-characterizations} relativized to \(\U\), this implies that \(\FE_{\U}\) holds and that \(\CEqToEq\) is an equivalence for all \(A,B \co \U\).
  From the latter, \(\CUA_{\U}\) follows by 2-out-of-3.
\end{proof}

The usual proof of \(\ITT + \UA_{\U} \vdash \FE_{\U}\) thus factors through a universe-independent proof of \(\ITT + \EI \vdash \FE\).
We now show that this decomposition of \(\UA_\U\) is proper: neither \(\CUA_\U\) nor \(\FE_\U\) alone implies \(\UA_\U\).
That \(\ITT + \FE_\U \nvdash \UA_\U\) follows from the standard model of type theory in \(\Set\), so it is \(\ITT + \CUA_\U \nvdash \UA_\U\) that we need to establish.
Equivalently, we want to show that \(\ITT + \CUA_\U \nvdash \FE_\U\).

\section{Von Glehn's polynomial model construction}\label{sec:von-glehn-polynomial}

We use Von Glehn's \emph{polynomial model} construction \cite{Glehn2015,MossGlehn2018} to separate \(\UA_{\U}\) from \(\CUA_{\U}\).
This model construction is related \cite[\S5.2]{Glehn2015} \cite[\S7.1]{Moss2018} to G\"{o}del's Dialectica interpretation \cite{Godel1958}, versions of which have been used to interpret various logics; see for example de Paiva's categorical models of linear logics \cite{DePaiva1991}.
For our purposes, the relevant feature of the polynomial models is that they always refute \(\FE\).

By a model of type theory, we mean a category with attributes \cite{Cartmell1978}, category with families \cite{Dybjer1996}, or natural model \cite{Awodey2016}: a presheaf of types \(\Ty[\cat{C}] \co \cat{C}^{\op} \to \Set\) and presheaf of terms \(\Tm[\cat{C}] \co (\elems[\cat{C}]\Ty[\cat{C}])^{\op} \to \Set\).
Here we depart from Von Glehn, who works with non-split models (categories with display maps), instead following Kov{\'a}cs' Agda formalization of a version of the construction \cite{Kovacs2020polynomial}.\footnote{
  Kov{\'a}cs constructs \(\Poly{\U}\) where \(\U\) is the model of type theory internal to Agda associated to some universe.
  Given a model \(\cat{A}\) of Agda's type theory, interpreting the formalization in \(\cat{A}\) yields the polynomial model whose base model is the interpretation of \(\U\) in \(\cat{A}\).
  Kov{\'a}cs postulates uniqueness of identity proofs and \(\FE\), but essentially only to deal with the mismatch between Agda's coproducts and the strict coproducts required for Von Glehn's construction.
  Unfortunately, uniqueness of identity proofs is inconsistent with \(\CUA_{\U}\), so we cannot build directly on this formalization.
}
We therefore describe the model in detail below, referencing Von Glehn's analogous construction for each component.
We leave it to the reader to translate some verifications to the split case.

We fix a base model \(\cat{C}\).
For \(\sigma \co \Delta \to \Gamma\), we write the action of \(\sigma\) on \(A \in \Ty[\cat{C}](\Gamma)\) and \(a \in \Tm[\cat{C}](\Gamma,A)\) as \(A\sigma\) and \(a\sigma\) respectively.
For \(a \in \Tm[\cat{C}](\Gamma,A)\), we write the induced substitution as \([a] \co \Gamma \to \Gamma.A\).
For \(\Gamma \in \cat{C}\), the category \(\Ty*[\cat{C}](\Gamma)\) has objects \(\Ty[\cat{C}](\Gamma)\) and morphisms \(A \to B\) given by \(\paren{\cat{C} / \Gamma}(\p_A, \p_B)\), i.e., morphisms \(\Gamma.A \to \Gamma.B\) over \(\Gamma\), i.e, elements of \(\Tm(\Gamma.A, B\p)\).
This extends to a functor \(\Ty*[\cat{C}] \co \cat{C}^{\op} \to \Cat\).
For ease of readability, we sometimes give variable names to context extensions, in which case we write \(\Gamma, a \co A\) instead of \(\Gamma.A\).

\begin{definition}
  The category \(\Poly{\cat{C}}\) is given by \(\elems[\Gamma \in \cat{C}] \Ty*[\cat{C}](\Gamma)^{\op}\), the Grothendieck construction of the fiberwise opposite of \(\Ty*[\cat{C}]\).
  For \(\Gamma \in \Poly{\cat{C}}\), we set \((\Shape\Gamma, \Pos\Gamma) \coloneqq \Gamma\) and refer to the two components as \emph{shapes} and \emph{positions} respectively.
  A morphism \(\sigma \co \Delta \to \Gamma\) is given by some \(\Shape{\sigma} \co \Shape{\Delta} \to \Shape{\Gamma}\) in \(\cat{C}\) and a morphism \(\Pos{\sigma} \co \Pos{\Gamma}\Shape{\sigma} \to \Pos{\Delta}\) in \(\Ty*[\cat{C}](\Shape{\Delta})\) \cite[(14)]{GambinoKock2013}.
  Composition of \(\sigma \co \Theta \to \Delta\) and \(\tau \co \Delta \to \Gamma\) is given by \(\Shape{(\sigma \cc \tau)} = \Shape{\sigma} \cc \Shape{\tau}\) and \(\Pos{(\sigma \cc\tau)} = \Pos{\tau} \cc \Pos{\sigma}\Shape{\tau}\), where \(\Pos{\sigma}\Shape{\tau} \co \Pos{\Theta}\Shape{\sigma}\Shape{\tau} \to \Pos{\Delta}\Shape{\tau}\).
\end{definition}

\begin{remark}
  Suppose \(\cat{C}\) is a democratic model of extensional type theory \cite[Definition 2.6]{ClairambaultDybjer2014}, so \(\Ty*[\cat{C}](\Gamma)\) is naturally equivalent to \(\cat{C}/\Gamma\).
  Then \(\Poly{\cat{C}}\) is the \emph{category of (single-variable) polynomials in \(\cat{C}\)}, denoted \(\mathrm{Poly}_{\cat{C}}(1,1)\) by Gambino and Kock \cite[\S2{\textperiodcentered}14]{GambinoKock2013}.
  \(\Poly{\cat{C}}\) is called the \emph{category of containers} in a line of work starting with Abbott, Altenkirch, and Ghani \cite{Abbott2003,AbbottAltenkirchGhani2003}.
  However, the \emph{container model} sketched by Altenkirch and Kaposi \cite{AltenkirchKaposi2021} is distinct from the polynomial model; it has the same category of contexts but more types.
\end{remark}

\begin{definition}
  The presheaf of types \(\Ty[\Poly{\cat{C}}]\) is given by restricting the presheaf \(\sum_{A \co \Ty} \Ty^{\Tm(A)}\) on \(\cat{C}\) along the projection \(\Poly{\cat{C}} \to \cat{C}\).
  This means an element of \(\Ty[\Poly{\cat{C}}](\Gamma)\) is given by some \(\Shape{A} \in \Ty[\cat{C}](\Shape\Gamma)\) and \(\Pos{A} \in \Ty[\cat{C}](\Shape\Gamma.\Shape{A})\), which we respectively refer to as the \emph{shapes} and \emph{positions} of \(A\).
\end{definition}

To make \(\Poly{\cat{C}}\) into a model, we need \(\cat{C}\) to have finite coproducts of types satisfying the strict \(\eta\) rule.
This is used in the substitution calculus (\cref{prop:polynomial-model:cwa}).
We can break these up into binary (\(A_0 + A_1\)) and nullary (0) coproducts.
We write the eliminator for binary coproducts as follows.
\begin{mathpar}
  \inferrule
  {\Gamma, v \co A_0 + A_1,\Delta(v) \vdash P(v)
    \and
    \text{for \(i \in \{0,1\}\): }\Gamma, a_i \co A_i,\Delta(\inc{i}(a_i)) \vdash u_i \co P(\inc{i}(a_i))}
  {\Gamma, v \co A_0 + A_1,\Delta(v) \vdash \elim_+^P(a_0.u_0, a_1.u_1, v) \co P(v)}
\end{mathpar}
The \(\beta\) rules state that \(\Gamma, a_i \co A_i, \Delta(\inc{i}(a_i)) \vdash \elim_+^P(a_0.u_0, a_1.u_1, \inc{i}(a_i)) \seq a_i\). 
The \(\eta\) rule states, as shown below, that we can test strict equality of terms depending on \(A_0 + A_1\) by checking equality on constructors.
\begin{mathpar}
  \inferrule
  {\Gamma, u \co A_0 + A_1, \Delta \vdash t(u), t'(u) \co P(u)
    \\
    \text{for \(i \in \{0,1\}\): }\Gamma, a \co A_i, \Delta(\inc{i}(a)) \vdash t(\inc{i}(a)) \seq t'(\inc{i}(a)) \co P(\inc{i}(a))}
  {\Gamma, u \co A_0 + A_1, \Delta \vdash t(u) \seq t'(u) \co P(u)}
\end{mathpar}
The elimination rule and \(\eta\) law for \(0\) are similar.
\begin{mathpar}
  \inferrule
  {\Gamma, v \co 0,\Delta(v) \vdash P(v)}
  {\Gamma, v \co 0,\Delta(v) \vdash \elim_0^P(v) \co P(v)}
  \and
  \inferrule
  {\Gamma, v \co 0,\Delta(v) \vdash t(v),t'(v) \co P(v)}
  {\Gamma, v \co 0,\Delta(v) \vdash t(v) \seq t'(v) \co P(v)}
\end{mathpar}

Semantically, the above rules mean that the split fibration \(\Ty*[\cat{C}]\) has split fibred coproducts~\cite[Definition~1.8.1]{Jacobs1999}: each \(\Ty*[\cat{C}](\Gamma)\) has chosen finite coproducts and the substitution functors \(\Ty*(\Gamma) \to \Ty*(\Delta)\) preserve them strictly.
For inference rules, see for example Von~Glehn~\cite[\S2.3.5]{Glehn2015} or Angiuli and Gratzer~\cite[\S2.5.1 and \S2.5.3]{AngiuliGratzer2025}.

\begin{remark}
  Strict \(\eta\) laws for coproducts are often omitted from the syntax of type theory due to issues with strict equality checking.
  In the simply-typed \(\lambda\)-calculus, strict equality checking for coproducts with the \(\eta\) law is decidable but difficult.
  Ghani \cite{Ghani1995} addresses the case of binary coproducts; Scherer \cite{Scherer2017} handles the empty type.
  In \(\ITT\), the \(\eta\) law for the empty type makes strict equality undecidable: with this law, deciding whether \(a \co A \vdash \inc{0}(\star) \seq \inc{1}(\star) \co 1 + 1\) requires deciding whether \(A\) implies \(0\).
  To our knowledge, it is an open problem whether strict equality is decidable for \(\ITT\) with binary coproducts and their \(\eta\) law; see, for example, discussion between Shulman, Kov{\'a}cs, and others on Proof Assistants StackExchange \cite{ShulmanKovacs2022}.
\end{remark}

We further require that our coproducts are \emph{extensive}.
The syntactic counterpart is often called \emph{large elimination}.
This means that given types \(\Gamma, a_i \co A_i, \Delta(\inc{i}(a_i)) \vdash P_i(a_i)\) for \(i \in \{0,1\}\), there is a type \(\Gamma, u \co A_0+A_1,\Delta(u) \vdash [P_0,P_1](u)\) satisfying \(\Gamma, a_i \co A_i,\Delta(\inc{i}(a_i)) \vdash [P_0,P_1](\inc{i}(a_i)) \seq P_i(a_i)\).
Using the strict \(\eta\) rule, for every family \(\Gamma. A_0 + A_1 \vdash P\) there is a canonical strict isomorphism \(P \sequiv [P\inc{0}, P\inc{1}]\).
Furthermore, for all such \(P\) there is a strict isomorphism \(\sum_{u \co A + B} P(u) \sequiv \sum_{a \co A} P(\inc{0}(a)) + \sum_{b \co B} P(\inc{1}(b))\).
This is used in particular in the construction of dependent product types (\cref{prop:polynomial-model:pi-types}).

For the rest of this section, we assume \(\cat{C}\) has extensive finite coproducts of types with the strict \(\eta\) rule.

\begin{definition}\label[definition]{def:polynomial-model:sections}
  The presheaf of terms \(\Tm[\Poly{\cat{C}}] \co \paren{\smallint\Ty[\Poly{\cat{C}}]}^{\op} \to \Set\) is given by \(\Tm[\Poly{\cat{C}}](\Gamma, A) \coloneqq \sum_{\Shape{a} \co \Tm(\Shape{\Gamma}, \Shape{A})} \Ty*(\Shape{\Gamma})(\Pos{A}[\Shape{a}], \Pos{\Gamma})\).
  We refer to the components again as \emph{shapes} and \emph{positions}.
\end{definition}

\begin{proposition}\label[proposition]{prop:polynomial-model:cwa}
  \(\paren{\Poly{\cat{C}}, \Ty[\Poly{\cat{C}}], \Tm[\Poly{\cat{C}}]}\) extends to a category with families by setting
  \begin{mathpar}
    \Shape{(\Gamma.A)} \coloneqq \Shape\Gamma.\Shape{A},
    \and
    \Pos{(\Gamma.A)} \coloneqq \Pos{\Gamma}\p_{\Shape{A}} + \Pos{A},
    \and
    \Shape{(\p_A)} \coloneqq \p_{\Shape{A}},
    \and
    \Pos{(\p_A)} \coloneqq {\inc0} \co \Pos\Gamma \to \Pos\Gamma\p + \Pos A, \\
    \Shape{(\q_A)} \coloneqq \q_{\Shape{A}},
    \and
    \Pos{(\q_A)} \coloneqq \inc{1} \co \Pos{A} \to \Pos\Gamma\p + \Pos{A},
    \and
    \Shape{\pair{\sigma,a}} \coloneqq \pair{\Shape\sigma, \Shape{a}},
    \and 
    \Pos{\pair{\sigma, a}} \coloneqq [\Pos\sigma, \Pos{a}].
  \end{mathpar}
\end{proposition}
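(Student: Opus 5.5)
We have to check that the data listed defines a category with families. Concretely, this means five things. First, substitution on \(\Ty[\Poly{\cat{C}}]\) and \(\Tm[\Poly{\cat{C}}]\) must be functorial. Second, \(\p_A \co \Gamma.A \to \Gamma\) and \(\q_A \in \Tm[\Poly{\cat{C}}](\Gamma.A, A\p_A)\) must be well-typed. Third, \(\pair{\sigma,a}\) must be a morphism \(\Delta \to \Gamma.A\) for \(\sigma \co \Delta \to \Gamma\) and \(a \in \Tm(\Delta, A\sigma)\). Fourth, the \(\beta\) laws \(\p_A \cc \pair{\sigma,a} = \sigma\) and \(\q_A[\pair{\sigma,a}] = a\) must hold. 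Fifth, the \(\eta\) law \(\pair{\p_A \cc \tau, \q_A[\tau]} = \tau\) must hold for every \(\tau \co \Delta \to \Gamma.A\).

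We first make the term presheaf explicit. For \(\sigma \co \Delta \to \Gamma\) and \(a \in \Tm[\Poly{\cat{C}}](\Gamma,A)\), set \(\Shape{(a\sigma)} \coloneqq \Shape{a}\Shape{\sigma}\). Set \(\Pos{(a\sigma)} \coloneqq \Pos{\sigma} \cc \Pos{a}\Shape{\sigma}\), a composite in \(\Ty*(\Shape\Delta)\) of the form
\(\Pos{A}[\Shape a]\Shape\sigma \to \Pos\Gamma\Shape\sigma \to \Pos\Delta\).
Functoriality is then inherited from functoriality of \(\Ty*[\cat{C}]\) together with the composition law of \(\Poly{\cat{C}}\), exactly as for the category structure itself.

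All shape components of the CwF equations are the corresponding equations in the base CwF \(\cat{C}\). This holds because \(\Shape{(-)}\) is computed componentwise by the base comprehension. So only the position components remain, and these are equations between morphisms in the fibres \(\Ty*(\Shape\Delta)\).

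For well-typedness, note that \(\Pos{\pair{\sigma,a}}\) must be a map
\((\Pos\Gamma\p + \Pos A)\pair{\Shape\sigma,\Shape a} \to \Pos\Delta\).
Since substitution preserves coproducts strictly (split fibred coproducts), the domain is strictly
\(\Pos\Gamma\Shape\sigma + \Pos A[\Shape a]\),
and the copair \([\Pos\sigma,\Pos a]\) has exactly this type. The typings of \(\inc0\) and \(\inc1\) for \(\p_A\) and \(\q_A\) are immediate in the same way.

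The \(\beta\) laws follow from the \(\beta\) rules for coproducts. Unfolding the composition law gives
\(\Pos{(\p_A \cc \pair{\sigma,a})} = [\Pos\sigma,\Pos a] \cc \inc0 = \Pos\sigma\),
using that \(\inc0\) reindexed along \(\pair{\Shape\sigma,\Shape a}\) is again \(\inc0\) by strictness. Similarly, \(\Pos{(\q_A[\pair{\sigma,a}])} = [\Pos\sigma,\Pos a]\cc\inc1 = \Pos a\).

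The \(\eta\) law is the main point, and it is where the strict \(\eta\) rule for coproducts is needed. Given \(\tau\), its position component \(\Pos\tau\) is a map out of \(\Pos\Gamma\Shape{(\p\tau)} + \Pos A[\Shape{(\q\tau)}]\). The position component of \(\pair{\p_A\cc\tau,\q_A[\tau]}\) is \([\Pos\tau\cc\inc0, \Pos\tau\cc\inc1]\). The strict \(\eta\) rule, applied in context \(\Shape\Delta\) with the coproduct variable followed by nothing (or, viewing maps in \(\Ty*\) as terms, the variable in last position), gives \([\Pos\tau\cc\inc0,\Pos\tau\cc\inc1] \seq \Pos\tau\).

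A secondary subtlety is that reindexing must commute strictly with copairing, \([f,g]\theta = [f\theta,g\theta]\). This is needed in the functoriality of \(\pair{-,-}\), namely \(\pair{\sigma,a}\cc\rho = \pair{\sigma\cc\rho, a\rho}\). It also follows from the stability of \(\elim_+\) under substitution plus the \(\eta\) rule. The empty coproduct is not yet used here; it will be needed for the empty context, whose positions are \(0\), so that the terminal object \((1,0)\) exists. We would verify that last point by the \(\eta\) rule for \(0\).
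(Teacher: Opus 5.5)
Your proposal is correct and follows essentially the paper's proof: the shape equations are inherited from \(\cat{C}\), and the position equations follow from the \(\beta\) and strict \(\eta\) rules for coproducts, together with the fact that substitution preserves coproducts strictly. The only difference is organizational. You verify the general \(\eta\) law \(\pair{\p\cc\tau,\q[\tau]} = \tau\) directly, whereas the paper checks \(\pair{\p,\q} = \id\) (i.e.\ \([\inc{0},\inc{1}] = \id\)) plus naturality \(\pair{\sigma,a}\tau = \pair{\sigma\tau,a\tau}\), which is an equivalent axiomatization.
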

\begin{proof}
  Given \(\sigma \co \Delta \to \Gamma\) and \(a \in \Tm(\Delta, A\sigma)\) we have \(\Shape{\pair{\sigma,a}} \co \Shape\Delta \to \Shape\Gamma.\Shape A\) and \(\Pos{\pair{\sigma, a}} \co \Pos{\Gamma}\Shape\sigma + \Pos{A}\pair{\Shape\sigma, \Shape a} \to \Pos{\Delta}\) in \(\Ty*(\Delta)\).
  Clearly, all desired equations hold on shapes since they hold in \(\cat{C}\).
  We have \(\Pos{\paren{ \p\pair{\sigma, a} }} = [\Pos\sigma, \Pos{a}] \cc \inc{0}\p = \Pos{\sigma}\) and \(\Pos{\paren{ \q\pair{\sigma, a} }} = [\Pos\sigma, \Pos{a}] \cc \inc{1}\p = \Pos{a}\).
  This shows \(\p\pair{\sigma,a} = \sigma\) and \(\q\pair{\sigma,a} = a\).
  Furthermore, \(\Pos{\pair{\p, \q}} = [\Pos{\p}, \Pos{\q}] = [\inc{0}, \inc{1}] = \id\) and \(\Pos{\paren{\pair{\sigma, a}\tau}} = \Pos{\tau} \cc [\Pos{\sigma}, \Pos{a}]\Shape{\tau} = [\Pos\tau \cc \Pos{\sigma}\Shape{\tau}, \Pos\tau \cc \Pos{a}\Shape{\tau}] = \Pos{\pair{\sigma\tau, a\tau}}\).
  This shows \(\pair{\p,\q} = \id\) and \(\pair{\sigma, a}\tau = \pair{\sigma\tau, a\tau}\).
\end{proof}

\subsection{Type formers}

We give the interpretations in \(\Poly{\cat{C}}\) of \(\Sigma\), identity, \(\Pi\), binary and nullary coproduct, and universe types.

\begin{proposition}[cf.~{\cite[\S4.2]{Glehn2015}}]\label[proposition]{prop:polynomial-model:sigma-types}
  Given \(A \in \Ty[\Poly{\cat{C}}](\Gamma)\), \(B \in \Ty[\Poly{\cat{C}}](\Gamma.A)\), we have a dependent sum \(\Sigma_{A} B \in \Ty(\Gamma)\) given by
  \[
    \Shape{\Gamma} \vdash \Shape{\paren{\Sigma_AB}} \coloneqq \sum_{a \co \Shape{A}} \Shape{B}(a),
    \qquad
    \Shape{\Gamma}, \pair{\Shape{a},\Shape{b}} \co \Shape{\paren{\Sigma_AB}} \vdash \Pos{\paren{\Sigma_AB}} \coloneqq \Pos{A}(\Shape{a}) + \Pos{B}(\Shape{a},\Shape{b}).
  \]
\end{proposition}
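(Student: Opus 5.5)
The plan is to use the standard characterization of dependent sums in a category with families. To give \(\Sigma_A B\) with pairing, projections and strict \(\beta\)/\(\eta\) laws, stable under substitution, it suffices to give two things. First, the type \(\Sigma_A B \in \Ty[\Poly{\cat{C}}](\Gamma)\), natural in \(\Gamma\). Second, an isomorphism \(\Gamma.A.B \cong \Gamma.\Sigma_A B\) in \(\Poly{\cat{C}}\) that commutes with the projections to \(\Gamma\) and is natural in \(\Gamma\). Pairing and the two projections are then read off from this isomorphism and its inverse. I would therefore construct this isomorphism directly, separately on shapes and positions.

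\emph{Shapes.} We have \(\Shape{(\Gamma.A.B)} \seq \Shape\Gamma.\Shape A.\Shape B\) and \(\Shape{(\Gamma.\Sigma_A B)} \seq \Shape\Gamma.\sum_{a \co \Shape A}\Shape B(a)\). The strict \(\eta\) rule for \(\Sigma\) in \(\cat{C}\) makes these strictly isomorphic over \(\Shape\Gamma\), via \(\pair{a,b} \mapsto \pair{a,b}\) and the two projections.

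\emph{Positions.} Transporting along this shape isomorphism, the positions of \(\Gamma.A.B\) are \((\Pos\Gamma\p + \Pos A)\p + \Pos B\). The positions of \(\Gamma.\Sigma_A B\) are \(\Pos\Gamma\p + (\Pos A + \Pos B)\). The associator of fibred coproducts gives mutually inverse maps between them. These are built with \(\elim_+\), and their composites are identities by the strict \(\eta\) rule for binary coproducts. Since morphisms in \(\Poly{\cat{C}}\) act contravariantly on positions, the position component of each direction of the isomorphism is the associator in the opposite direction. This is still a strict inverse pair, so the two morphisms of \(\Poly{\cat{C}}\) compose to identities.

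\emph{Compatibility with projections.} The projection \(\Gamma.A.B \to \Gamma\) is \(\p_A\p_B\). Its position component is \(\inc{0}\) followed by \(\inc{0}\), i.e.\ the inclusion of \(\Pos\Gamma\). The projection \(\p_{\Sigma_A B}\) has position component \(\inc{0}\) into \(\Pos\Gamma\p + (\Pos A + \Pos B)\). The associator identifies these two inclusions. Explicitly, the pairing of \(a\) and \(b\) has shape \(\pair{\Shape a,\Shape b}\) and position \([\Pos a,\Pos b]\), and \(\pi_1\), \(\pi_2\) are obtained by precomposing with \(\inc0\) and \(\inc1\).

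\emph{Naturality.} For \(\sigma \co \Delta \to \Gamma\), substitution acts on \(\Shape{(\Sigma_A B)}\) and \(\Pos{(\Sigma_A B)}\) only through \(\Shape\sigma\). Substitution in \(\cat{C}\) strictly preserves \(\Sigma\), and it strictly preserves the chosen coproducts because they are split fibred. Hence \((\Sigma_A B)\sigma \seq \Sigma_{A\sigma}B\sigma^+\), and the isomorphism above is preserved.

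The main obstacle I expect is the bookkeeping in the position-component checks. One must track that composition in \(\Poly{\cat{C}}\) reverses position maps and reindexes them along shape maps, and must verify each equation (\(\beta\) and \(\eta\) for pairs) by the \(\eta\) rule for coproducts, case-splitting on \(\inc0\), \(\inc1\) at each level of nesting. I would first check \(\pi_1\pair{a,b} \seq a\) this way and then do the remaining equations in the same pattern.
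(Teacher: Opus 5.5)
Your proposal is correct. The \(\Sigma\)-isomorphism on shapes, together with the coproduct associator on positions, gives an isomorphism \(\Gamma.A.B \cong \Gamma.\Sigma_A B\) over \(\Gamma\). The associator is strictly invertible by the \(\eta\) rule and stable under substitution because the coproducts are split fibred. The naturality square in \(\sigma\) also involves \(\Pos{\sigma}\), and it commutes by naturality of the associator. This is exactly what \(\Sigma\)-types with strict \(\beta\) and \(\eta\) require.

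The paper states this proposition without proof and defers to Von Glehn. Your argument is the split-model form of the standard one: \(\Sigma\)-types as closure of display maps under composition, which here reduces to associativity of coproducts on positions. An equivalent and slightly quicker check, in the style of the paper's treatment of \(\Pi\), is the natural bijection \(\Tm(\Gamma, \Sigma_A B) \cong \sum_{a} \Tm(\Gamma, B[a])\). On positions this is just the universal property of \(\Pos{A}[\Shape{a}] + \Pos{B}[\Shape{a},\Shape{b}]\).
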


\begin{proposition}[{cf.~\cite[\S4.4]{Glehn2015}}]\label[proposition]{prop:polynomial-model:identity-types}
  Given \(A \in \Ty[\Poly{\cat{C}}](\Gamma)\) and \(u, v \in \Tm(\Gamma, A)\), we have an identity type \(u =_A v \in \Ty(\Gamma)\) given by
  \[
    \Shape\Gamma \vdash \Shape{\paren{u =_A v}} \coloneqq \paren{\Shape{u} =_{\Shape{A}} \Shape{v}},
    \qquad
    \Shape\Gamma,p \co \Shape{\paren{u =_A v}} \vdash \Pos{\paren{u =_A v}} \coloneqq 0,
  \]
  with, for \(u \in \Tm(\Gamma, A)\), the reflexive path \(\refl[u] \in \Tm(\Gamma, u =_A u)\) given by
  \[
    \Shape\Gamma \vdash \Shape{(\refl[u])} \coloneqq \refl[\Shape{u}] \co \paren{\Shape{u} =_{\Shape{A}} \Shape{u}},
    \qquad
    \Shape\Gamma.0 \vdash \Pos{(\refl[u])} \coloneqq \elim_0 \co \Pos\Gamma.
  \]
\end{proposition}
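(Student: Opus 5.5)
We need to verify that the given data form an identity type in the category with families \(\Poly{\cat{C}}\): formation, introduction, and the \(J\)-eliminator with its \(\beta\) rule, all stable under substitution. The plan is to reduce everything to the identity type of \(\cat{C}\) on shapes, and to use the empty position type together with the strict \(\eta\) law for \(0\) and the coproduct structure on positions.

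\textbf{Formation and introduction.} Stability of \(u =_A v\) under \(\sigma \co \Delta \to \Gamma\) is immediate. Shapes are substituted along \(\Shape\sigma\), and in \(\cat{C}\) we have \((\Shape{u} = \Shape{v})\Shape\sigma \seq (\Shape{u}\Shape\sigma = \Shape{v}\Shape\sigma)\). Positions are \(0\), which is strictly stable because coproducts are split.

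For \(\refl[u]\) to be a term in the sense of \Cref{def:polynomial-model:sections}, it needs a shape \(\refl[\Shape u]\) and a position map \(0[\refl] \seq 0 \to \Pos\Gamma\) in \(\Ty*(\Shape\Gamma)\). This map is forced to be \(\elim_0\). By the \(\eta\) law for \(0\), any two maps out of \(0\) coincide, which gives stability under substitution for free.

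\textbf{Elimination.} Consider a motive \(C \in \Ty(\Gamma.A.A\p.(\q\p =  \q))\) and a term \(d \in \Tm(\Gamma.A, C[\id,\q,\refl])\). The first step is to compute the extended context. Its shape is \(\Shape\Gamma.\Shape A.\Shape A.(x = y)\). By \Cref{prop:polynomial-model:cwa}, its positions are \(((\Pos\Gamma + \Pos A(x)) + \Pos A(y)) + 0\).

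The shape of \(J(C,d)\) is the \(J\)-eliminator of \(\cat{C}\) applied to \(\Shape C\) and \(\Shape d\).

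The position component is the substantive part. We need a map
\[
\Pos C(x,y,p,J(\Shape d)) \to ((\Pos\Gamma + \Pos A(x)) + \Pos A(y)) + 0
\]
over the shape context. We define it by \(J\) in \(\cat{C}\) applied to the type family \(\Pos C(\dots) \to \Pos{(\text{ctx})}\), interpreted as a \(\Pi\)-type in \(\cat{C}\). Along \(\refl\), the target \(\Pos A(y)\) becomes \(\Pos A(x)\), and there we use \(\Pos d \co \Pos C(x,x,\refl,\Shape d) \to \Pos\Gamma + \Pos A(x)\). We postcompose with the evident map \(\Pos\Gamma + \Pos A(x) \to (\Pos\Gamma+\Pos A(x))+\Pos A(x)+0\), choosing the injection that sends \(\Pos A(x)\) to either the second or third summand. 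Either choice works; the \(\beta\) rule will only see the composite after substituting \([\id,\q,\refl]\), whose position component identifies both copies of \(\Pos A\) via \([\inc0, \inc1, \inc1]\)-style maps.

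The \(\beta\) rule for shapes holds strictly because \(J\) in \(\cat{C}\) computes strictly on \(\refl\). For positions, substituting the term along \(\pair{\id,\q,\refl}\) postcomposes with \(\Pos{\pair{\id,\q,\refl}} = [[\inc0,\inc1],\inc1,\elim_0]\). This collapses the constructed map back to \(\Pos d\), using strict \(\beta\) for \(\Pi\) and \(J\) in \(\cat{C}\), and the \(\eta\) law for coproducts to check equality summand by summand.

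\textbf{Main obstacle.} The hard part is making the position map of \(J\) stable under substitution and strictly correct under \(\beta\). The difficulty is that its codomain, the position of the extended context, mentions \(\Pos A(y)\), which must be transported along \(p\).

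My first approach is to use the \(\Pi\)-type \(\prod_{c : \Pos C}\bigl(\Pos\Gamma + \Pos A(x) + \Pos A(y)\bigr)\) in \(\cat{C}\) as the motive of \(\cat{C}\)'s \(J\). This makes the eliminator a single term of \(\cat{C}\), so naturality follows from that of \(J\) and \(\Pi\) in \(\cat{C}\). The remaining \(\beta\) check is then a strict computation in \(\cat{C}\) using coproduct \(\eta\).
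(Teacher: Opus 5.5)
\paragraph{Gap: stability under substitution.}
Your construction does not give a strictly stable eliminator. You say stability under substitution follows ``for free'' because the eliminator is a single term of \(\cat{C}\). But substitution in \(\Poly{\cat{C}}\) does more than substitute along \(\Shape\sigma\).

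For \(t \in \Tm(\Gamma, C)\) and \(\sigma \co \Delta \to \Gamma\), the positions are \(\Pos{(t\sigma)} = \Pos\sigma \circ \Pos{t}\,\Shape\sigma\), as in the proof of \Cref{prop:polynomial-model:cwa}. So positions are additionally post-composed with \(\Pos\sigma\). For the weakened substitution \(\sigma^{+++}\) on \(\Gamma.A.A\p.(\q\p=\q)\), this post-composition is \(\Phi \coloneqq ((\Pos\sigma+\id)+\id)+\id\).

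Compare the two sides of the required equation:
\begin{itemize}
\item \(J(C,d)\sigma^{+++}\) has positions \(\Phi\bigl(J^{\cat{C}}_{M}(b)(x,y,p)(c)\bigr)\). Here \(M\) is your \(\Pi\)-motive, whose codomain mentions \(\Pos\Gamma\,\Shape\sigma\), and \(b = \lambda c.\iota(\Pos d\,\Shape\sigma^{+}(c))\).
\item \(J(C\sigma^{+++}, d\sigma^{+})\) has positions \(J^{\cat{C}}_{M'}(b')(x,y,p)(c)\). Here \(M'\) is a different motive, whose codomain mentions \(\Pos\Delta\), and \(b' = \lambda c.\iota((\Pos\sigma+\id)(\Pos d\,\Shape\sigma^{+}(c)))\).
\end{itemize}
Equating them is the commuting conversion \(g(J(b,p)) \seq J(g \circ b, p)\) for a variable path \(p\). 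In \(\ITT\) this holds only up to a path. It is stuck syntactically, and it fails in cubical models, where transport does not commute strictly with arbitrary maps. Naturality of \(J\) and \(\Pi\) in \(\cat{C}\) accounts only for the \(\Shape\sigma\) part, not for the post-composition with \(\Pos\sigma\). So as written your eliminator does not make \(u =_A v\) an identity type of the category with families \(\Poly{\cat{C}}\). The \(\beta\) computation you give is fine; stability is what fails.

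\paragraph{Repair, as in the paper.}
The fix starts from noticing that you put the dependence on \((y,p)\) in the wrong place. The codomain needs no transport. \(\Pos d\) lands in \(\Pos\Gamma + \Pos A(x)\), which is already a summand of \(((\Pos\Gamma+\Pos A(x))+\Pos A(y))+0\). So the inclusion \(\inc0 \circ \inc0\) works without any \(J\); only your ``second summand'' choice is the right one.

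What genuinely depends on \((y,p)\) is the domain \(\Pos C(x,y,p,J^{\cat{C}}(\Shape d))\). Use \(J\) of \(\cat{C}\) only to build a transport \(T \co \Pos C(x,y,p,J^{\cat{C}}(\Shape d)) \to \Pos C(x,x,\refl,\Shape d(x))\). Its motive involves only shapes and \(\Pos C\), and its base case is \(\id\). Then set the positions to \(\inc0\circ\inc0\circ\Pos d\circ T\).

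Now \(\Phi\) post-composes an ordinary composite. Since \(\Phi\circ\inc0\circ\inc0 \seq \inc0\circ\inc0\circ(\Pos\sigma+\id)\), stability reduces to naturality of \(J\) in \(\cat{C}\). Your \(\beta\) computation goes through unchanged, because \(T\) reduces to \(\id\) on \(\refl\).

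The paper does exactly this in the based, term-level formulation: \(\Pos{\elim_=^{B,u}(w,v,p)}(b) \coloneqq \Pos w(p_*b)\). There the codomain is just \(\Pos\Gamma\), and only \(p_*\) on \(\Pos B\) is defined by path induction on \(\Shape p\).
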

\begin{proof}
  For every \(u \in \Tm(\Gamma, A)\), \(B \in \Ty(\Gamma.A. u = \q)\), \(w \in \Tm(\Gamma, B[u, \refl[u]])\), \(v \in \Tm(\Gamma, A)\), and \(p \in \Tm(\Gamma, u =_A v)\), the eliminator \(\elim_=^{B,u}(w,v,p) \in \Tm(\Gamma, B[v, p])\) is given by
  \begin{alignat*}{3}
    &\Shape\Gamma &&\vdash \Shape{\elim_=^{B, u}(w, v, p)} \coloneqq \elim_=^{\Shape{B}, \Shape{u}}(\Shape{w}, \Shape{v}, \Shape{p}) \co \Shape{B}[\Shape{v}, \Shape{p}],  \\
    &\Shape\Gamma, b \co \Pos{B}[\Shape{v}, \Shape{p}, \Shape{\elim_=^{B, u}(w, v, p)}]  &&\vdash \Pos{\elim_=^{B, u}(w, v, p)} \coloneqq\Pos{w}(p_*b) \co \Pos{\Gamma},
  \end{alignat*}
  where \(\Shape{\Gamma}.\Pos{B}[\Shape{u}, \refl[\Shape{u}], \Shape{w}] \vdash \Pos{w} \co \Pos{\Gamma}\) and \(p_* \co \Pos{B}[\Shape{v}, \Shape{p}, \Shape{\elim_=^{B, u}(w, v, p)}] \to \Pos{B}[\Shape{u}, \refl[\Shape{u}], \Shape{w}]\) is defined by path induction on \(\Shape{p}\).
  The \(\beta\) rule for \(\elim_=\) follows from the same \(\beta\) rule in the base model.
\end{proof}

\begin{remark}
  Von Glehn takes \(\Pos{\paren{u =_A v}}\) to be the constant family \(\Pos{A}(\Shape{u}) + \Pos{A}(\Shape{v})\).
  We follow Kov{\'a}cs \cite{Kovacs2020polynomial} by instead taking the constant family \(0\), which simplifies our arguments.
  Since both definitions satisfy the rules of the identity type, they are equivalent---though not categorically equivalent.
  The equivalence suffices to imply that our main result \Cref{prop:poly-inherits-famcua} transfers to Von Glehn's identity types.
  This kind of flexibility is common to type formers without strict \(\eta\) laws in the polynomial model.
\end{remark}
We treat the construction of \(\Pi\) types in more detail.
Here, we depend crucially on extensivity of coproducts.
First, we define families over \(A_0 + A_1\) that are inhabited over exactly one of the inclusions.

\begin{definition}
  For types \(A_0, A_1\), define the families \(A_0 + A_1 \vdash \isInc{0} \coloneqq [1,0]\) and \(A_0 + A_1 \vdash \isInc{1} \coloneqq [0,1]\).
\end{definition}

\begin{lemma}\label[lemma]{prop:is-inc-is-strict-prop}
  For types \(A_0, A_1\), the families \(A_0 + A_1 \vdash  \isInc{0}, \isInc{1}\) are strict propositions.
\end{lemma}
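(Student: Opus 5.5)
We need to show that any two terms \(t, t'\) of \(\isInc{0}\) (in context \(\Gamma, u \co A_0 + A_1\)) are strictly equal, and likewise for \(\isInc{1}\). By symmetry we treat only \(\isInc{0} \seq [1,0]\).

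The plan is to apply the strict \(\eta\) rule for binary coproducts to the variable \(u\). This reduces the goal \(\Gamma, u \co A_0 + A_1 \vdash t(u) \seq t'(u) \co \isInc{0}(u)\) to checking two cases, one for each constructor.

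\begin{itemize}
\item \emph{Left case.} Over \(\inc{0}(a)\), the computation rule for large elimination gives \(\isInc{0}(\inc{0}(a)) \seq 1\). So \(t(\inc{0}(a))\) and \(t'(\inc{0}(a))\) are both terms of the unit type. They are then strictly equal by the strict \(\eta\) rule for \(1\), which is part of \(\ITT\).
\item \emph{Right case.} Over \(\inc{1}(a)\), we have \(\isInc{0}(\inc{1}(a)) \seq 0\). Here \(t(\inc{1}(a))\) and \(t'(\inc{1}(a))\) are terms of type \(0\) in context \(\Gamma, a \co A_1\). To conclude we need the strict \(\eta\) rule for \(0\), but that rule is stated for terms depending on a \emph{variable} of type \(0\).
\end{itemize}

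The right case is the only step needing care. We handle it by the standard trick: extend the context to \(\Gamma, a \co A_1, z \co 0\). In that context the two terms, weakened, are equal by the \(\eta\) rule for \(0\), applied to an arbitrary family over \(z\). We then instantiate \(z\) at \(t(\inc{1}(a))\) itself, which is strictly stable under substitution, giving
\[
  t(\inc{1}(a)) \seq t'(\inc{1}(a)).
\]

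If one also wants "strict proposition" to include that the generic term in extended contexts is unique, the same argument works with an extra telescope \(\Delta\) after \(u\). This is because both \(\eta\) rules are stated with an arbitrary trailing context \(\Delta\).
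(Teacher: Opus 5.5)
Your proof is correct and follows the paper's argument: the strict \(\eta\) rule for coproducts, together with the computation rule \([P_0,P_1](\inc{i}(a)) \seq P_i(a)\), reduces the claim to the strict \(\eta\) rules for \(1\) and \(0\). The paper phrases this more generally (\([P_0,P_1]\) is a strict proposition whenever \(P_0,P_1\) are) and puts variables \(v_0, v_1 \co [P_0,P_1]\) in the context, so the empty case is immediate from \(\eta\) for \(0\) and your substitution trick is not needed; this is a difference of presentation only.
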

\begin{proof}
  More generally, if \(A_i \vdash P_i\) are strict propositions for \(i \in \{0,1\}\) then so is \([P_0,P_1]\): we have  \(u \co A_0 + A_1, v_0, v_1 \co [P_0, P_1] \vdash v_0 \seq v_1\) directly from the strict \(\eta\) rule  since \([P_0, P_1](\inc{i}(u_i)) \seq P_i(u_i)\).
\end{proof}

\begin{proposition}[cf.~{\cite[\S4.3]{Glehn2015}}]\label[proposition]{prop:polynomial-model:pi-types}
  Given \(A \in \Ty[\Poly{\cat{C}}](\Gamma)\), \(B \in \Ty[\Poly{\cat{C}}](\Gamma.A)\), we have a dependent product \(\Pi_{A} B \in \Ty(\Gamma)\) given by
  \begin{alignat*}{3}
    &\Shape{\Gamma} &&\vdash \Shape{\paren{\Pi_AB}} &&\coloneqq \sum_{\Shape{f} \co \prod_{a \co \Shape{A}} \! \Shape{B}(a)} \prod_{\;a \co \Shape{A}} \Pos{B}(a, \Shape{f}(a)) \to 1 + \Pos{A}(a), \\
    &\Shape{\Gamma}, \pair{\ShapeShape f,\ShapePos {f}} \co \Shape{\paren{\Pi_AB}} &&\vdash \Pos{\paren{\Pi_AB}} &&\coloneqq \sum_{\substack{ a \co \Shape{A} \\ b \co \Pos{B}(a, \ShapeShape{f}(a)) }} {\isInc{0}}(\ShapePos{f}(a, b)).
  \end{alignat*}
  For \(f \in \Tm(\Gamma, \Pi_A B)\) we write \(\ShapeShape{f}\) and \(\ShapePos{f}\) for the first and second component of \(\Shape{f}\) respectively.
\end{proposition}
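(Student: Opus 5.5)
The plan is to exhibit, for every \(\Gamma\), a bijection
\[
  \Tm[\Poly{\cat{C}}](\Gamma, \Pi_A B) \;\cong\; \Tm[\Poly{\cat{C}}](\Gamma.A, B)
\]
that is natural in \(\Gamma\) with respect to substitutions \(\sigma \co \Delta \to \Gamma\). Such a natural bijection is the standard presentation of a dependent product in a category with families: \(\lambda\) is the inverse direction, application is the forward direction composed with \([a]\), and \(\beta\)/\(\eta\) are the two round-trip identities. I also need to check that \(\Pi_A B\) is stable under substitution. This is immediate, because both components are built from base-model type formers (\(\Sigma\), \(\Pi\), \(\to\), \(+\), \(1\), \(0\), \([-,-]\)), all of which commute strictly with substitution in \(\cat{C}\).

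First I unfold both sides using \Cref{def:polynomial-model:sections} and \Cref{prop:polynomial-model:cwa}.

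A term \(b \in \Tm(\Gamma.A, B)\) consists of:
\begin{itemize}
  \item a shape \(\Shape b \in \Tm(\Shape\Gamma.\Shape A, \Shape B)\);
  \item a position map \(\Shape\Gamma, a \co \Shape A, y \co \Pos B(a, \Shape b(a)) \vdash \Pos b(a,y) \co \Pos\Gamma + \Pos A(a)\).
\end{itemize}

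A term \(f \in \Tm(\Gamma, \Pi_A B)\) consists of:
\begin{itemize}
  \item a shape pair \(\ShapeShape f, \ShapePos f\);
  \item a position map \(\Shape\Gamma, (a, y, w) \co \sum_{a,y} \isInc{0}(\ShapePos f(a,y)) \vdash \Pos f \co \Pos\Gamma\).
\end{itemize}

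On shapes, \(\ShapeShape f\) corresponds to \(\Shape b\) by \(\lambda\)/application and strict \(\eta\) for \(\Pi\) in \(\cat{C}\). The remaining data matches up after currying: \(\ShapePos f\) curried over \(a,y\), and \(\Pos f\) curried over the \(\Sigma\), using strict \(\eta\) for \(\Sigma\) and \(\Pi\). So everything reduces to the following key lemma, stated in an arbitrary base context \(\Theta\) (here \(\Theta = \Shape\Gamma, a, y\)).

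\emph{Key lemma.} Terms \(\Theta \vdash t \co Y + Z\) are in strict bijection with pairs consisting of
\begin{itemize}
  \item a term \(\Theta \vdash c \co 1 + Z\), and
  \item a term \(\Theta, w \co \isInc{0}(c) \vdash d \co Y\),
\end{itemize}
and this bijection is stable under substitution in \(\Theta\).

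I would prove the key lemma as follows.

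\emph{Forward direction.} Set \(c \coloneqq (!+\id)(t)\). Obtain \(d\) by eliminating with \(\elim_+\) on the generalized variable \(u \co Y + Z\) in context \(\Theta, u, w \co \isInc{0}((!+\id)u)\). The motive is allowed to depend on \(u\), so this typechecks. By the \(\beta\) rule together with extensivity, \(\isInc{0}((!+\id)(\inc{i} z))\) computes strictly to \(1\) or \(0\). In the \(\inc{0}\) branch return \(y\); in the \(\inc{1}\) branch use \(\elim_0\) on \(w\). Finally substitute \(u \coloneqq t\).

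\emph{Backward direction.} Define \(t \coloneqq \elim_+(c)\) with
\begin{itemize}
  \item branch \(\inc{0}(x)\), \(x \co 1\), returning \(\inc{0}(d[\star/w])\), which is justified by \(\eta\) for \(1\) and \(\isInc{0}(\inc{0}x) \seq 1\);
  \item branch \(\inc{1}(z)\) returning \(\inc{1}(z)\).
\end{itemize}
Again the dependency of \(d\) on \(c\) is handled by generalizing over \(c\).

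\emph{Round trips.} These are strict equalities, and I check each by the strict \(\eta\) rule for binary coproducts. It reduces the comparison to the two constructor cases, where it holds by \(\beta\). For the second component, I additionally use that \(\isInc{0}\) is a strict proposition (\Cref{prop:is-inc-is-strict-prop}) together with the \(\eta\) rules for \(1\) and \(0\).

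\emph{Naturality.} Substitution by \(\sigma\) acts on \(\Poly{\cat{C}}\)-terms by substituting \(\Shape\sigma\) in shapes and postcomposing \(\Pos\sigma\) on positions. On the \(\Gamma.A\) side this postcomposition is \(\Pos\sigma + \id\); on the \(\Pi\) side it is postcomposition on \(d\) alone. The constructions in the key lemma commute with both operations, checked once more via coproduct \(\eta\).

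The main obstacle I expect is exactly this key lemma at the level of strict equalities. The second component of the pair is typed over \(\isInc{0}(c)\), which depends on the first, so one must carefully generalize over the scrutinee before eliminating. The proof also needs both extensivity, so that \([1,0]\) computes, and the strict \(\eta\) law, to identify \(t\) with its reconstruction. Everything else is routine currying in \(\cat{C}\).
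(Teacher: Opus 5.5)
Your proposal is correct and follows essentially the same route as the paper: reduce to a natural bijection \(\Tm(\Gamma.A, B) \cong \Tm(\Gamma, \Pi_A B)\), unfold both sides, and after currying apply pointwise the strict isomorphism \(\sum_{u \co 1 + Z} Y^{\isInc{0}(u)} \sequiv Y + Z\). The only differences are in presentation. The paper writes this as a chain of strict type isomorphisms, distributing \(\Pi\) over \(\Sigma\) and then \(\Sigma\) over \(+\) by extensivity. You instead build the bijection with explicit eliminators and also spell out the naturality check, which the paper leaves implicit.
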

\begin{proof}
  It suffices to define a natural isomorphism \(\lambda \co \Tm(\Gamma.A, B) \cong \Tm(\Gamma, \Pi_A B) \oc \app\).
  By \Cref{def:polynomial-model:sections}, an element of \(\Tm(\Gamma.A, B)\) corresponds to a pair
  \begin{mathpar}
    \Shape\Gamma \vdash \Shape{b} \co \prod_{a \co \Shape{A}} \Shape{B}(a),
    \qquad
    \Shape\Gamma \vdash \Pos{b} \co \prod_{a \co \Shape{A}} \Pos{B}(a, \Shape{b}(a)) \longrightarrow \Pos{\Gamma} + \Pos{A}(a).
  \end{mathpar}
  By \Cref{def:polynomial-model:sections} and the curry-uncurry isomorphism, an element of \(\Tm(\Gamma, \Pi_A B)\) corresponds to a triple
  \begin{mathpar}
    \Shape\Gamma \vdash \ShapeShape{{{f}}} \co \prod_{a \co \Shape{A}} \Shape{B}(a),
    \and
    \Shape\Gamma \vdash \ShapePos{{{f}}} \co \prod_{a \co \Shape{A}} \Pos{B}\paren[\big]{a, \ShapeShape{{{f}}}(a)} \longrightarrow 1 + \Pos{A}(a),
    \\
    \Shape\Gamma \vdash \Pos{f} \co \prod_{a \co \Shape{A}} \paren[\bigg]{ \sum_{b \co \Pos{B}\paren[]{a, \ShapeShape{{{f}}}(a)} } {\isInc{0}}(\ShapePos{{{f}}}(a, b)) } \longrightarrow \Pos\Gamma.
  \end{mathpar}
  Now, note that we have for all types \(X, Y, Z\) that
  \[
    \sum_{f \co X \to 1 + Y} \prod_{x \co X} Z^{\isInc{0}(fx)}
    \sequiv \prod_{x \co X} \sum_{u \co 1 + Y} Z^{\isInc{0}(u)} 
    \sequiv \prod_{x \co X} \paren[\bigg]{ \sum_{\star\co 1} Z^{\isInc{0}(\inc{0}(\star))} + \sum_{y \co Y} Z^{\isInc{0}(\inc{1}(y))} }
    \sequiv \paren{Z + Y}^{X}.
  \]
  Applying this strict isomorphism to the above yields the desired bijection.
\end{proof}

\begin{proposition}[cf.~{\cite[\S4.2]{Glehn2015}}]\label[proposition]{prop:polynomial-model:finite-coproducts}
  We have a nullary coproduct \(0 \in \Ty[\Poly{\cat{C}}]\) and binary coproduct \(A + B \in \Ty(\Gamma)\) for \(A,B \in \Ty[\Poly{\cat{C}}](\Gamma)\) given by
  \begin{alignat*}{6}
    &\Shape{\Gamma} &&\vdash \Shape{\paren{A + B}} &&\coloneqq \Shape{A} + \Shape{B},
    &\qquad\qquad\qquad&\Shape{\Gamma} &&\vdash \Shape{0} &&\coloneqq 0, \\
    &\Shape{\Gamma}.(\Shape{A} + \Shape{B}) &&\vdash \Pos{\paren{A + B}} &&\coloneqq [\Pos{A},\Pos{B}],
    &\qquad\qquad\qquad&\Shape{\Gamma}.0 &&\vdash \Pos{0} &&\coloneqq \elim_0 .
  \end{alignat*}
  These satisfy the strict \(\eta\) rule and are extensive.
\end{proposition}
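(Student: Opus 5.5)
The plan is to verify the universal property of finite coproducts in \(\Poly{\cat{C}}\) componentwise. Shapes are handled by the base model, and positions by extensivity. For the binary case, the introduction forms are \(\Shape{(\inc{i}(a))} \coloneqq \inc{i}(\Shape{a})\) with \(\Pos{(\inc{i}(a))} \coloneqq \Pos{a}\). This is well typed because \([\Pos{A},\Pos{B}](\inc{0}(\Shape{a})) \seq \Pos{A}(\Shape{a})\), which is exactly the extensivity equation in \(\cat{C}\).

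For elimination, take a motive \(P \in \Ty(\Gamma.(A+B))\) and branches \(u_0 \in \Tm(\Gamma.A, P\pair{\p,\inc{0}\q})\) and \(u_1\) similarly. On shapes I would use the base eliminator \(\elim_+^{\Shape{P}}(\Shape{u_0},\Shape{u_1},v)\). On positions, \(\Pos{u_i}\) is a map from \(\Pos{P}(\inc{i}(a), \Shape{u_i}(a))\) to \(\Pos{\Gamma}\p + \Pos{A}(a)\) (respectively \(\Pos{B}\)). Since \(\Pos{(\Gamma.(A+B))} \seq \Pos{\Gamma}\p + [\Pos{A},\Pos{B}]\), and \([\Pos{A},\Pos{B}](\inc{i}(a))\) reduces strictly, both position maps land in the same type after restriction to \(\inc{i}\). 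I then glue them with the base \(\elim_+\), applied to the family \(\Pos{P}(v, \elim_+(\ldots))\) over \(v \co \Shape{A}+\Shape{B}\). Here the \(\Delta\)-telescope form of the eliminator lets the position variable sit after \(v\).

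The \(\beta\) rules follow from the base \(\beta\) rules. For the strict \(\eta\) rule I would take \(t,t'\) over \(\Gamma.(A+B).\Delta\) agreeing on both inclusions. On shapes, this is base \(\eta\) in context \(\Shape{\Gamma}, v, \Shape{\Delta}\). On positions, terms are maps out of \(\Pos{P}\) into \(\Pos{\Gamma}\p + [\Pos A,\Pos B] + \Pos{\Delta}\), living in a context extended by \(v\). Base \(\eta\) applies again once the earlier shape equality is used to identify the domain types. Since \(\Poly{\cat{C}}\)-contexts are pairs, I must check that substitution of \(\inc{i}\) into such a context acts on positions as the identity up to strict reduction of \([\Pos{A},\Pos{B}]\). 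This uses \(\Pos{(\inc{i}(a))} = \Pos{a}\) together with \Cref{prop:polynomial-model:cwa}.

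Extensivity asks, for \(P_0\) over \(\Gamma.A\) and \(P_1\) over \(\Gamma.B\), for a type \([P_0,P_1]\). I would set \(\Shape{[P_0,P_1]} \coloneqq [\Shape{P_0},\Shape{P_1}]\). The position family is \(\Pos{[P_0,P_1]} \coloneqq [\Pos{P_0},\Pos{P_1}]\), taken over the context \(v, x \co [\Shape{P_0},\Shape{P_1}](v)\). Because this context depends on \(v\), I use the large elimination in the telescoped form \(\Gamma, v, \Delta(v)\). The strict computation rules are then inherited. The empty type is analogous, using \(\elim_0\) and the \(\eta\) law for \(0\) to define and equate everything, with \(\Pos{0}\) defined by large elimination from \(0\).

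The main obstacle I expect is the \(\eta\) rule on positions, where the trailing telescope \(\Delta\) contributes position components \(\Pos{\Delta}\). I would first reduce it to the case of empty \(\Delta\) by currying into \(\Pi\) types in \(\Poly{\cat{C}}\). If that fails to be strict, I would instead prove it directly with the telescoped base \(\eta\) rule.
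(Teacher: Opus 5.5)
Your proposal is correct. It is precisely the routine componentwise verification that the paper leaves implicit, since the paper gives only the definitions and cites Von Glehn: shapes come from the base coproducts, positions are glued by the base eliminator and large elimination in telescoped form, and \(\beta\), \(\eta\), and extensivity are inherited from the base once one notes that the position part of \(\pair{\p,\inc{i}\q}\) is \([\inc{0},\inc{1}]\), hence the identity by base \(\eta\). The one thing to add is that extensivity, like the eliminator, should be stated with a trailing telescope \(\Delta\) as in the paper; your method handles this verbatim, because types of \(\Poly{\cat{C}}\) depend only on \(\Shape{\Delta}\).
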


\begin{proposition}[cf.~{\cite[Proposition 7.1.6]{Moss2018}}]\label[proposition]{prop:polynomial-model:universe}
  We have a universe \(\U \in \Ty[\Poly{\cat{C}}](1)\) with decoding function \(\El \in \Ty[\Poly{\cat{C}}](1.\U)\) (which we usually leave implicit) given by
  \begin{alignat*}{6}
    &1 &&\vdash \Shape{\U} &&\coloneqq \sum_{\Shape{A} \co \U} \U^{\Shape{A}},
    &\qquad\qquad\qquad &1, \pair{\Shape{A},\Pos{A}} \co \Shape{\U} &&\vdash \Shape{\El} &&\coloneqq \Shape{A}, \\
    & 1, \pair{\Shape{A},\Pos{A}} \co \Shape{\U} &&\vdash \Pos{\U} &&\coloneqq 0,
    &\qquad\qquad\qquad &1, \pair{\Shape{A},\Pos{A}} \co \Shape{\U}, \Shape{a} \co \Shape{\El}\pair{\Shape{A},\Pos{A}} &&\vdash \Pos{\El} &&\coloneqq \Pos{A}(\Shape{a}).
  \end{alignat*}
\end{proposition}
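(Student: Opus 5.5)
The plan is to check directly that the displayed data form a (Tarski-style) universe in \(\Poly{\cat{C}}\). This means checking three things: \(\U\) and \(\El\) are well-formed types; terms of \(\U\) are exactly codes for ``small'' types of \(\Poly{\cat{C}}\), compatibly with substitution; and \(\U\) is closed under the type formers of \Cref{prop:polynomial-model:sigma-types,prop:polynomial-model:identity-types,prop:polynomial-model:pi-types,prop:polynomial-model:finite-coproducts}, with \(\El\) commuting strictly with the codes.

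\textbf{Well-formedness.} This step is immediate. \(\Shape{\U}\) and \(\Pos{\U}\) are types of \(\cat{C}\) in the right contexts, and so are \(\Shape{\El}\) and \(\Pos{\El}\). Both \(\U\) and \(\El\) live over the terminal context \(1\) of \(\Poly{\cat{C}}\), which is \((1,0)\). Over a context \(\Gamma\) they are obtained by substitution along the unique map to \(1\), and stability under substitution is inherited because \(\Ty[\Poly{\cat{C}}]\) is a restricted presheaf.

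\textbf{Terms of \(\U\).} Next I compute \(\Tm[\Poly{\cat{C}}](\Gamma,\U)\) using \Cref{def:polynomial-model:sections}.
\begin{itemize}
\item A term consists of a shape \(\Shape\Gamma \vdash \pair{A_s,A_p} \co \sum_{X \co \U}\U^{X}\), together with a position morphism \(\Shape\Gamma.0 \to \Shape\Gamma.\Pos\Gamma\) over \(\Shape\Gamma\).
\item By the strict \(\eta\) rule for \(0\), the position component is unique (namely \(\elim_0\)).
\item So a term of \(\U\) over \(\Gamma\) is exactly a pair of a small type \(\Shape\Gamma \vdash A_s \co \U\) and a small family \(\Shape\Gamma, a \co A_s \vdash A_p(a) \co \U\) in the base.
\item Substituting into \(\El\) then yields the \(\Poly{\cat{C}}\)-type with shapes \(A_s\) and positions \(A_p(a)\).
\end{itemize}
Thus \(\El\) identifies \(\Tm(\Gamma,\U)\) with the \(\Poly{\cat{C}}\)-types whose shape and position components both lie in the base universe, naturally in \(\Gamma\).

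\textbf{Closure under type formers.} For each type former I give the code by the same formula as in the corresponding proposition, but computed inside the base universe.
\begin{itemize}
\item \(\Sigma\) uses \(\Sigma\) and \(+\) in \(\U\).
\item Identity types use \(=\) and \(0\) in \(\U\).
\item Coproducts use \(+\) and \(0\), together with a case split into \(\U\) for the positions \([\Pos{A},\Pos{B}]\).
\item \(\Pi\) uses \(\Pi\), \(\Sigma\), \(\to\), \(1+(-)\) and the family \(\isInc{0}\).
\end{itemize}
Strict commutation \(\El(\mathrm{code}) \seq \mathrm{former}(\El\ldots)\) then follows from the corresponding strict decoding equations of the base universe. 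Naturality in \(\Gamma\) follows because every construction is stable under substitution in \(\cat{C}\).

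I expect the main obstacle to be the \(\Pi\) code (and similarly the coproduct positions). These need \(\isInc{0} = [1,0]\) and \([\Pos{A},\Pos{B}]\) as small families, that is, as \(\U\)-valued functions out of a coproduct. My first attempt is to define them by the ordinary coproduct eliminator \(\elim_+\) with motive \(\U\), whose \(\beta\) rules give the required strict decoding on constructors. I would then use the strict \(\eta\) rule for \(+\) to show that decoding agrees with the large-elimination family \([\,\cdot\,,\cdot\,]\) of the base, so that \(\El\) commutes strictly with the \(\Pi\) and \(+\) codes.
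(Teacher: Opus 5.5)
\textbf{Overall.} The paper gives no proof of this proposition; it only cites Moss. A direct verification like yours is therefore the natural route, and most of it is sound. \((1,0)\) is terminal. The \(\eta\) rule for \(0\) makes terms of \(\U\) exactly pairs of base codes. The codes for \(\Sigma\) and identity types decode strictly, because their positions use only \(+\), \(\Sigma\) and the constant \(0\) of the base universe. The unit type, which you omit, is equally easy.

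\textbf{The gap.} It lies in the step you single out as the main obstacle. The strict \(\eta\) rule for \(+\) in \(\cat{C}\) is a rule about \emph{terms}. It shows that a code over \(A_0+A_1\) is determined by its two restrictions: every such code is strictly equal to one of the form \(\elim_+^{\U}(c_0,c_1,u)\). It never identifies two \emph{types} over \(\Gamma.(A_0+A_1)\).

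Take \(\El(\elim_+^{\U}(c_0,c_1,u))\) and \([\El(c_0),\El(c_1)](u)\). They agree after substitution along \(\inc{0}\) and \(\inc{1}\). Yet all \(\eta\) yields here is the canonical strict isomorphism \(P \sequiv [P\inc{0},P\inc{1}]\) that the paper records, not a strict equality.

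The positions of \(A+B\), of \(\Pi_A B\) (through \(\isInc{0} \coloneqq [1,0]\)) and of \(0\) (through \(\elim_0\)) are all defined by the chosen large elimination of \(\cat{C}\). So your argument only shows that \(\U\) is closed under \(+\), \(\Pi\) and \(0\) up to strict isomorphism. A universe needs \(\El\) of the code to be the type former on the nose. By the remark above, no cleverer choice of codes can fix this.

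\textbf{What is missing.} You need an extra coherence hypothesis on the base model, which term-level \(\eta\) does not supply: decoding must commute with large elimination. That is, \(\El(\elim_+^{\U}(c_0,c_1,u)) \seq [\El(c_0),\El(c_1)](u)\), and similarly for \(0\). This follows, for instance, from a type-level \(\eta\) law saying every type \(P\) over \(\Gamma.(A_0+A_1)\) is strictly equal to \([P\inc{0},P\inc{1}]\).

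This hypothesis holds in the cubical base model used later in the paper. There, context extension by \(A_0+A_1\) is a coproduct of presheaves, and a type (with its filling structure) over such a coproduct amounts to a pair of types over the summands. With the hypothesis made explicit, the rest of your verification goes through.
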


\subsection{A dependent right adjoint}

The category \(\cat{C}\) is a reflective subcategory of \(\Poly{\cat{C}}\): the functor \(\Shape{-} \co \Poly{\cat{C}} \to \cat{C}\) has a fully faithful right adjoint \(\Mod \co \cat{C} \to \Poly{\cat{C}}\) given on objects by \(\Mod\Gamma \coloneqq (\Gamma, 0)\). 
Von Glehn~\cite{Glehn2015} uses this adjunction to define the model that we defined manually above, transferring it from \(\cat{C}\).
Both functors extend to pseudomorphisms of models in the sense of Kaposi, Huber, and Sattler \cite[\S4]{KaposiHuberSattler2019}.

\begin{lemma}
  The adjunction \(\Mod \co \cat{C} \leftrightarrows \Poly{\cat{C}} \, \colon\! \Shape{-}\) lifts to an adjunction of pseudomorphisms of models, with left adjoint projecting to shapes of types and terms, and the right adjoint given on types and terms by \(\Mod A \coloneqq (A, 0)\) and \(\Mod a \coloneqq (a, \elim_0)\).
\end{lemma}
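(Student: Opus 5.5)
We must show that \(\Shape{-} \dashv \Mod\) lifts to an adjunction of pseudomorphisms of models. The plan is to first check that each functor is a pseudomorphism, then lift the unit and counit and verify the triangle identities.

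\textbf{Step 1: \(\Shape{-}\) is a pseudomorphism.} On contexts it is the projection \(\Poly{\cat{C}} \to \cat{C}\). On types it sends \(A \mapsto \Shape{A}\), and on terms \(a \mapsto \Shape{a}\). Naturality in substitutions is immediate, since substitution in \(\Poly{\cat{C}}\) acts on shapes exactly as in \(\cat{C}\). By \Cref{prop:polynomial-model:cwa} we have \(\Shape{(\Gamma.A)} \seq \Shape{\Gamma}.\Shape{A}\), with \(\Shape{\p} \seq \p\) and \(\Shape{\q} \seq \q\). So the comparison map for context extension is the identity, and \(\Shape{-}\) is in fact a strict morphism of categories with families.

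\textbf{Step 2: \(\Mod\) is a pseudomorphism.} On contexts \(\Mod\Gamma \coloneqq (\Gamma, 0)\), and on a morphism \(\sigma\) it is \((\sigma, \id_0)\). On types \(\Mod A \coloneqq (A, 0)\) and on terms \(\Mod a \coloneqq (a, \elim_0)\). The latter is well typed because a position map \(0 \to \Pos{\Mod\Gamma} = 0\) is forced. Naturality holds because \(0\) is strictly stable under substitution (split fibred coproducts). Context extension gives
\[
  \Mod\Gamma.\Mod A = (\Gamma.A,\; 0\p + 0),
  \qquad
  \Mod(\Gamma.A) = (\Gamma.A,\; 0).
\]
These are not strictly equal, but they are isomorphic: the shape component is the identity and the position component is the canonical isomorphism \(0 + 0 \cong 0\), built from \(\elim_0\) and \([\elim_0, \elim_0]\). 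Both composites are the identity by the strict \(\eta\) rules for \(0\) and \(+\). One then checks that this isomorphism commutes with \(\p\) and \(\q\). This reduces to equalities of maps out of \(0\), which hold by the \(\eta\) law for \(0\). That is exactly the pseudomorphism condition.

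\textbf{Step 3: unit, counit and triangle identities.} The counit \(\Shape{\Mod\Gamma} = \Gamma\) is the identity. The unit \(\eta_\Gamma \co \Gamma \to \Mod\Shape{\Gamma}\) has shape component \(\id\) and position component \(\elim_0 \co 0 \to \Pos\Gamma\); this is a morphism in the fiberwise-opposite direction, as required. We must show these are transformations of pseudomorphisms, i.e.\ compatible with the type and term actions.

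\begin{itemize}
  \item \emph{Types.} For \(A \in \Ty[\Poly{\cat{C}}](\Gamma)\), we have \(\Mod\Shape{A} = (\Shape{A}, 0)\), while \(A\) reindexed along \(\eta_\Gamma\) requires only \(\Shape{A}\). The compatibility is the identity on shapes.
  \item \emph{Terms.} For \(a \in \Tm(\Gamma, A)\), the term \(\Mod\Shape{a}\) reindexed along \(\eta_\Gamma\) must agree with the canonical comparison. This again comes down to equality of maps out of \(0\), which holds by \(\eta\) for \(0\).
\end{itemize}

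The triangle identities hold on shapes trivially and on positions because all position components are maps out of \(0\).

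\textbf{Main obstacle.} The main work is bookkeeping: fixing the precise definition of pseudomorphism and of transformations of pseudomorphisms from Kaposi, Huber, and Sattler, and checking that the comparison isomorphism \(0 + 0 \cong 0\) satisfies their coherence conditions. I expect every such check to collapse via the strict \(\eta\) rule for \(0\), so the proof should be short once the definitions are laid out.
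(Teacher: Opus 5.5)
Your proposal follows the same route as the paper, whose one-line proof rests on exactly your Steps 1--2: \(\Shape{-}\) is strict because context extension in \(\Poly{\cat{C}}\) is defined shapewise, and \(\Mod\) is a pseudomorphism because \(\Mod\Gamma.\Mod A = (\Gamma.A, 0\p + 0)\) and \(0 + 0 \sequiv 0\). One caution on Step 3: \(A\) and \((\Mod\Shape{A})\eta_\Gamma = (\Shape{A}, 0)\) differ on positions (\(\Pos{A}\) versus \(0\)), so the unit's compatibility on types is not an identity but the comparison \(\eta_A\), with position component \([\inc{0}, \elim_0]\), which is invertible only for \(\Mod\)-modal types; the unit is therefore only a lax transformation on types, and that is all the subsequent dependent-right-adjoint corollary uses.
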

\begin{proof}
  Immediate from the definition of context extension and that \(0 + 0 \sequiv 0\).
\end{proof}

The right adjoint morphism induces a dependent right adjoint (cf.~\cite[\S7]{GratzerKavvosNuytsBirkedal2021}).

\begin{corollary}
  The operation \(\Ty(\Shape\Gamma) \to \Ty(\Gamma), A \mapsto \paren{\Mod A}\eta_\Gamma\) defines a dependent right adjoint.
\end{corollary}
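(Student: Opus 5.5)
We want to show that \(A \mapsto (\Mod A)\eta_\Gamma\) is a dependent right adjoint to the modality on contexts given by \(\Shape{-}\) (viewed as the left adjoint \(\Gamma \mapsto \Mod\Shape\Gamma\), or directly \(\Shape{-}\) on contexts). Concretely, for \(\Gamma \in \Poly{\cat{C}}\) and \(A \in \Ty[\cat{C}](\Shape\Gamma)\) we must exhibit a bijection, natural in \(\Gamma\),
\[
  \Tm[\cat{C}](\Shape\Gamma, A) \cong \Tm[\Poly{\cat{C}}](\Gamma, (\Mod A)\eta_\Gamma),
\]
together with stability of the type operation under substitution. The plan is to compute both sides explicitly and observe that the bijection is in fact an identity of sets up to the strict \(\eta\) rule for \(0\).

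First I compute \(\eta_\Gamma \co \Gamma \to \Mod\Shape\Gamma\). Its shape component is \(\id_{\Shape\Gamma}\), and its position component is the unique map \(0 \to \Pos\Gamma\) in \(\Ty*[\cat{C}](\Shape\Gamma)\), namely \(\elim_0\). Next, since \(\Mod A = (A,0)\) and substitution in \(\Poly{\cat{C}}\) acts only through shapes, \((\Mod A)\eta_\Gamma\) has shapes \(A\) and positions the constant family \(0\) over \(\Shape\Gamma.A\). Stability under substitution \(\sigma \co \Delta \to \Gamma\) follows because \(\eta\) is natural (\(\Shape{\eta_\Gamma\sigma} = \Shape\sigma = \Shape{(\Mod\Shape\sigma)\eta_\Delta}\)) and \(\Mod\) strictly preserves substitution of types, \(0\) being stable under substitution.

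Then, by \Cref{def:polynomial-model:sections}, a term of \((\Mod A)\eta_\Gamma\) over \(\Gamma\) is a pair consisting of \(a \in \Tm[\cat{C}](\Shape\Gamma, A)\) and a morphism \(0 \to \Pos\Gamma\) in \(\Ty*[\cat{C}](\Shape\Gamma)\). By the strict \(\eta\) rule for \(0\), the latter set is a singleton \(\{\elim_0\}\). Hence the map \(a \mapsto (a, \elim_0)\) is a bijection with inverse the first projection. Naturality in \(\Gamma\) is immediate on the shape component, since substitution of terms acts on shapes by \(\Shape\sigma\). On positions nothing needs checking, as there is only one element.

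The only potential obstacle is bookkeeping rather than mathematics: making sure the chosen definition of dependent right adjoint (as in Gratzer et al.) matches the data, in particular that the context-level left adjoint is \(\Mod\Shape{-}\) or \(\Shape{-}\) and that the unit \(\eta\) used to reindex agrees with the adjunction of the preceding lemma. I would settle this by invoking the general fact that a right adjoint pseudomorphism of models induces a dependent right adjoint by reindexing along the unit. The explicit computation above then merely confirms that the induced operation on types is the stated one, which is strict thanks to \(0 + 0 \sequiv 0\) and the \(\eta\) law for \(0\).
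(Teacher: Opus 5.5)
Your argument is correct, but your main line differs from the paper's. The paper gives an abstract chain of natural bijections. First, \(\Tm(\Gamma,(\Mod A)\eta_\Gamma)\) is identified with lifts of \(\eta_\Gamma\) through the display map \(\Mod\Shape\Gamma.\Mod A \to \Mod\Shape\Gamma\). These are transposed along \(\Shape{-} \dashv \Mod\), using that \(\Mod\) preserves context extension, to sections of \(\Shape\Gamma.A \to \Shape\Gamma\), which are exactly \(\Tm(\Shape\Gamma, A)\). This is precisely the ``general fact'' you mention only as a fallback in your last paragraph. Your main argument instead unfolds \Cref{def:polynomial-model:sections} directly: \((\Mod A)\eta_\Gamma\) is \((A,0)\), and the position component of a term is a map \(0 \to \Pos\Gamma\) in \(\Ty*[\cat{C}](\Shape\Gamma)\), which the strict \(\eta\) rule for \(0\) forces to be \(\elim_0\).

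The paper's route is more conceptual. It applies verbatim to any adjunction of pseudomorphisms of models whose right adjoint preserves context extension, and it makes clear that reindexing along the unit is what produces the dependent right adjoint. Your route is more elementary. It names the bijection explicitly as \(a \mapsto (a,\elim_0)\) and makes naturality trivial. It also never needs the identification \(\Mod\Shape\Gamma.\Mod A \cong \Mod(\Shape\Gamma.A)\) via \(0+0 \sequiv 0\), despite your closing remark invoking it.
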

\begin{proof}
  \(\Tm(\Gamma, \paren{\Mod A}\eta_\Gamma) \cong (\Poly{\cat{C}}/{\Mod \Gamma})(\Gamma, \Mod\Shape\Gamma.{\Mod A}) \cong (\cat{C}/\Shape\Gamma)(\Shape\Gamma, \Shape\Gamma.A) \cong \Tm(\Shape{\Gamma}, A)\).
\end{proof}

Henceforth, we denote by \(\Mod\) the dependent right adjoint, not the morphism.
The composite mapping \(A \mapsto \Mod(\Shape A)\) defines a pointed endofunctor on \(\Ty[\Poly{\cat{C}}](\Gamma)\) (and a lex operation in the sense of~\cite[Remark 5]{CoquandRuchSattler2021}).
If clear from context, we write just \(\Mod A\) for this composite.

\section{Familial categorical univalence in the polynomial model}\label{sec:categorical-univalence}

Fix an input model \(\cat{C}\) as in \cref{sec:von-glehn-polynomial}.
To study \(\FamCUA_{\U}\) in \(\Poly{\cat{C}}\), we analyze the wild category \(\U^I\) and its isomorphisms.
To simplify calculations, we redefine here \(\U^I(A, B) \coloneqq \prod_{u \co \sum_I A} B(\pi_0 u)\).
This is strictly isomorphic to the type \(\prod_{i \co I} A(i) \to B(i)\) in \Cref{defn:familial-categorical-univalence}, so the two versions of \(A \catIso[\U^I] B\) are related by an equivalence preserving the identity isomorphism up to path.
Hence, \(\FamCUA_{\U}\) is invariant under this change.

We start by unfolding the type \(\U^I(A, B)\) and composition in \(\U^I\).
A key observation is that the shape part of \(f \in \Tm(\Gamma, A \to B)\) consists of a function between shapes and a \emph{partial} function between positions.

\begin{lemma}\label[lemma]{prop:characterization-of-families-of-functions}
  For \(\Gamma \in \Poly{\cat{C}}\), \(I \in \Ty(\Gamma)\), and \(A, B \in \Ty(\Gamma.I)\), the type \(\U^I(A, B) \in \Ty(\Gamma)\) is given by
  \begin{alignat*}{3}
    &\Shape{\Gamma} &&\vdash \Shape{\U^I(A, B)} \seq \sum_{ \Shape{f} \co \U^{\Shape{I}}(\Shape A,\Shape B) } \prod_{\substack{ i \co \Shape I \\ a \co \Shape A(i) }} \Pos{B}(i, \Shape{f}(i,a)) \to 1 + \paren[\big]{ \Pos{I}(i) + \Pos{A}(i, a)  }, \\
    &\Shape{\Gamma}, \pair{\Shape{f}, \Pos{f}} \co \Shape{\U^I(A, B)} &&\vdash \Pos{{\U^I(A, B)}} \seq \sum_{ \substack{ i \co \Shape{I}, \, a \co \Shape{A}(i) \\ b \co \Pos{B}(i, a, \Shape{f}(a)) } } \isInc{0}(\Pos{f}(i, a, b))
  \end{alignat*}
\end{lemma}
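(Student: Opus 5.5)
This lemma is a direct unfolding of the interpretations of the type formers in \(\Poly{\cat{C}}\), so the plan is to compute \(\U^I(A,B) \seq \Pi_{\Sigma_I A} B'\) step by step. Here \(B'\) is the reindexing of \(B\) (viewed via \(\El\)) along the first projection \(\Gamma.(\Sigma_I A) \to \Gamma.I\), using the redefinition \(\U^I(A,B) \coloneqq \prod_{u \co \sum_I A} B(\pi_0 u)\) adopted at the start of this section.

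First, I would compute \(\Sigma_I A\) using \Cref{prop:polynomial-model:sigma-types}. Its shapes are \(\sum_{i \co \Shape I}\Shape A(i)\) and its positions are \(\Pos I(i) + \Pos A(i,a)\). Here \(A\) and \(B\) are decoded through \Cref{prop:polynomial-model:universe}, so their shape and position parts are the two components of a code, with \(\Pos{\U} \seq 0\).

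Second, I would reindex \(B\) along \(\pi_0\). The shape part of this substitution is the base-model projection. By \Cref{prop:polynomial-model:cwa}, its position part is a map into \(\Pos{(\Gamma.I)}\) built from coproduct inclusions. Because types only involve shapes (\(\Ty[\Poly{\cat{C}}]\) is restricted along the projection to \(\cat{C}\)), reindexing a type touches only the shape component. So \(\Shape{B'}(i,a) \seq \Shape B(i)\) and \(\Pos{B'}(i,a,b) \seq \Pos B(i,b)\).

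Third, I would plug these into \Cref{prop:polynomial-model:pi-types}. The shape of \(\Pi\) is \(\sum_{\Shape f \co \prod_{u\co \sum_{\Shape I}\Shape A}\Shape B(\pi_0 u)} \prod_{u} \Pos B(\ldots,\Shape f(u)) \to 1 + (\Pos I + \Pos A)\). The positions are \(\sum_{u, b}\isInc0(\ldots)\). Using the strict \(\eta\) rule for \(\Sigma\) types, the quantifier over \(u \co \sum_{i}\Shape A(i)\) splits strictly into \(i \co \Shape I, a \co \Shape A(i)\), and the domain of \(\Shape f\) is literally \(\U^{\Shape I}(\Shape A,\Shape B)\) under the redefinition. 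This gives the stated formulas.

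The only point needing care, and the main obstacle, is checking that every identification is a strict equality \(\seq\) and not merely a strict isomorphism. This comes down to two things: that reindexing \(\Shape B\) and \(\Pos B\) commutes strictly with substitution (split model), and that currying over a \(\Sigma\) type is a definitional equality thanks to the \(\eta\) rules assumed for \(\Sigma\). If some step is only a strict isomorphism, I would read the statement's \(\seq\) up to the evident strict isomorphism, which is harmless for the later arguments.
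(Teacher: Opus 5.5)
Your proposal is correct and follows the paper's own argument: unfold \(\U^I(A,B)\) using the interpretations of \(\Sigma\) and \(\Pi\) types (\Cref{prop:polynomial-model:sigma-types,prop:polynomial-model:pi-types}), with reindexing of types acting only through shapes. Two small corrections, neither of which affects the argument: currying over a \(\Sigma\) type is a strict isomorphism, not a judgmental equality of types, so the binders \(i \co \Shape I, a \co \Shape A(i)\) in the statement should be read as a single binder over \(\sum_{\Shape I}\Shape A\), as the redefinition intends; and the position component of the projection substitution \(\Gamma.\Sigma_I A \to \Gamma.I\) goes from \(\Pos{(\Gamma.I)}\) into \(\Pos{(\Gamma.\Sigma_I A)}\), not into \(\Pos{(\Gamma.I)}\).
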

\begin{proof}
  Direct unfolding using \Cref{prop:polynomial-model:sigma-types,prop:polynomial-model:pi-types}.
\end{proof}

\begin{lemma}\label[lemma]{prop:characterization-of-shapes-of-composite}
  If \(f \in \Tm(\Gamma, \U^I(B, C))\), \(g \in \Tm(\Gamma, \U^I(A, B))\), then the composite \(fg \in \Tm(\Gamma, \U^I(A, C))\) is given by
  \begin{alignat*}{3}
    \Shape\Gamma &\vdash \ShapeShape{\paren{fg}} &&\seq \ShapeShape{f} \cc \ShapeShape{g} \co \U^{\Shape{I}}(\Shape{A}, \Shape{C}),
    \\
    \Shape\Gamma &\vdash \ShapePos{\paren{fg}} &&\seq \lambda i.\lambda a.[\inc{0}, \inc{0},\ShapePos{g}(i,a)] \cc \ShapePos{f}(\ShapeShape{g}(i,a)) \co \smashoperator{ \prod_{\substack{ i \co \Shape I \\ a \co \Shape{A}(i) }} } \Pos{C}(i, \ShapeShape{\paren{fg}}(a)) \to 1 + \paren[\big]{ \Pos{I}(i) + \Pos{A}(i, a) }.
  \end{alignat*}
\end{lemma}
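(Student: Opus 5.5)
We compute the composite directly, unfolding the interpretation of composition in \(\Poly{\cat{C}}\). In the (redefined) wild category \(\U^I\), with \(\U^I(A,B) \seq \prod_{u \co \sum_I A} B(\pi_0 u)\), composition is the term \(fg \coloneqq \lambda u.\, f(\pi_0 u, g u)\). So the plan is to use \Cref{prop:characterization-of-families-of-functions} to write \(f\) and \(g\) as their shape and position components. We then trace the shape part of this composite through the interpretations of \(\lambda\), application and pairing given in \Cref{prop:polynomial-model:pi-types,prop:polynomial-model:sigma-types}, together with the context extension of \Cref{prop:polynomial-model:cwa}.

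\textbf{Step 1 (reading off the bijection).} First we make the bijection \(\lambda \co \Tm(\Gamma.X, Y) \cong \Tm(\Gamma, \Pi_X Y)\) from the proof of \Cref{prop:polynomial-model:pi-types} explicit on positions. A term \(b\) of \(Y\) over \(\Gamma.X\) has \(\Pos{b}(x,y) \co \Pos{\Gamma} + \Pos{X}(x)\). Under the strict isomorphism \((Z+Y)^X \sequiv \sum_{f\co X\to 1+Y}\prod_x Z^{\isInc0(fx)}\), the component \(\ShapePos{(\lambda b)}(x,y)\) becomes \(\inc{0}(\star)\) exactly where \(\Pos{b}\) lands in \(\Pos{\Gamma}\), and \(\inc{1}(p)\) where it lands at \(p \co \Pos{X}(x)\). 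This is the sense in which \(\ShapePos{}\) is a partial map of positions, with \(\inc0\) standing for ``passed to the context''.

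\textbf{Step 2 (application).} Dually, application \(\app(f, t)\) has shape \(\ShapeShape{f}(\Shape{t})\). Its position map sends \(c\) to \(\Pos{\Gamma}\) via \(\Pos{f}\) when \(\ShapePos{f}(\Shape{t},c)\) is \(\inc0\), and sends it through \(\Pos{t}\) when \(\ShapePos{f}(\Shape{t},c)\) is \(\inc1(p)\). We also need the position of the variable \(u = \pair{i,a}\) in the context \(\Gamma.\sum_I A\). Here the positions of \(\sum_I A\) are \(\Pos{I}(i) + \Pos{A}(i,a)\) by \Cref{prop:polynomial-model:sigma-types}, and \(\q\) acts as \(\inc1\) by \Cref{prop:polynomial-model:cwa}.

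\textbf{Step 3 (assembly).} Inside \(\lambda u\), the inner term \(gu\) has shape \(\ShapeShape{g}(i,a)\). Its position map is built from \(\ShapePos{g}(i,a)\) under Step 2. In that map, \(\inc0\) of \(\ShapePos{g}\) goes to \(\Pos{\Gamma}\), and \(\inc1\) goes into \(\Pos{I}+\Pos{A}\) via \(\q\). Next, the outer application uses \(\ShapePos{f}(\ShapeShape{g}(i,a))\) on \(c \co \Pos{C}\). There is also a pairing \(\pair{\pi_0 u, gu}\), whose positions \(\Pos I + \Pos B\) are handled by a copairing as in \Cref{prop:polynomial-model:cwa}. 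Under this copairing, the \(\Pos{I}\) summand is sent to \(\Pos{I}\) inside \(\Pos{I}+\Pos{A}\), and the \(\Pos{B}\) summand is routed through \(\ShapePos g\). Finally, applying Step 1 in reverse turns ``lands in \(\Pos\Gamma\)'' back into \(\inc0\). This yields exactly the post-composition of \(\ShapePos f(\ShapeShape g(i,a))\) with the copairing \([\inc0,\inc0,\ShapePos g(i,a)]\), read as a map \(1 + (\Pos I(i) + \Pos B(\ldots)) \to 1 + (\Pos I(i)+\Pos A(i,a))\). The shape equation \(\ShapeShape{(fg)} \seq \ShapeShape f \cc \ShapeShape g\) is immediate, since on shapes everything is the base model's own \(\lambda\) and application.

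\textbf{Main obstacle.} The hardest part is the bookkeeping in Step 3. We must check that every equation holds strictly, not merely up to path, and that the \(\Pos{I}\) summand is routed correctly. In particular, the component \(f\) receives \(\pi_0 u\), whose positions are those of \(I\), and this must be sent to the middle \(\inc0\) of the copairing, as the statement records. For the strictness, I would rely on the \(\eta\) rule for coproducts, in the form of \Cref{prop:is-inc-is-strict-prop} and extensivity. This lets us compare the two sides casewise on \(\inc0\)/\(\inc1\) of \(\ShapePos f\) and \(\ShapePos g\), reducing each case to a \(\beta\)-rule of \([-,-]\).
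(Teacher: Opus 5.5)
Your proposal is correct and takes essentially the paper's route: you pass through the \(\lambda\)--\(\app\) bijection of \Cref{prop:polynomial-model:pi-types} and compute positions with the substitution calculus of \Cref{prop:polynomial-model:cwa}. The only difference is packaging. The paper treats the composite as the single substitution \(u\pair{\p, v}\) of uncurried terms over \(\Gamma.I.A\), reads off \(\Pos{\paren{u\pair{\p,v}}} = [\inc{0}, \Pos{v}] \cc \Pos{u}\pair{\p, \Shape{v}}\), and transports back along \(\lambda\), whereas you split the same calculation into application, projection and pairing.
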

\begin{proof}
  We have that \(\Tm(\Gamma, \U^{I}(A, B)) \cong \Tm(\Gamma.I.A, B\p)\).
  Let \(u \in \Tm(\Gamma.I.B, C\p)\) and \(v \in \Tm(\Gamma.I.A, B\p)\) given by \(\Shape{u} \in \Tm(\Shape\Gamma.\Shape{I}.\Shape{B}, \Shape C\p)\), \(\Shape v \in \Tm(\Shape \Gamma.\Shape I.\Shape A, \Shape B\p)\), \(\Pos{u} \co \Pos{C}\pair{\p, \Shape{u}} \to \paren{ \Pos{\Gamma} + \Pos{I} }\p + \Pos{B}\) in \(\Ty*(\Shape\Gamma.\Shape I.\Shape{B})\), and \(\Pos{v} \co \Pos{B}\pair{\p, \Shape{v}} \to \paren{ \Pos{\Gamma} + \Pos{I} }\p + \Pos{A}\) in \(\Ty*(\Shape\Gamma.\Shape{I}.\Shape{A})\).
  The composite of \(u\) and \(v\) is by definition \(u\pair{\p, v} \in \Tm(\Gamma.I.A, C\p)\).
  Direct calculation using \Cref{def:polynomial-model:sections,prop:polynomial-model:cwa} shows that \(\Shape{\paren{u\pair{\p, v}}} = \Shape u\pair{\p, \Shape v}\) and \(\Pos{\paren{u\pair{\p, v}}} = [\inc{0}, \Pos{v}] \cc \Pos{u}\pair{\p, \Shape{v}} \co \Pos{C}\Shape u\pair{\p, \Shape v} \to \paren{ \Pos{\Gamma} + \Pos{I} }\p + \Pos{A}\).
  Composing with the \(\lambda\)-\(\app\) bijection from \Cref{prop:polynomial-model:pi-types} yields the desired description.
\end{proof}

\subsection{Categories of partial functions}

We now introduce an auxiliary wild category in \(\cat{C}\).
It can be viewed as the Kleisli category of the monad on \(\U^I\) given by coproduct with a fixed family \(J \co I \to \U\), though we will not explicitly develop this viewpoint.
To see that this even is a wild category in our setting, we rely on the strict properties of coproducts.

\begin{proposition}[In \(\cat{C}\)]
  For every family \(J \co I \to \U\), the following defines a wild category \(\U_J^I\):
  \begin{gather*}
    \paren{\U_J^I}_0 \coloneqq \U^I,
    \quad
    \paren{\U_J^I}_1(A, B) \coloneqq \prod_{i \co I} A(i) \to J(i) + B(i),
    \quad
    \paren{\id_A}_i \coloneqq \inc{1},
    \quad
    \paren{f \cc g}_i \coloneqq [\inc{0}, f_i] \cc g_i,
  \end{gather*}
  with unitors and associators given by reflexivity.
\end{proposition}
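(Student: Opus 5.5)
To prove that \(\U_J^I\) is a wild category, we need composites and identities of the right types, together with associators and unitors. Since the statement asks for these to be given by reflexivity, the whole task is to check that the three laws \((\id_B \cc f) \seq f\), \((f \cc \id_A) \seq f\), and \(h \cc (g \cc f) \seq (h \cc g) \cc f\) hold \emph{strictly}, pointwise in \(i \co I\). By the strict \(\eta\) rule for \(\Pi\) types it suffices to check each equation after applying to an arbitrary \(i \co I\) and \(a \co A(i)\). First we note that the definitions typecheck. For \(f_i \co B(i) \to J(i) + C(i)\) the copairing \([\inc{0}, f_i] \co J(i) + B(i) \to J(i) + C(i)\) is defined by \(\elim_+\), so \([\inc{0}, f_i] \cc g_i \co A(i) \to J(i) + C(i)\). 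Likewise \(\inc{1} \co A(i) \to J(i) + A(i)\).

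Right unitor: \((f \cc \id)_i(a) \seq [\inc{0}, f_i](\inc{1}(a)) \seq f_i(a)\) by the \(\beta\) rule for coproducts. Left unitor: \((\id \cc f)_i(a) \seq [\inc{0}, \inc{1}](f_i(a))\). This is where the strict \(\eta\) rule for binary coproducts enters. The term \(v \co J(i) + B(i) \vdash [\inc{0},\inc{1}](v)\) agrees strictly with \(v\) on both constructors, so \([\inc{0},\inc{1}](v) \seq v\) in context \(v \co J(i)+B(i)\). Substituting \(f_i(a)\) for \(v\) gives \((\id \cc f)_i(a) \seq f_i(a)\).

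For associativity we compute \((h \cc (g \cc f))_i(a) \seq [\inc{0}, h_i]([\inc{0}, g_i](f_i(a)))\) and \(((h \cc g) \cc f)_i(a) \seq [\inc{0}, [\inc{0}, h_i] \cc g_i](f_i(a))\). So it suffices to show, in context \(v \co J(i) + B(i)\), that
\[
  [\inc{0}, h_i]\bigl([\inc{0}, g_i](v)\bigr) \seq [\inc{0}, [\inc{0}, h_i] \cc g_i](v),
\]
and then substitute \(f_i(a)\) for \(v\). Again we use the \(\eta\) rule, checking on constructors. For \(v \seq \inc{0}(j)\), both sides reduce by \(\beta\) to \(\inc{0}(j)\), since the left side is \([\inc{0},h_i](\inc{0}(j)) \seq \inc{0}(j)\). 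For \(v \seq \inc{1}(b)\), both sides reduce to \([\inc{0},h_i](g_i(b))\).

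The only real obstacle is exactly these two uses of the strict \(\eta\) rule for coproducts, in the left unitor and in associativity. Without \(\eta\), one would only obtain propositional equalities that hold after case analysis on \(v\). Since \(\FE\) is unavailable, those could not be promoted to equalities of functions. We need such equalities of functions for the composites (as elements of \(\prod_{i} A(i) \to J(i)+D(i)\)) to carry reflexivity associators and unitors. Here \(\eta\) is used in a context extended by a variable of coproduct type, which the \(\eta\) rule as stated in \cref{sec:von-glehn-polynomial} permits. Hence all the required equations hold strictly, and reflexivity provides the associators and unitors.
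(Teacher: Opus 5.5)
Your proof is correct and follows the paper's approach. The paper's proof is just the one line ``direct calculation using the \(\eta\) rules for \(\Pi\) and \(+\)'', and you carry that calculation out explicitly: the coproduct \(\beta\) rule gives the right unitor, the coproduct \(\eta\) rule (checked on constructors in a context extended by \(v \co J(i)+B(i)\)) gives the left unitor and associativity, and \(\Pi\)-\(\eta\) turns these pointwise strict equalities into strict equalities of families, so reflexivity serves as the unitors and associators.
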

\begin{proof}
  Direct calculation using the \(\eta\) rules for \(\Pi\) and \(+\).
\end{proof}

Morphisms in \(\U_J^I\) can be thought of as families of partial functions, with \(J\) as a type of ``errors''.
We introduce a notion of \emph{total} morphism in \(\U_J^I\).
By \(\eta\) for coproducts, total morphisms coincide with morphisms in \(\U^I\) up to equivalence.
Crucially, all isomorphisms in \(\U^I_J\) will be total.

\begin{definition}\textbf{(In \({\cat{C}}\))}
  A morphism \(f \co \U^I_J(A, B)\) is \emph{total} if \(\isTot(f) \coloneqq \prod_{i\co I, a \co A(i)} \isInc{1}(f_ia)\) is inhabited.
  We define \(\U^I_{J, \tot}(A, B) \coloneqq \sum_{f \co \U^I_J(A, B)} \isTot(f)\).
\end{definition}

\begin{lemma}[In \({\cat{C}}\)]\label[lemma]{prop:being-total-is-proposition}
  For all \(f \co \U^I_J(A, B)\), the type \(\isTot(f)\) is a homotopy proposition.
\end{lemma}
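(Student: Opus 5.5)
The key difficulty is that \(\isTot(f)\) is a \(\Pi\) type, and without \(\FE\) we cannot conclude that a product of propositions is a proposition. So instead of reasoning pointwise, I will use the strict \(\eta\) rules. By \Cref{prop:is-inc-is-strict-prop}, each family \(J(i) + B(i) \vdash \isInc{1}\) is a \emph{strict} proposition: any two elements of \(\isInc{1}(u)\) are judgmentally equal. I will lift this strictness through \(\Pi\) using the strict \(\eta\) law for \(\Pi\) types, and only then pass to identity types.

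The argument runs as follows. Take variables \(t, t' \co \isTot(f)\). For \(i \co I\) and \(a \co A(i)\), the applications \(t\,i\,a\) and \(t'\,i\,a\) both inhabit \(\isInc{1}(f_i a)\). By \Cref{prop:is-inc-is-strict-prop}, applied in context with \(u \coloneqq f_i a\), they are strictly equal. By \(\eta\) for \(\Pi\) (twice) and congruence of \(\lambda\), we get
\[
  t \seq \lambda i.\lambda a.\,t\,i\,a \seq \lambda i.\lambda a.\,t'\,i\,a \seq t'.
\]
Hence \(\isTot(f)\) is itself a strict proposition in the context \(t, t' \co \isTot(f)\).

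Two points need care. First, \Cref{prop:is-inc-is-strict-prop} is stated for the family over a variable \(u \co J(i) + B(i)\). To apply it at the term \(f_i a\), I substitute \(u \mapsto f_i a\). Strict equality is stable under substitution, so the equality \(v_0 \seq v_1\) becomes the needed equality of the two inhabitants of \(\isInc{1}(f_i a)\). Second, \(\isInc{1}\) here is the large-eliminated family \([0,1]\) over \(J(i) + B(i)\), which lies in \(\U\). So everything happens inside \(\cat{C}\) with its strict coproduct \(\eta\) law, which we have assumed.

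Finally, a strict proposition is a homotopy proposition: given \(t, t'\) with \(t \seq t'\), the path \(\refl[t] \co t = t'\) is well typed. This exhibits \(\prod_{t, t' \co \isTot(f)} t = t'\), which is the definition of being a proposition. I expect no real obstacle beyond checking that the \(\eta\)-based strictness really passes under binders, as used in the second step. That is exactly what the strict \(\Pi\) \(\eta\) rule provides, with no appeal to \(\FE\).
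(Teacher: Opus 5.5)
Your proposal is correct and follows essentially the same route as the paper. The paper likewise observes that $\Pi_C P$ is a strict proposition whenever $P$ is, via $p \seq \lambda x.p(x) \seq \lambda y.q(y) \seq q$, and then applies \Cref{prop:is-inc-is-strict-prop} to conclude that $\isTot(f)$ is a strict, hence homotopy, proposition.
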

\begin{proof}
 \(\Pi_CP\) is a strict proposition if \(P\) is: for \(p, q \co \Pi_CP\) we have \(p \seq \lambda x.p(x) \seq \lambda y.q(y) \seq q\).
 Hence, it follows from \Cref{prop:is-inc-is-strict-prop} that \(\isTot(f)\) is even a strict proposition.
\end{proof}

\begin{lemma}[In \({\cat{C}}\)]\label[lemma]{prop:families-and-total-families-lemma}
  For all types \(I\) and \(I \vdash J, B\), the map \( \paren{\prod_{i \co I} B(i)} \to \prod_{i \co I} \sum_{ u \co J(i) + B(i) } \isInc{1}(u)\) given by \(f \mapsto \lambda i.(\inc{1}(f_iu), \star) \) is an equivalence.
\end{lemma}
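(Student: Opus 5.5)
The map \(\Phi \co f \mapsto \lambda i.(\inc{1}(f_i), \star)\) (I read \(f_iu\) in the statement as \(f_i\)) is an equivalence, and in fact a strict isomorphism. The plan is to build a strict inverse and check both composites using only the strict \(\eta\) rules. No function extensionality is needed, because we never have to compare functions pointwise up to a path.

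First step. Work fiberwise at a fixed \(i\). I claim there is a strict isomorphism
\[
  \sum_{u \co J(i) + B(i)} \isInc{1}(u) \;\sequiv\; B(i).
\]
This follows from the extensivity isomorphism \(\sum_{u \co A_0 + A_1} P(u) \sequiv \sum_{a \co A_0} P(\inc{0}(a)) + \sum_{b \co A_1} P(\inc{1}(b))\) stated in \cref{sec:von-glehn-polynomial}. Instantiate it with \(P \coloneqq \isInc{1} = [0,1]\). The left summand becomes \(\sum_{j \co J(i)} 0\), and the right summand becomes \(\sum_{b \co B(i)} 1\).

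Concretely, the backward map \(\psi \co \sum_{u} \isInc{1}(u) \to B(i)\) is defined by eliminating \(u\). In the \(\inc{0}\) case we use \(\elim_0\) on the witness. In the \(\inc{1}(b)\) case we return \(b\). The forward map is \(\phi \co b \mapsto (\inc{1}(b), \star)\).

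The composite \(\psi \phi\) is strictly the identity by the \(\beta\) rule for \(+\). For \(\phi\psi \seq \id\), use the \(\Sigma\)-\(\eta\) rule to reduce to a variable pair \((u, w)\). Then use the strict \(\eta\) rule for binary coproducts to split on \(u\):
\begin{itemize}
  \item In the \(\inc{0}\) branch the context contains \(w \co 0\), so any two terms agree by the \(\eta\) rule for \(0\).
  \item In the \(\inc{1}(b)\) branch we get \((\inc{1}(b), \star)\) versus \((\inc{1}(b), w)\) with \(w \co 1\). These agree by the unit \(\eta\) rule, or equivalently by \Cref{prop:is-inc-is-strict-prop}.
\end{itemize}

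Second step. Lift the fiberwise isomorphism through \(\prod_{i \co I}\). Define the inverse of \(\Phi\) as \(g \mapsto \lambda i.\psi(g_i)\). Both composites then reduce to the fiberwise identities under \(\lambda\) by \(\beta\) and \(\eta\) for \(\Pi\). This gives a strict isomorphism, hence in particular an equivalence with reflexive homotopies. Strictness matters here: postcomposition with a merely homotopy-inverse pair would not lift through \(\Pi\) without FE, but a strict isomorphism lifts trivially.

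The main obstacle is the \(\phi\psi \seq \id\) direction. It needs the coproduct \(\eta\) rule in a context that also contains the dependent witness \(w \co \isInc{1}(u)\). That is exactly the "\(\Delta(v)\)" telescope allowed in the stated \(\eta\) rules, and we also need \(\isInc{1}(\inc{1}(b)) \seq 1\) from large elimination. If one prefers to avoid this calculation, the stated extensivity strict isomorphism already packages it, and composing with \(\sum_{j} 0 \sequiv 0\), \(0 + X \sequiv X\) and \(\sum_{b} 1 \sequiv B(i)\) gives the same result.
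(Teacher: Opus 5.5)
Your proposal is correct and follows the paper's route. The paper likewise builds the fiberwise strict isomorphism \(B(i) \sequiv \sum_{u \co J(i)+B(i)} \isInc{1}(u)\), as a chain through \(\paren{\sum_{j} 0} + \paren{\sum_{b} 1}\) checked on both coproduct inclusions via the strict \(\eta\) rule, and then uses that \(\Pi\) sends families of strict isomorphisms to strict isomorphisms; your explicit inverse \(\psi\), with the case split carrying the witness \(w\) in the telescope, spells out that same verification.
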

\begin{proof}
  \(\Pi\) type formation sends families of strict isomorphisms to strict isomorphisms.
  For \(i \co I\) we have
  \[
    B(i)
    \sequiv \paren[\bigg]{ \sum_{j \co J(i)} 0 } + \paren[\bigg]{ \sum_{b \co B(i)} 1 }
    \sequiv \paren[\bigg]{ \sum_{j \co J(i)} \isInc{1}(\inc{0}(j)) } + \paren[\bigg]{ \sum_{b \co B(i)} \isInc{1}(\inc{1}(b)) }
    \sequiv \sum_{u \co J(i) + B(i)} \isInc{1}(u).
  \]
  In each step we use that to check the commutation out of a coproduct, it suffices to check after precomposing with both inclusions.
\end{proof}
\begin{corollary}[In \({\cat{C}}\)]\label[corollary]{prop:families-of-functions-and-total-families-of-functions-coincide}
  For all types \(I\) and \(J, A, B \co I \to \U\), the map \(\U^I(A, B) \to \U^I_{J,\tot}(A, B)\) given by \(f \mapsto (\inc{1} \cc f, \lambda i. \lambda a. \star)\) is an equivalence.
\end{corollary}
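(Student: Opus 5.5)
The map in question is the composite of two maps. The first is the map \(\U^I(A,B) \seq \prod_{i \co I} A(i) \to B(i) \to \prod_{i\co I}\prod_{a \co A(i)} \sum_{u \co J(i)+B(i)} \isInc{1}(u)\), sending \(f \mapsto \lambda i.\lambda a.(\inc{1}(f_i a),\star)\). The second is the strict isomorphism given by distributivity of \(\Pi\) over \(\Sigma\), \(\prod_{i,a}\sum_{u} \isInc{1}(u) \sequiv \sum_{f \co \U^I_J(A,B)} \prod_{i,a}\isInc{1}(f_ia) \seq \U^I_{J,\tot}(A,B)\). This isomorphism is strict because we assume strict \(\eta\) for \(\Sigma\) and \(\Pi\). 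I would first check that this composite is strictly equal to the stated map \(f \mapsto (\inc{1}\cc f, \lambda i.\lambda a.\star)\), which is a routine unfolding.

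It then suffices to show that the first map is an equivalence. I would reindex: set \(I' \coloneqq \sum_{i\co I} A(i)\), and let \(J'(i,a) \coloneqq J(i)\) and \(B'(i,a) \coloneqq B(i)\). Currying gives strict isomorphisms \(\prod_{i}\prod_{a} X(i,a) \sequiv \prod_{(i,a)\co I'} X(i,a)\), again by strict \(\eta\) for \(\Sigma\) and \(\Pi\). Under these, the first map becomes exactly the map of \Cref{prop:families-and-total-families-lemma} for \(I'\), \(J'\), \(B'\).

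\Cref{prop:families-and-total-families-lemma} shows that map is an equivalence; in fact its proof exhibits a strict isomorphism obtained by applying \(\Pi\) to a family of strict isomorphisms. So our map is a composite of strict isomorphisms and an equivalence, hence an equivalence.

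The only real point of care, and the step I expect to be the main obstacle, is that without \(\FE\) we cannot apply \(\Pi\) to a family of mere equivalences. Every step must therefore go through strict isomorphisms, which \(\Pi\) preserves. That is exactly what \Cref{prop:families-and-total-families-lemma} and the currying and distributivity isomorphisms provide. I would also need to verify that the strict isomorphism inside the lemma's proof, \(B(i) \sequiv \sum_{u} \isInc{1}(u)\), is given by \(b \mapsto (\inc{1}(b),\star)\). This follows from the \(\eta\) rule for \(1\) and the computation \(\isInc{1}(\inc{1}(b)) \seq 1\).
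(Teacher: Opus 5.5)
Your proposal is correct and is essentially the paper's proof: instantiate \Cref{prop:families-and-total-families-lemma} at the index type \(\sum_{i \co I} A(i)\) with the families \(J(i)\) and \(B(i)\), then transport along the strict curry--uncurry and \(\Pi\)--\(\Sigma\) distributivity isomorphisms. You are more explicit than the paper about why each step has to be a strict isomorphism rather than a mere equivalence in the absence of \(\FE\), but the route is the same.
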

\begin{proof}
  Instantiate \Cref{prop:families-and-total-families-lemma} with index type \(\sum_{i \co I} A\) and the families \((i, a) \co \sum_{I} A \vdash J(i), B(i)\).
  The result follows by composing with the strict curry-uncurry isomorphism.
\end{proof}

\begin{lemma}[In \(\cat{C}\)]\label[lemma]{prop:total-function-left-cancel}
  Given a pair of morphisms \(f \co \U^I_J(B, C)\), \(g \co \U^I_J(A, B)\), if \(f \cc g\) is total then so is \(g\).
\end{lemma}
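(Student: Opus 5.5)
Fix \(i \co I\) and \(a \co A(i)\); we have to produce an element of \(\isInc{1}(g_i a)\) from an element \(t \co \isInc{1}((f \cc g)_i a)\). Unfolding the composition in \(\U^I_J\), we get \((f \cc g)_i a \seq [\inc{0}, f_i](g_i a)\). The idea is to eliminate on the coproduct \(g_i a \co J(i) + B(i)\). Rather than working with a particular element, we would prove the more general statement
\[
  \prod_{u \co J(i) + B(i)} \isInc{1}([\inc{0}, f_i](u)) \to \isInc{1}(u),
\]
and then apply it to \(u \coloneqq g_i a\).

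We prove this general statement by coproduct elimination on \(u\), with the motive \(P(u) \coloneqq \isInc{1}([\inc{0}, f_i](u)) \to \isInc{1}(u)\).

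In the case \(u \seq \inc{0}(j)\), the \(\beta\) rule gives \([\inc{0}, f_i](\inc{0}(j)) \seq \inc{0}(j)\). Hence the hypothesis type is \(\isInc{1}(\inc{0}(j)) \seq [0,1](\inc{0}(j)) \seq 0\), and we return \(\elim_0\) applied to the hypothesis.

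In the case \(u \seq \inc{1}(b)\), the target type is \(\isInc{1}(\inc{1}(b)) \seq 1\). We simply return \(\star\), ignoring the hypothesis.

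Finally, \(\lambda i.\lambda a.\lambda t.\) (the above applied to \(g_i a\) and \(t\)) turns a witness of \(\isTot(f \cc g)\) into one of \(\isTot(g)\).

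The only point needing care is that eliminating on \(u\) requires the motive to depend on \(u\) through both \(\isInc{1}\) and the composite \([\inc{0}, f_i](u)\). This is an ordinary dependent elimination into a family of types that is well-formed by large elimination (extensivity). The strict computation of \([P_0,P_1]\) on constructors, together with the \(\beta\) rule for \(\elim_+\), makes both branch types reduce as claimed. No homotopical reasoning and no \(\FE\) is needed. Since \(\isTot\) is a proposition (\cref{prop:being-total-is-proposition}), it does not matter which witness we produce.
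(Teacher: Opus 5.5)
Your proof is correct and is essentially the paper's argument. Both reduce to a case split on \(g_i a \co J(i) + B(i)\); the paper phrases it contrapositively, characterizing totality of \(h\) as \(\prod_{i \co I, a \co A(i)} \isInc{0}(h_i a) \to 0\) and using \(\isInc{0}(g_i a) \to \isInc{0}((f \cc g)_i a)\), whereas your direct proof of \(\isInc{1}((f \cc g)_i a) \to \isInc{1}(g_i a)\) skips that small detour through the negative characterization.
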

\begin{proof}
  A morphism \(h \co \U^I_J(A, B)\) is total if and only if \(\prod_{i\co I, a \co A(i)} \isInc{0}(h_ia) \to 0\).
  For \(i \co I\), \(a \co A(i)\) we have \(\isInc{0}(g_ia) \to \isInc{0}((f \cc g)_i (a))\) by the definition of \(\cc\), and so if \(f \cc g\) is total we get \(\isInc{0}(g_ia) \to 0\).
\end{proof}

\begin{corollary}[In \(\cat{C}\)]\label[corollary]{prop:partial-equivalences-are-total}
  All isomorphism in \(\U^I_J\) are total.
\end{corollary}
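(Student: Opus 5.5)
The claim is that every isomorphism in \(\U^I_J\) is total. I would derive it directly from \Cref{prop:total-function-left-cancel}, applied to a retraction of the isomorphism.

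Let \(g \co \U^I_J(A,B)\) be an isomorphism. Choose a retraction \(r \co \U^I_J(B,A)\) with an equality \(p \co r \cc g = \id_A\).

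The identity \(\id_A\) is total. Its components are \(\inc{1}\), and \(\isInc{1}(\inc{1}(a)) \seq [0,1](\inc{1}(a)) \seq 1\) strictly, so the constant function \(\lambda i.\lambda a.\star\) inhabits \(\isTot(\id_A)\).

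Totality is a type family \(\isTot \co \U^I_J(A,A) \to \U\). So we can transport along \(p^{-1}\) to obtain an element of \(\isTot(r \cc g)\). Note that this uses only transport along a path in \(\U^I_J(A,A)\) itself, not function extensionality: \(\isTot\) is an honest predicate on morphisms, and the equality \(p\) is an equality of morphisms.

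Finally, apply \Cref{prop:total-function-left-cancel} with \(f \coloneqq r\) and \(g\). Since \(r \cc g\) is total, \(g\) is total.

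No step is a real obstacle. The only point needing care is that the path is between morphisms of the wild category, so transport is legitimate and no homotopy-to-path conversion is required. We also never need the section of \(g\); only the retraction is used.
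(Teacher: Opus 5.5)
Your proposal is correct and is essentially the paper's argument: transport totality of \(\id_A\) along the path \(r \cc g = \id_A\), then apply \Cref{prop:total-function-left-cancel} with \(f \coloneqq r\) to conclude that \(g\) is total. The paper's two-line proof leaves implicit that only the retraction is used and how the left-cancellation lemma enters; your write-up makes both explicit.
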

\begin{proof}
  By induction, totality transfers along paths.
  Hence, the claim follows since \(\id\) is total.
\end{proof}

\begin{proposition}[In \(\cat{C}\)]\label[proposition]{prop:type-of-partial-equivalences-is-equivalent-to-equivalences}
    For all types \(I \co \U\) and families \(J, A, B \co \U^{I}\) we have \(\paren{A \catIso[\U^{I}] B} \simeq \paren{A \catIso[\U^{I}_J] B}\).
\end{proposition}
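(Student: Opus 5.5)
The plan is to compare the two wild categories through the "totalization" map \(\Phi \co \U^I(A,B) \to \U^I_J(A,B)\), \(\Phi(f) \coloneqq \lambda i.\,\inc{1} \cc f_i\). I will show that \(\Phi\) is strictly functorial, that it is an embedding on hom-types, and that its image contains every morphism that can occur in an isomorphism of \(\U^I_J\). Functoriality is a strict computation. We have \(\Phi(\id_A)_i \seq \inc{1} \seq (\id_A)_i\) in \(\U^I_J\). For composites, \((\Phi f \cc \Phi g)_i \seq [\inc{0}, \inc{1}\cc f_i] \cc \inc{1} \cc g_i \seq \inc{1} \cc f_i \cc g_i \seq \Phi(f \cc g)_i\), using the \(\beta\) rule for \(+\) and \(\eta\) for \(\Pi\).

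Next I show \(\Phi\) is an embedding. By \Cref{prop:families-of-functions-and-total-families-of-functions-coincide}, \(f \mapsto (\Phi f, \lambda i.\lambda a.\star)\) is an equivalence \(\U^I(A,B) \simeq \U^I_{J,\tot}(A,B)\). The first projection \(\U^I_{J,\tot}(A,B) \to \U^I_J(A,B)\) is an embedding because \(\isTot\) is a proposition (\Cref{prop:being-total-is-proposition}). Hence \(\Phi\), as the composite of these two maps, is an embedding. So for \(h,k \co \U^I(A,A)\) the map \(\mathsf{ap}_\Phi \co (h = k) \to (\Phi h = \Phi k)\) is an equivalence. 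Combined with strict functoriality, this gives equivalences \((f \cc s = \id_B) \simeq (\Phi f \cc \Phi s = \id_B)\) and \((r \cc f = \id_A) \simeq (\Phi r \cc \Phi f = \id_A)\).

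Now I unfold \(A \catIso[\U^I] B\) as \(\sum_{f} \sum_{s} (fs = \id) \times \sum_{r} (rf = \id)\). Since \(\Sigma\) types respect equivalences in both arguments, the previous step turns this into the same \(\Sigma\) type in \(\U^I_J\), with \(f,s,r\) ranging over \(\U^I_{J,\tot}\). It remains to show that the totality conditions on \(f,s,r\) can be dropped. Because totality is a proposition, dropping them is an equivalence as soon as every \(\U^I_J\)-isomorphism datum \((f,s,r,p,q)\) has \(f,s,r\) total. Totality of \(f\) is \Cref{prop:partial-equivalences-are-total}. For \(s\), the composite \(fs = \id\) is total (totality transfers along paths), so \(s\) is total by \Cref{prop:total-function-left-cancel}. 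For \(r\), note that \(r\) is itself an isomorphism: by the usual argument \(r = r(fs) = (rf)s = s\), so \(r\) has \(f\) as both section and retraction. Alternatively, \(rf = \id\) is an isomorphism, so 2-out-of-3 (\Cref{prop:isomorphisms-satisfy-2-out-of-3}) applies. Either way \(r\) is total by \Cref{prop:partial-equivalences-are-total}.

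The step I expect to be the main obstacle is keeping the bookkeeping honest. The equivalence must be assembled from \(\Sigma\)-congruences, and the path types \(fs = \id\) in \(\U^I\) must correspond to those in \(\U^I_J\) via \(\mathsf{ap}_\Phi\). This relies on the identifications \(\Phi f \cc \Phi s \seq \Phi(fs)\) and \(\Phi\id \seq \id\) being strict, so that no transport is needed. Without \(\FE\) we cannot argue pointwise on functions, so everything has to go through the embedding property of \(\Phi\) rather than through homotopies.
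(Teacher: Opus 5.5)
Your proposal is correct and follows essentially the same route as the paper. The paper also uses the strictly functorial totalization map from \(\U^I\) to \(\U^I_J\) and deduces from \Cref{prop:families-of-functions-and-total-families-of-functions-coincide} and \Cref{prop:being-total-is-proposition} that it is an embedding, so that \(\mathsf{ap}\) is an equivalence on the path components. It then uses \Cref{prop:partial-equivalences-are-total}, together with the fact that sections and retractions of isomorphisms are isomorphisms, to show that the fibers of the induced map on isomorphism types are inhabited propositions. Your \(\Sigma\)-congruence bookkeeping, and your separate treatment of \(s\) via left cancellation and of \(r\) via \(r = s\), spell out in more detail what the paper states compactly.
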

\begin{proof}
  We have a chain of maps \(u \co \U^I(A, B) \to \U^I_{J,\tot}(A, B) \to \U_J^I(A, B)\).
  The map \(u\) strictly preserves identities and composition and therefore lifts to subtypes of isomorphisms (recall that being an isomorphism is a proposition by \Cref{prop:being-isomorphism-is-proposition}) via \(v \co (A \catIso[\U^I] B) \to (A \catIso[\U_J^I] B)\), \(\pair{f, s, S, r, R} \mapsto \pair{uf, us, \ap{u}S, ur, \ap{u}R}\).
  Our goal is to show that this restriction is an equivalence.
  The fibers of \(u\) (and thus also \(v\)) are propositions, since the first component of \(u\) is an equivalence by \Cref{prop:families-of-functions-and-total-families-of-functions-coincide} and the second component is an embedding by \Cref{prop:being-total-is-proposition}.
  Hence, \(\ap{u} \co \paren{ f =_{\U^I(A, B)} g } \to \paren{ uf =_{\U^I_{J}(A, B)} ug }\) is an equivalence for all \(f, g \co \U^I(A, B)\).
  The fibers of \(u\) are inhabited over isomorphisms and their sections and retractions by \Cref{prop:partial-equivalences-are-total} and the fact that sections and retractions of isomorphisms are isomorphisms.  
  Thus, the fibers of \(v\) are inhabited.
\end{proof}

\subsection{Familial categorical univalence}
\label{sec:familial-categorical-univalence}

To verify that \(\Poly{\cat{C}}\) inherits \(\FamCUA_{\U}\), we analyze the wild category \(\U^I\) in this model.
In \(\Poly{\cat{C}}\), we have for \(I \co \U\) the type \(\U^I\) of \(I\)-indexed families.
Over \(A, B \co \U^I\), we have the type \(A \CEq_{\U^I} B\) of isomorphisms between them.
We analyze the shapes of these types (i.e.\ the image under \(\Shape{-}\)) in the base model \(\cat{C}\).
For clarity, we use different notation: define \(I \co \U \vdash \Fam(I) \coloneqq \U^I\) and \(I \co \U, A, B \co \Fam(I) \vdash \CEq*(I, A, B) \coloneqq (A \CEq_{\U^I} B)\).
Now \(\Shape{\U}\) is a closed type of \(\cat{C}\), \(\Shape{\Fam}\) is a family of types over it, \(\Shape{\CEq*}\) is a family over \(\Shape{\U}\) and two copies of \(\Shape\Fam\), and \(\Pos{\CEq*}\) is a family over \(\Shape{\CEq*}\).

\begin{remark}
  Note that the following data is more ordered than it might seem at first.
  \eqref{prop:characterization-of-Familial-categorical-equivalence-in-polynomial-model:shape-shape} is exactly the data of an isomorphism in the wild category \(\U^{\Shape{I}}\).
  \eqref{prop:characterization-of-Familial-categorical-equivalence-in-polynomial-model:shape-pos} is the data of an isomorphism in the wild category \(\U_{K}^{J}\) for some \(J, K\), modulo the first equivalence.
  Viewing the morphisms in this wild category again as partial functions, the data given by~\eqref{prop:characterization-of-Familial-categorical-equivalence-in-polynomial-model:pos} are exactly the inputs on which the functions are not defined.
\end{remark}

\begin{lemma}\label[lemma]{prop:characterization-of-Familial-categorical-equivalence-in-polynomial-model}
  Let \(I \in \Tm[\Poly{\cat C}](\Gamma,\U)\) and \(A, B \in \Tm[\Poly{\cat{C}}](\Gamma, \Fam(I))\).
  The shapes of the type \(\CEq*(I, A, B) \in \Ty[\Poly{\cat{C}}](\Gamma)\) are equivalent to the iterated \(\Sigma\) type in context \(\Shape\Gamma\) given by the following:
  \begin{equation}\begin{gathered}\label{prop:characterization-of-Familial-categorical-equivalence-in-polynomial-model:shape-shape}
    s \co \prod_{i \co \Shape{I}} \Shape{B}(i) \longrightarrow \Shape{A}(i),
    \qquad
    f \co \prod_{i \co \Shape{I}} \Shape{A}(i) \longrightarrow \Shape{B}(i),
    \qquad
    r \co \prod_{i \co \Shape{I}} \Shape{B}(i) \longrightarrow \Shape{A}(i),
    \\
    S \co {f} \cc {s} = \id \quad\text{in}\quad \U^{\Shape{I}}(\Shape{B},\Shape{B}),
    \qquad
    R \co {r} \cc {f} = \id \quad\text{in}\quad \U^{\Shape{I}}(\Shape{A},\Shape{A}),
  \end{gathered}\end{equation}
  as well as the following functions and paths over this equivalence
  \begin{equation}\begin{gathered}\label{prop:characterization-of-Familial-categorical-equivalence-in-polynomial-model:shape-pos}
    \widetilde{f} \co {\prod_{\substack{ i \co \Shape{I} \\ a \co \Shape{A}(i) }}} \Pos{B}({f}(i,b)) \longrightarrow \paren[\big]{1 + \Pos{I}(i)} + \Pos{A}(i,a),
    \\
    \widetilde{s} \co \smashoperator{\prod_{\substack{ i \co \Shape{I} \\ b \co \Shape{B}(i) }}} \Pos{A}({s}(i,b)) \longrightarrow \paren[\big]{1 + \Pos{I}(i)} + \Pos{B}(i,b),
    \quad
    \widetilde{r} \co \smashoperator{\prod_{\substack{ i \co \Shape{I} \\ a \co \Shape{B}(i) }}} \Pos{A}({r}(i,b)) \longrightarrow \paren[\big]{1 + \Pos{I}(i)} + \Pos{B}(i,b),
    \\
    S_*\paren[\big]{\widetilde{s} \cc \paren{\widetilde{f}{s}} } = \id \quad\text{in}\quad \U^{\sum_{\Shape{I}}\!\Shape{B}}_{1 + \Pos{I}}(\Pos{B}, \Pos{B}),
    \qquad
    R_*\paren[\big]{\widetilde{f} \cc \paren{\widetilde{r}{f}} } = \id \quad\text{in}\quad \U^{\sum_{\Shape{I}}\!\Shape{A}}_{1 + \Pos{I}}(\Pos{A}, \Pos{A}),
  \end{gathered}\end{equation}
  where \((\widetilde{f}{s})(i, b, u) \coloneqq \widetilde{f}(i, {s}(i, b), u)\) and \((\widetilde{r}{f})(i, a, u) \coloneqq \widetilde{r}(i, {f}(i, a), u)\).
  The family of positions of \(A \CEq_{\U^I} B\) is equivalent to the following family over the above characterization of the type of shapes
  \begin{equation}\label{prop:characterization-of-Familial-categorical-equivalence-in-polynomial-model:pos}
    \sum_{\substack{i \co \Shape{I},\, a \co \Shape{A}(i) \\ u \co \Pos{B}(a, {f}(i, a))}} \isInc{0}(\widetilde{f}(i, a, u))
    \quad+
    \sum_{\substack{i \co \Shape{I},\, b \co \Shape{B}(i) \\ u \co \Pos{A}(b, {s}(i, b))}} \isInc{0}(\widetilde{s}(i, b, u))
    \quad+
    \sum_{\substack{i \co \Shape{I},\, b \co \Shape{B}(i) \\ u \co \Pos{A}(b, {r}(i, b))}} \isInc{0}(\widetilde{r}(i, b, u)).
  \end{equation}
\end{lemma}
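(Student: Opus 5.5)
The plan is to unfold the interpretation of \(\CEq*(I,A,B)\) in \(\Poly{\cat{C}}\) type former by type former, and then to regroup the resulting nested \(\Sigma\) types with strict isomorphisms (strict \(\eta\) for \(\Sigma\), \(\Pi\) and \(+\)). Only at the very end will we apply one genuine equivalence, namely the characterization of paths in a \(\Sigma\) type.

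First unfold the definition: \(A \CEq_{\U^I} B\) is the iterated \(\Sigma\) type of \(f \co \U^I(A,B)\), \(s, r \co \U^I(B,A)\), \(S \co fs = \id\) and \(R \co rf = \id\). By \Cref{prop:polynomial-model:sigma-types}, its shapes are the iterated \(\Sigma\) of the shapes of these components. Its positions are the coproduct of the positions of the components, with the factors over \(S\) and \(R\) contributing \(0\) by \Cref{prop:polynomial-model:identity-types}.

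Next apply \Cref{prop:characterization-of-families-of-functions} to the three function components. Each of \(\Shape{f}, \Shape{s}, \Shape{r}\) then splits into a shape part (a morphism of \(\U^{\Shape I}\)) and a partial position part, written \(\widetilde f, \widetilde s, \widetilde r\), landing in \(1 + (\Pos I + \Pos A)\) or the analogous type for \(B\). Reassociating \(1 + (\Pos I + \Pos A) \sequiv (1+\Pos I) + \Pos A\) exhibits \(\widetilde f\) as a morphism of \(\U^{\sum_{\Shape I}\Shape A}_{1+\Pos I}(\Pos B f, \Pos A)\) after uncurrying. The position families of the three function components are exactly the three summands of \eqref{prop:characterization-of-Familial-categorical-equivalence-in-polynomial-model:pos}. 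Adding the two \(0\) summands from \(S, R\) and using \(X + 0 \sequiv X\) gives the stated positions, once we check that they are transported along the equivalence of shapes (they depend only on \(f,s,r,\widetilde f,\widetilde s,\widetilde r\), which the equivalence leaves unchanged).

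For the identity components, the shape of \(fs =_{\U^I(B,B)} \id\) is the identity type in \(\cat{C}\) between the shapes of \(fs\) and \(\id\). \Cref{prop:characterization-of-shapes-of-composite} computes the shape of \(fs\): it is the pair of \(f \cc s\) and \(\lambda i.\lambda b.[\inc0,\inc0,\widetilde s(i,b)] \cc \widetilde f(i, s(i,b))\). After reassociating coproducts, this second component is exactly the composite \(\widetilde s \cc (\widetilde f s)\) in \(\U^{\sum_{\Shape I}\Shape B}_{1+\Pos I}\). The shape of \(\id\) is \((\id, \inc1)\), the identity of that partial wild category. A path between pairs in a \(\Sigma\) type is equivalent to a path \(S\) in the first component together with a path over it in the second component, i.e.\ \(S_*(\widetilde s \cc (\widetilde f s)) = \id\) (Rijke, Chapter 9 characterization of \(\Sigma\) identity types; no \(\FE\) is needed). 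The same argument handles \(R\). Since \(\Sigma\) types respect equivalences in both arguments, this assembles into the stated equivalence of shapes.

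The main obstacle is bookkeeping. We have to verify that the second component of the shape of \(fs\) really coincides strictly with the partial-category composite after the reassociation \(1+(\Pos I+\Pos B) \sequiv (1+\Pos I)+\Pos B\) and after uncurrying over \(\sum_{\Shape I}\Shape B\). Here the transport \(S_*\) is over a family indexed by the morphism \(f \cc s\) (through \(\Pos B(i, (fs)(i,b))\)). I would check the strict equation on inclusions using the strict \(\eta\) rule for coproducts, in the same way as \Cref{prop:families-and-total-families-lemma}, so that every step except the \(\Sigma\)-path characterization is a strict isomorphism.
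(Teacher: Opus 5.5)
Your proposal is correct and follows essentially the same route as the paper's proof. You unfold \(\CEq*(I,A,B)\) as a \(\Sigma\) type and apply \Cref{prop:characterization-of-families-of-functions} and \Cref{prop:characterization-of-shapes-of-composite}; you use that identity types in \(\Poly{\cat{C}}\) have identity types of shapes as shapes and empty positions; and you split paths in \(\Sigma\) types.

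One small inaccuracy: you claim every step other than the \(\Sigma\)-path characterization is a strict isomorphism. Transferring the identity types and the transport \(S_*\) across the reassociation \(1+(\Pos{I}+\Pos{B}) \sequiv (1+\Pos{I})+\Pos{B}\) gives only an equivalence (via action on paths and path induction), as the paper notes. This is harmless, since the lemma only asserts an equivalence.
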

\begin{proof}
  The type \(\Gamma \vdash A \CEq_{\U^I} B\) is the \(\Sigma\) type given by \(\Gamma \vdash f \co \prod_{i \co I} A(i) \to B(i)\), \(\Gamma \vdash s,r \co \prod_{i \co I} B(i) \to A(i)\), and \(\Gamma \vdash fs = \id\), \(\Gamma \vdash rf = \id\).
  By \Cref{prop:polynomial-model:sigma-types}, the shape component of a \(\Sigma\) type is the \(\Sigma\) type of the shapes, and the position component of a \(\Sigma\) type is given by the coproduct of the positions.
  For the families of functions, these are characterized by \Cref{prop:characterization-of-families-of-functions}.
  By associativity of \(\Sigma\) types, and the curry-uncurry isomorphism these correspond to the six families of functions in~\eqref{prop:characterization-of-Familial-categorical-equivalence-in-polynomial-model:shape-shape} and~\eqref{prop:characterization-of-Familial-categorical-equivalence-in-polynomial-model:shape-pos}.
  
  By \Cref{prop:polynomial-model:identity-types}, the shape of an identity type is the identity type of the shapes.
  As identity types of \(\Sigma\) types, these are equivalent to \(\Sigma\) types of identity types between the first and second component~\cite[Theorem 9.3.4]{Rijke2025}.
  Since identity types respect the strict isomorphism used above up to equivalence, we see that the shape components of the two identity types are equivalent to the four identity types in~\eqref{prop:characterization-of-Familial-categorical-equivalence-in-polynomial-model:shape-shape} and~\eqref{prop:characterization-of-Familial-categorical-equivalence-in-polynomial-model:shape-pos}.

  The composition corresponds to composition in the claimed category by \Cref{prop:characterization-of-shapes-of-composite}.
  By \Cref{prop:polynomial-model:identity-types}, identity types have empty positions yielding together with \Cref{prop:characterization-of-families-of-functions} the above description given in~\eqref{prop:characterization-of-Familial-categorical-equivalence-in-polynomial-model:pos}.
\end{proof}

\begin{remark}\label[remark]{prop:equivalences-in-polynomial-model}
  We sketch the unfolding of \((A \simeq B) \in \Ty[\Poly{\cat{C}}](\Gamma)\).
  The data given by the functions \(f, s, r\) is the same as in \Cref{prop:characterization-of-Familial-categorical-equivalence-in-polynomial-model}.
  The homotopies contribute \(\ShapeShape{{{r}}} \cc \ShapeShape{{{f}}} \sim \id_{\Shape{A}}\) and \(\ShapeShape{{{f}}} \cc \ShapeShape{{{s}}} \sim \id_{\Shape{B}}\) to the shape part.
  Unlike in \Cref{prop:characterization-of-Familial-categorical-equivalence-in-polynomial-model}, however, the homotopies do not encode any relationship between \(\ShapePos{{{f}}}\), \(\ShapePos{{{s}}}\), and \(\ShapePos{{{r}}}\); a homotopy in \(\Poly{\cat{C}}\) unfolds only to a homotopy between shape components in \(\cat{C}\).
  As such, the homotopies in \(\Shape{(A \simeq B)}\) witness an equivalence only between the shape components of \(A\) and \(B\).
  The family of positions of \(A \simeq B\) agrees with that of \(A \CEq B\) in \Cref{prop:characterization-of-Familial-categorical-equivalence-in-polynomial-model}.
\end{remark}

\begin{lemma}[In \(\cat{C}\)]\label[lemma]{prop:good-characterization-of-Familial-categorical-equivalence-in-polynomial-model}
    If \(\cat{C} \vDash \FamCUA_{\U}\), then \(\Shape{\CEq*}(I, A, B) \simeq \paren[\big]{\sum_{e \co \Shape{A} \catIso[\U^{\Shape{I}}] \Shape{B}} \Pos{A} \catIso[\U^{\widetilde{I}}] \Pos{B}e}\) where \(\widetilde{I} \coloneqq \sum_{\Shape{I}}\Shape{A}\).
\end{lemma}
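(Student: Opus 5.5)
We start from the description of \(\Shape{\CEq*}(I,A,B)\) in \Cref{prop:characterization-of-Familial-categorical-equivalence-in-polynomial-model}. The data \eqref{prop:characterization-of-Familial-categorical-equivalence-in-polynomial-model:shape-shape} is exactly an element \(e \co \Shape{A} \catIso[\U^{\Shape{I}}] \Shape{B}\). The data \eqref{prop:characterization-of-Familial-categorical-equivalence-in-polynomial-model:shape-pos} lives over \(e\), so after reassociating \(\Sigma\) types we must show that, for each fixed \(e = \pair{s,f,r,S,R}\), this data is equivalent to \(\Pos{A} \catIso[\U^{\widetilde{I}}] \Pos{B}e\). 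Here \(\Pos{B}e\) denotes the family \((i,a) \mapsto \Pos{B}(i, f(i,a))\) over \(\widetilde{I} = \sum_{\Shape{I}}\Shape{A}\). Since \(\Sigma\) types respect fiberwise equivalences, this suffices.

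The difficulty is that the positions data is indexed inconsistently. The maps \(\widetilde{f}\) and \(\widetilde{r}\) are indexed over \(\Shape{A}\) and \(\Shape{B}\), and the identities are transported along \(S\) and \(R\). To remove this, we use \(\FamCUA_{\U}\) in \(\cat{C}\) together with \Cref{prop:contractability-characterization-of-univalence}. The type \(\sum_{\Shape{B}} \Shape{A} \catIso[\U^{\Shape{I}}] \Shape{B}\) is contractible, so by based path induction it suffices to construct the fiberwise equivalence when \(\Shape{B} \seq \Shape{A}\) and \(e\) is the identity isomorphism (with \(S,R\) reflexivity). More precisely, we first note that the fiberwise map we construct is defined for all \(e\). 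We then define it as the transport of the identity-case equivalence along the contraction, or equivalently we show that ``the fiberwise map is an equivalence'' is a family over the contractible total space and check it at the center.

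In the identity case, \(f,s,r\) are identities, so \(\Pos{B}e \seq \Pos{A}\). The transports \(S_*\) and \(R_*\) are trivial, and the data \eqref{prop:characterization-of-Familial-categorical-equivalence-in-polynomial-model:shape-pos} becomes three morphisms \(\widetilde{f},\widetilde{s},\widetilde{r}\) in \(\U^{\widetilde{I}}_{1+\Pos{I}}(\Pos{A},\Pos{A})\) with \(\widetilde{s}\cc\widetilde{f} = \id\) and \(\widetilde{f}\cc\widetilde{r} = \id\). The order of composition should be double-checked against \Cref{prop:characterization-of-shapes-of-composite}: the position component composes contravariantly, which is why \(\widetilde{f}\) is a morphism \(\Pos{B}\to\Pos{A}\) direction-wise.

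This data is precisely \(\Pos{B}e \catIso[\U^{\widetilde{I}}_{1+\Pos{I}}] \Pos{A}\), possibly up to swapping the roles of section and retraction, which is harmless after inversion. Inversion gives an equivalence \((x \catIso y) \simeq (y\catIso x)\), since being an isomorphism is a proposition by \Cref{prop:being-isomorphism-is-proposition}. We then apply \Cref{prop:type-of-partial-equivalences-is-equivalent-to-equivalences} to replace \(\U^{\widetilde{I}}_{1+\Pos{I}}\) by \(\U^{\widetilde{I}}\). Here the index of the error family is \(\widetilde{I}\) and \(J(i,a)\coloneqq 1+\Pos{I}(i)\).

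The main obstacle is the reduction to the identity case. Done naively via path induction on \(\Shape{A}=\Shape{B}\) alone, this does not capture arbitrary categorical isomorphisms \(e\), which is exactly why \(\FamCUA\) rather than \(\CUA\) is needed: \(\Shape{A}\) and \(\Shape{B}\) are families over \(\Shape{I}\). One also has to verify that the fiberwise comparison is a genuine map for general \(e\), given by reading \(\widetilde{f},\widetilde{s},\widetilde{r}\) as morphisms between \(\Pos{A}\) and \(\Pos{B}e\) after precomposing with \(f\), \(s\), and \(r\) appropriately. That map is then an equivalence at the center, hence everywhere.
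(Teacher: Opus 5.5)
Your proposal is correct and follows essentially the same route as the paper: identify the shape--shape data with \(e\), use \(\FamCUA_{\U}\) in \(\cat{C}\) together with the fundamental theorem of identity types to reduce to \(\Shape{B} \seq \Shape{A}\) and \(e \seq \id\), and finish with \Cref{prop:type-of-partial-equivalences-is-equivalent-to-equivalences}. Your observation that the data literally forms an isomorphism \(\Pos{B}e \catIso \Pos{A}\), with \(\widetilde{r}\) a section and \(\widetilde{s}\) a retraction of \(\widetilde{f}\) and repaired by inversion, is a fair refinement of a point the paper glosses over; one slip to fix is that in the identity case \(\Pos{B}e \seq \Pos{B}\), not \(\Pos{A}\), so \(\widetilde{f}\) is a morphism \(\Pos{B} \to \Pos{A}\) and \(\widetilde{s},\widetilde{r}\) are morphisms \(\Pos{A} \to \Pos{B}\) in \(\U^{\widetilde{I}}_{1+\Pos{I}}\), rather than all three being endomorphisms of \(\Pos{A}\) as you wrote.
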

\begin{proof}
  Let \(I \seq (\Shape{I}, \Pos{I}) \co \Shape{\U}\), \(A \seq (\Shape{A}, \Pos{A}), B \seq (\Shape{B}, \Pos{B}) \co \Shape{\Fam}(I)\).
  Set \(J(i) \coloneqq 1 + \Pos{I}(i)\) and \(\widetilde{I} \coloneqq \sum_{i \co \Shape I}\Shape{A}(i)\).
  Note that the components of \(\Shape{\CEq*}(I, A, B)\) given in \Cref{prop:characterization-of-Familial-categorical-equivalence-in-polynomial-model}~\eqref{prop:characterization-of-Familial-categorical-equivalence-in-polynomial-model:shape-shape} are equivalent to \(\Shape{A} \catIso[\U^{\Shape{I}}] \Shape{B}\).
  Denote the remaining components given in \Cref{prop:characterization-of-Familial-categorical-equivalence-in-polynomial-model}~\eqref{prop:characterization-of-Familial-categorical-equivalence-in-polynomial-model:shape-pos} by \(E(I, A, B)\).
  It suffices to give for each \(e \co \Shape{A} \catIso[\U^{\Shape{I}}] \Shape{B}\) an equivalence \(E(I, A, B, e) \simeq \paren{ \Pos{A} \catIso[\U^{\widetilde{I}}] \Pos{B}e }\).
  By the fundamental theorem of identity types~\cite[Theorem 11.2.2]{Rijke2025} and \(\FamCUA_{\U}\), it suffices to consider the case where \(\Shape A \seq \Shape B\) and \(e \seq \id\).
  But in this case \(E(I, A, B, \id)\) reduces to \(\Pos{A} \catIso[\U_J^{\widetilde{I}}] \Pos{B}\) which is equivalent to \(\Pos{A} \catIso[\U^{\widetilde{I}}] \Pos{B}\) by \Cref{prop:type-of-partial-equivalences-is-equivalent-to-equivalences}.
\end{proof}
\begin{lemma}[In \(\cat{C}\)]\label[lemma]{prop:famcua-on-shapes}
  If \(\cat{C} \vDash \FamCUA_{\U}\), then \(\sum_{B \co \Shape{\Fam}(I)} \Shape{\CEq*}(I, A, B)\) is contractible for \(I \co \Shape{\U}\), \(A \co \Shape{\Fam}(I)\).
\end{lemma}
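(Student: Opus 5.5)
The plan is to reduce the claim to contractibility of two iterated total spaces, one for shapes and one for positions, using \Cref{prop:good-characterization-of-Familial-categorical-equivalence-in-polynomial-model} and \(\FamCUA_{\U}\) in \(\cat{C}\) twice via \Cref{prop:contractability-characterization-of-univalence}.

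First, write \(A \seq (\Shape{A}, \Pos{A})\) with \(\Shape{A} \co \Shape{I} \to \U\) and \(\Pos{A} \co \prod_{i} \Shape{A}(i) \to \U\). Unfolding \Cref{prop:polynomial-model:universe} and the definition of \(\Pi\) types (whose codomain is \(\U\), with empty positions), the type \(\Shape{\Fam}(I)\) should be, up to a strict isomorphism or at least an equivalence, \(\sum_{B_0 \co \U^{\Shape{I}}} \U^{\widetilde{I}_{B_0}}\) with \(\widetilde{I}_{B_0} \coloneqq \sum_{\Shape{I}} B_0\). I would first confirm this identification. The position part of an element of \(\Shape{\Fam}(I)\) might a priori carry extra partial-function data. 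However, since \(\Pos{\U} \seq 0\), the extra factor \(\Pos{\U} \to 1 + \Pos{I}\) is contractible, because \(\isInc{0}\)-style data over \(0\) is trivial by the strict \(\eta\) rule for \(0\). The same pattern appears in \Cref{prop:characterization-of-families-of-functions}, so this step should go through in the same way.

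Next, apply \Cref{prop:good-characterization-of-Familial-categorical-equivalence-in-polynomial-model} under the \(\Sigma\), using that \(\Sigma\) types respect equivalences in both arguments. This gives
\[
  \sum_{B \co \Shape{\Fam}(I)} \Shape{\CEq*}(I,A,B) \simeq \sum_{B_0 \co \U^{\Shape{I}}} \sum_{B_1 \co \U^{\widetilde{I}_{B_0}}} \sum_{e \co \Shape{A} \catIso[\U^{\Shape{I}}] B_0} \Pos{A} \catIso[\U^{\widetilde{I}}] B_1 e .
\]
Reassociating and swapping the independent components \(B_1\) and \(e\), this is
\[
  \sum_{(B_0, e) \co \sum_{B_0} \Shape{A} \catIso B_0} \; \sum_{B_1 \co \U^{\widetilde{I}_{B_0}}} \Pos{A} \catIso[\U^{\widetilde{I}}] B_1 e .
\]
By \(\FamCUA_{\U}\) for the index \(\Shape{I}\) and \Cref{prop:contractability-characterization-of-univalence}, the outer base \(\sum_{B_0} \Shape{A} \catIso B_0\) is contractible with centre \((\Shape{A}, \id)\). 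It therefore suffices to prove contractibility of the fibre over the centre. Since \(e \seq \id\), precomposition with \(e\) is the identity on families, so \(B_1 e \seq B_1\), and the fibre is \(\sum_{B_1 \co \U^{\widetilde{I}}} \Pos{A} \catIso[\U^{\widetilde{I}}] B_1\). This is contractible by \(\FamCUA_{\U}\) applied to the index \(\widetilde{I} \co \U\), again using \Cref{prop:contractability-characterization-of-univalence}. The standard fact that a \(\Sigma\) type over a contractible base is equivalent to its fibre over the centre then finishes the argument.

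The main obstacle I expect is bookkeeping rather than mathematics. Specifically, I need the equivalence of \Cref{prop:good-characterization-of-Familial-categorical-equivalence-in-polynomial-model} to be stated uniformly in \(B\), as a fibrewise equivalence over \(\Shape{\Fam}(I)\), rather than only pointwise. I also need to make the expression \(B_1 e\) precise as reindexing \(B_1\) along the map \(\widetilde{I} \to \widetilde{I}_{B_0}\) induced by \(e\) (its underlying forward map), so that at \(e \seq \id\) it reduces strictly. If the proof of that lemma only yields pointwise equivalences, fibrewise equivalences still assemble into an equivalence of total spaces, since \(\Sigma\) respects equivalences. So this should not block the argument, provided the earlier identification of \(\Shape{\Fam}(I)\) holds.
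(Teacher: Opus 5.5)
Your proof is correct and takes the same route as the paper. The paper's one-line proof combines \Cref{prop:good-characterization-of-Familial-categorical-equivalence-in-polynomial-model} with \(\FamCUA_{\U}\) via \Cref{prop:contractability-characterization-of-univalence}; you unfold this into contracting \(\sum_{B_0}\Shape{A} \catIso[\U^{\Shape{I}}] B_0\) (index \(\Shape{I}\)) and then the fibre \(\sum_{B_1}\Pos{A} \catIso[\U^{\widetilde{I}}] B_1\) (index \(\widetilde{I}\)), with one wording quibble: the extra factor \(\prod_{i}(0 \to 1+\Pos{I}(i))\) in \(\Shape{\Fam}(I)\) is trivial because maps out of \(0\) are strictly unique by the \(\eta\) rules for \(0\) and \(\Pi\), not because of any \(\isInc{0}\)-type data.
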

\begin{proof}
  By \Cref{prop:good-characterization-of-Familial-categorical-equivalence-in-polynomial-model} and \(\FamCUA_{\U}\) with \Cref{prop:contractability-characterization-of-univalence}.
\end{proof}

\begin{lemma}\label[lemma]{prop:types-are-prop-iff-shapes-are}
  A type of \(\Poly{\cat{C}}\) is a proposition exactly if its image under \(\Shape{-}\) is: naturally in \(\Gamma \in \Poly{\cat{C}}\), given \(A \in \Ty[\Poly{\cat{C}}](\Gamma)\) there is a logical equivalence
  \(
    \Tm[\cat{C}](\Shape{\Gamma}, \isProp(\Shape A)) \longleftrightarrow \Tm[\Poly{\cat{C}}](\Gamma, \isProp(A)).
  \)
\end{lemma}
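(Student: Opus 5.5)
We compute \(\isProp(A)\) in \(\Poly{\cat{C}}\) by unfolding, and compare its shapes with \(\isProp(\Shape A)\) in \(\cat{C}\).
Recall that \(\isProp(A) \seq \prod_{x,y \co A} x =_A y\).
By \Cref{prop:polynomial-model:identity-types}, \(x =_A y\) has shapes \(\Shape x =_{\Shape A} \Shape y\) and empty positions.
So in the unfolding of \(\Pi\) types from \Cref{prop:polynomial-model:pi-types}, the partial position maps \(\Pos{B}(a,\ShapeShape{f}(a)) \to 1 + \Pos{A}(a)\) have domain \(0\).
Such a component is uniquely determined: by the strict \(\eta\) rule for \(0\), the type \(0 \to X\) is strictly contractible, i.e.\ strictly isomorphic to \(1\), by \(\eta\) for \(\Pi\) and \(0\).

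Concretely, \(\Shape{(\prod_{x \co A} P(x))}\) for a family \(P\) with \(\Pos{P} \seq 0\) is
\[\sum_{g \co \prod_{a \co \Shape A}\Shape P(a)} \prod_{a \co \Shape A} 0 \to 1 + \Pos{A}(a).\]
This is strictly isomorphic to \(\prod_{a \co \Shape A}\Shape P(a)\), and its positions are \(\sum_{a, b \co 0}\cdots \sequiv 0\).
Applying this twice, first for the inner \(\Pi\) over \(y\) (whose body has empty positions) and then for the outer \(\Pi\) over \(x\) (whose body we just showed has empty positions), gives
\[\Shape{\isProp(A)} \sequiv \prod_{a,a' \co \Shape A} a =_{\Shape A} a' \seq \isProp(\Shape A).\]
Here we use that the shapes of the context extension \(\Gamma.A.A\) are \(\Shape\Gamma.\Shape A.\Shape A\) by \Cref{prop:polynomial-model:cwa}.
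Moreover \(\Pos{\isProp(A)} \sequiv 0\).
All of these steps are stable under substitution, so the isomorphism is natural in \(\Gamma\).

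Now \(\Tm[\Poly{\cat C}](\Gamma, \isProp(A))\) consists of a shape term in \(\Tm[\cat{C}](\Shape\Gamma, \Shape{\isProp(A)})\) together with a map \(\Pos{\isProp(A)}[\ldots] \to \Pos\Gamma\) in \(\Ty*(\Shape\Gamma)\) (\Cref{def:polynomial-model:sections}).
Since the positions are \(0\), the position component is uniquely given by \(\elim_0\).
Hence \(\Tm[\Poly{\cat C}](\Gamma, \isProp(A)) \cong \Tm[\cat C](\Shape\Gamma, \isProp(\Shape A))\), naturally in \(\Gamma\).
This is in fact a bijection, which is stronger than the claimed logical equivalence.
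Equivalently, this says \(\isProp(A) \cong \Mod\isProp(\Shape A)\), matching the dependent right adjoint description.

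The main obstacle is bookkeeping: checking that transport along the strict isomorphisms is compatible with substitution in the split setting.
The most delicate point is that the argument relies on the strict \(\eta\) rule for \(0\) to kill the \(\ShapePos{f}\) components strictly, rather than merely up to equivalence.
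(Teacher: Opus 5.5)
Your argument is correct, but it takes a different route from the paper. The paper never unfolds the \(\Pi\) type. It uses the \(\lambda\)/\(\app\) bijection in both models to identify \(\Tm[\Poly{\cat{C}}](\Gamma,\isProp(A))\) with \(\Tm(\Gamma.A.A, \q\p =_A \q)\), and \(\Tm[\cat{C}](\Shape\Gamma, \isProp(\Shape A))\) with \(\Tm(\Shape\Gamma.\Shape A.\Shape A, \q\p =_{\Shape A} \q)\). It then observes that identity types have empty positions (they lie in the image of \(\Mod\)), so a term of \(\q\p =_A \q\) is just a term of its shape, and that shape is the identity type of \(\Shape A\).

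That is a short, purely term-level argument using only the universal property of \(\Pi\). The strict-isomorphism bookkeeping and substitution-stability checks you flag as the main obstacle never come up, because naturality is inherited from that of \(\lambda\).

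Your approach proves a type-level fact instead. A \(\Pi\) type whose codomain has empty positions again has, up to strict isomorphism, empty positions and shapes equal to the \(\Pi\) of the shapes. Hence \(\isProp(A)\) is itself \(\Mod\)-modal and strictly isomorphic to \(\Mod\isProp(\Shape A)\). This gives more information and applies verbatim to any iterated \(\Pi\) into identity types. The price is relying on the explicit formula of \Cref{prop:polynomial-model:pi-types} and on several strict \(\eta\) laws (for \(\Pi\), \(\Sigma\), \(1\), \(0\)) to kill the partial position maps.

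One small imprecision: you state your one-step computation for \(\Pos{P} \seq 0\), but in the second application the body's positions are \(\sum_{b, z \co 0}\cdots\), which is only strictly isomorphic to \(0\). The argument goes through unchanged, since all you use is a map \(\Pos{P} \to 0\), and that already makes \(\Pos{P} \to X\) strictly contractible. Both routes in fact yield a natural bijection rather than a mere logical equivalence.
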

\begin{proof}
  By \Cref{prop:polynomial-model:identity-types}, we have that identity types are in the image of \(\Mod\) and preserved by \(\Shape-\).
  Thus, \(\Tm(\Gamma.A.A, \q\p =_A \q) \cong \Tm(\Shape\Gamma.\Shape A.\Shape A, \Shape{\paren{\q\p =_A \q}}) \cong \Tm(\Shape\Gamma.\Shape A.\Shape A, \q\p =_{\Shape A} \q)\).
\end{proof}

\begin{theorem}\label[theorem]{prop:poly-inherits-famcua}
  If \(\cat{C} \vDash \FamCUA_{\U}\), then \(\Poly{\cat{C}} \vDash \FamCUA_{\U}\).
\end{theorem}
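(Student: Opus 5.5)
We use the characterization of univalence from \Cref{prop:contractability-characterization-of-univalence}: \(\FamCUA_{\U}\) holds in \(\Poly{\cat{C}}\) exactly if, in context \(I \co \U, A \co \Fam(I)\), the type \(T \coloneqq \sum_{B \co \Fam(I)} \CEq*(I,A,B)\) is contractible. Contractibility is \(\isProp(T) \times T\). So the plan is to establish the two factors separately. Propositionality will be reduced to the base model by \Cref{prop:types-are-prop-iff-shapes-are}. Inhabitation will come from exhibiting the center \((A,\id_A)\) directly in \(\Poly{\cat{C}}\).

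\emph{Step 1 (shapes of \(T\)).} By \Cref{prop:polynomial-model:sigma-types}, the shape type of \(T\) is \(\sum_{B \co \Shape{\Fam}(I)} \Shape{\CEq*}(I,A,B)\) in context \(\Shape\Gamma, I \co \Shape\U, A \co \Shape\Fam(I)\). Here we use that the shapes of the contexts \(\Gamma.\U.\Fam\) are \(\Shape\Gamma.\Shape\U.\Shape\Fam\), by \Cref{prop:polynomial-model:cwa}. \Cref{prop:famcua-on-shapes} says exactly that this \(\Sigma\) type is contractible in \(\cat{C}\), assuming \(\cat{C} \vDash \FamCUA_{\U}\). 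We must check that the lemma was stated for arbitrary \(I \co \Shape\U\) and \(A\), i.e.\ in an arbitrary context, so it applies after weakening to \(\Shape\Gamma\).

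\emph{Step 2 (propositionality).} A contractible type is a proposition, so \(\isProp(\Shape T)\) is inhabited in \(\cat{C}\). By \Cref{prop:types-are-prop-iff-shapes-are}, \(\isProp(T)\) is then inhabited in \(\Poly{\cat{C}}\), naturally in the context.

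\emph{Step 3 (inhabitation).} \(T\) is inhabited in \(\Poly{\cat{C}}\) by \((A, \id_A, \id_A, \refl, \id_A, \refl)\), which exists because \(\Poly{\cat{C}}\) models \(\ITT\). A proposition with a point is contractible, so \(T\) is contractible in \(\Poly{\cat{C}}\). \Cref{prop:contractability-characterization-of-univalence}, read inside \(\Poly{\cat{C}}\), then gives \(\FamCUA_{\U}\).

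\emph{Main obstacle.} The substantive work is already in \Cref{prop:good-characterization-of-Familial-categorical-equivalence-in-polynomial-model,prop:famcua-on-shapes}. What remains delicate is checking that "the image of \(T\) under \(\Shape{-}\)" really is the \(\Sigma\) type in \Cref{prop:famcua-on-shapes}, up to the equivalence of \Cref{prop:characterization-of-Familial-categorical-equivalence-in-polynomial-model}. We must also check that being a proposition transfers along that equivalence in \(\cat{C}\); it does, since contractibility is invariant under equivalence there. A second point is naturality in \(\Gamma\): \(\FamCUA_{\U}\) must hold as a term in every context, and \Cref{prop:types-are-prop-iff-shapes-are} is stated naturally, so this should go through.
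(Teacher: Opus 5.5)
Your proposal is correct and is essentially the paper's own proof. You reduce to contractibility of \(\sum_{B \co \Fam(I)} \CEq*(I,A,B)\), inhabit it by the identity, and get propositionality from \Cref{prop:types-are-prop-iff-shapes-are} together with \Cref{prop:famcua-on-shapes}. The "delicate" transfer you flag is not actually needed: \Cref{prop:famcua-on-shapes} is already stated directly for the shape type \(\Shape{\CEq*}\), and the equivalence is absorbed in its proof.
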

\begin{proof}
  By \Cref{prop:contractability-characterization-of-univalence}, it suffices to show \(I \co \U, A \co \Fam(I) \vdash \sum_{B \co \Fam(I)} \CEq*(I,A, B)\) is contractible.
  It is inhabited by the identity, so it is enough to show it is a proposition.
  By \Cref{prop:types-are-prop-iff-shapes-are}, it suffices to show \(I \co \Shape{\U}, A \co \Shape{\Fam(I)} \vdash \sum_{B \co \Shape{\Fam}(I)} \Shape{\CEq*}(I, A, B)\) is a proposition in \(\cat{C}\), and this is \Cref{prop:famcua-on-shapes}.
\end{proof}

\begin{remark}
  Von Glehn \cite[\S5.1]{Glehn2015} observes that the outputs of \(\Poly{-}\) are also suitable inputs to \(\Poly{-}\), meaning the construction can be iterated.
  \Cref{prop:poly-inherits-famcua} implies that iterated polynomial models also inherit \(\FamCUA_{\U}\) from the base model, though we do not know if there is any use for these models.
\end{remark}

\section{Familial categorical univalence without function extensionality}\label{sec:polynomial-refutation}

Using the results from \Cref{sec:categorical-univalence} together with Von Glehn's counterexample to function extensionality in \(\Poly{\cat{C}}\), which we recall in \Cref{sec:failure-of-funext}, we can derive the independence of \(\FE_{\U}\) from \(\ITT+\CCUA_{\U}\).

\subsection{Failure of function extensionality in the polynomial model}\label{sec:failure-of-funext}

Von Glehn's proof that \(\FE\) fails in \(\Poly{\cat{C}}\)~\cite[Proposition 4.11]{Glehn2015} uses the following types:

\begin{definition}\label[definition]{defn:topgun}
  Given \(\Gamma \in \Poly{\cat{C}}\) and \(A \in \Ty[\cat{C}](\Shape{\Gamma})\), define \(\TopGun{A} \in \Ty[\Poly{\cat{C}}](\Gamma)\) by \(\Shape{\Gamma} \vdash \Shape{\TopGun{A}} \coloneqq 1\) and \(\Shape{\Gamma}.\Shape{\TopGun{A}} \vdash \Pos{\TopGun{A}} \coloneqq A\).
\end{definition}

\begin{proposition}\label[proposition]{prop:evil-functions}
  There exist \(f_0,f_1 \in \Tm(1, \TopGun{1+1} \to \TopGun{1})\) with \((f_0 = f_1) \to 0\).
\end{proposition}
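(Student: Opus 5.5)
We construct \(f_0, f_1\) explicitly by unfolding the \(\Pi\) type of \Cref{prop:polynomial-model:pi-types}, where \(\TopGun{1+1} \to \TopGun{1}\) is the non-dependent product in context \(1\). Since \(\Shape{\TopGun{1+1}} \seq 1\), \(\Pos{\TopGun{1+1}} \seq 1+1\), \(\Shape{\TopGun{1}} \seq 1\) and \(\Pos{\TopGun{1}} \seq 1\), the shapes of the function type are pairs \(\pair{\ShapeShape f, \ShapePos f}\). Here \(\ShapeShape f \co 1 \to 1\) and \(\ShapePos f \co \prod_{a \co 1} 1 \to 1 + (1+1)\). Up to the strict \(\eta\) rules for \(1\) and \(\Pi\), such a shape is just a choice of an element of \(1 + (1+1)\). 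For \(k \in \{0,1\}\) we take
\[
  \ShapeShape{f_k} \coloneqq \lambda a.\star,
  \qquad
  \ShapePos{f_k} \coloneqq \lambda a.\lambda b.\inc{1}(\inc{k}(\star)).
\]
This says that \(f_k\) is a total map on positions which sends the unique output position back to the \(k\)-th input position.

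Next we supply the positions of the terms. By \Cref{def:polynomial-model:sections}, we need a map in \(\Ty*(1)\) from \(\Pos{(\Pi_A B)}[\Shape{f_k}]\) to \(\Pos{1}\). The positions of the empty context are \(0\), since \(1\) is the terminal object \((1,0)\) of \(\Poly{\cat{C}}\). By the proposition on \(\Pi\) types,
\[
  \Pos{(\Pi_A B)}[\Shape{f_k}] \seq \sum_{a \co 1,\, b \co 1} \isInc{0}(\inc{1}(\inc{k}(\star))) \seq \sum_{a \co 1,\, b \co 1} 0 .
\]
A map from this type to \(0\) is given by eliminating the second component with \(\elim_0\). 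So \(f_0, f_1 \in \Tm(1, \TopGun{1+1} \to \TopGun{1})\) exist.

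It remains to produce a term of \((f_0 = f_1) \to 0\) in context \(1\). By \Cref{prop:polynomial-model:identity-types} and the universe/coproduct descriptions, both \(f_0 = f_1\) and \(0\) have empty positions, so they lie in the image of \(\Mod\). I expect the main, though mild, obstacle to be checking that a term of \(\Mod X \to \Mod Y\) amounts to a term of \(X \to Y\) in \(\cat{C}\). Unfolding \Cref{prop:polynomial-model:pi-types}, the shape of such a function consists of \(\ShapeShape g \co X \to Y\) together with \(\ShapePos g \co \prod_{x \co X} 0 \to 1 + 0\), which is uniquely given by \(\elim_0\). The positions are \(\sum_{x,\, b \co 0} \cdots\), which again map to \(0\) by \(\elim_0\). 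Hence it suffices to give, in \(\cat{C}\), a function
\[
  \paren{\Shape{f_0} =_{\Shape{(\TopGun{1+1} \to \TopGun{1})}} \Shape{f_1}} \to 0 .
\]

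We build this function in \(\cat{C}\) in two steps. First, given a path \(p\), apply \(\ap{}\) of the map \(\lambda g.\, \pi_1(g)(\star)(\star)\). This projects out \(\ShapePos{}\) and evaluates it, and yields a path \(\inc{1}(\inc{0}(\star)) = \inc{1}(\inc{1}(\star))\) in \(1 + (1+1)\). Second, we derive \(0\) from this path by disjointness of coproducts. Use the extensive large elimination into \(\U\) to define the family \(P \coloneqq [0,[1,0]] \co 1 + (1+1) \to \U\), so that \(P(\inc{1}(\inc{0}(\star))) \seq 1\) and \(P(\inc{1}(\inc{1}(\star))) \seq 0\). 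Transport \(\star\) along the path in \(P\) to obtain an element of \(0\). This does not use function extensionality in \(\cat{C}\): only \(\ap{}\), transport and large elimination are needed.
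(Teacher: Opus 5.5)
Your proof is correct and follows the paper's approach. Your \(f_k\) are exactly the paper's \(\lambda(b_k)\): the paper defines \(b_k \in \Tm(1.\TopGun{1+1}, \TopGun{1})\) with position map \(\inc{k}(\star)\) and notes that the shape of \(\lambda(b_k)\) records this choice, so a shape of \(f_0 = f_1\) forces the two choices to agree, which is impossible. You also spell out steps the paper leaves implicit, namely why a function into \(0\) out of an identity type of \(\Poly{\cat{C}}\) reduces to a function on shapes in \(\cat{C}\), and disjointness of the coproduct injections via transport in a large-eliminated family, and these details are sound.
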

\begin{proof}
  For each \(k \in \{0,1\}\), we define \(b_k \in \Tm(1.\TopGun{1+1}, \TopGun{1})\) by setting \(1.1 \vdash \Shape{(b_k)} \coloneqq \star \co 1\) and
  \(1.1.1 \vdash \Pos{(b_k)} \coloneqq \inc{k}(\star) \co 1 + 1\).
  Take \(f_k \coloneqq \lambda(b_k)\).
  Unfolding the construction of \(\lambda\) in \cref{prop:polynomial-model:pi-types}, we have \(\Shape{(f_k)} \seq \pair{\Shape{(b_k)},\Pos{(b_k)}}\), so \(\Shape{(f_0 = f_1)}\) implies \(\Pos{(b_0)} = \Pos{(b_1)}\) and is thus empty.
\end{proof}

\begin{proposition}[{\cite[Proposition 4.11]{Glehn2015}}]\label[proposition]{prop:polynomial-model-no-funext-universe}
  \(\Poly{\cat{C}} \vDash \neg\FE_{\U}\).
\end{proposition}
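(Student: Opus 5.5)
We derive a violation of \(\FE_{\U}\) from \Cref{prop:evil-functions}, after moving the two functions into the universe. The plan is to produce, in \(\Poly{\cat{C}}\), codes \(X, Y \co \U\) with \(\El X\) and \(\El Y\) (strictly) equal to \(\TopGun{1+1}\) and \(\TopGun{1}\) in the empty context, and to observe that the functions \(f_0, f_1\) of \Cref{prop:evil-functions} are homotopic although \((f_0 = f_1) \to 0\). Since \(\FE_{\U}\) would turn a homotopy into a path, we obtain \(0\), so \(\Poly{\cat{C}} \vDash \neg\FE_{\U}\). Concretely, we exhibit a closed term of \(\FE_{\U} \to 0\); alternatively, we show \(\FE_{\U}\) has no closed inhabitant, provided we interpret \(\neg\) internally.

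\emph{Codes.} By \Cref{prop:polynomial-model:universe}, a closed term of \(\U\) has shape a pair \(\pair{\Shape{A},\Pos{A}} \co \sum_{\Shape{A}\co\U}\U^{\Shape{A}}\) in \(\cat{C}\); its position component is trivial because \(\Pos{\U} \seq 0\). We take \(X \coloneqq \pair{1, \lambda\_.\,1+1}\) and \(Y \coloneqq \pair{1,\lambda\_.\,1}\), using that the universe of \(\cat{C}\) contains \(0\), \(1\), and \(+\). Decoding via \(\El\) then gives shapes \(1\) and positions \(1+1\), respectively \(1\). This agrees with \(\TopGun{1+1}\) and \(\TopGun{1}\) (\Cref{defn:topgun}) up to the decoding isomorphism of the base universe, which we assume is strict (or at least transports terms). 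So \(f_0, f_1\) become closed terms of \(\El X \to \El Y\), and \(\FE_{\U}\) applies to them.

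\emph{The homotopy.} We must produce a closed term of \(\prod_{x \co \El X} f_0 x = f_1 x\), that is, a term in context \(1.\El X\) of the identity type \(\app(f_0,\q) = \app(f_1,\q)\). By the \(\beta\) rule, \(\app(f_k,\q) \seq b_k\). By \Cref{prop:polynomial-model:identity-types}, a term of \(b_0 = b_1\) has shape a path \(\Shape{b_0} = \Shape{b_1}\) in \(1 \vdash 1\), which is \(\refl[\star]\) since both shapes are \(\star\). Its position is a map out of the empty family of positions of the identity type, which is \(\elim_0\). Thus \(\refl\) at the level of shapes gives the homotopy \(H\); the point is that identity types only see shapes, whereas the \(\Pi\)-type shape also records \(\Pos{(b_k)}\).

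\emph{Conclusion.} Given \(\FE_{\U}\) (as a variable \(\phi\) in context), applying its inverse to \(H\) yields a term of \(f_0 = f_1\). Composing with the map \((f_0 = f_1) \to 0\) of \Cref{prop:evil-functions} gives \(0\), hence a closed term of \(\FE_{\U} \to 0\). The only step I expect to need care is the strictness and matching of \(\El\) on the chosen codes with \(\TopGun{-}\). If the decoding is only up to isomorphism, I would instead redo the proof of \Cref{prop:evil-functions} directly with \(\El X, \El Y\), since that argument only uses the shapes and positions and not the exact presentation.
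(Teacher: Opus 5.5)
Your proof is correct and follows essentially the paper's route. The paper also takes the functions \(f_0, f_1\) of \Cref{prop:evil-functions}, notes that \(\TopGun{A}\) belongs to the universe for \(A \co \U\), and gets the homotopy by observing that \(\TopGun{1}\) is a proposition via \Cref{prop:types-are-prop-iff-shapes-are}; that lemma is exactly your direct shape-level \(\refl[\star]\) computation with empty positions, stated once in general form.
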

\begin{proof}
  The functions from \Cref{prop:evil-functions} are homotopic since the codomain is a proposition by \Cref{prop:types-are-prop-iff-shapes-are}.
  Note that \(\TopGun{A}\) belongs to the universe of \(\Poly{\cat{C}}\) for \(A \co \U\).
\end{proof}

\subsection{Independence of function extensionality from familial categorical univalence}

\begin{proposition}\label[proposition]{prop:extensive-univalent-base-model}
  There is a model \(\cat{C}\) of \(\ITT\) with extensive finite coproducts satisfying the strict \(\eta\) rule such that \(\cat{C} \vDash \FE\) and \(\cat{C} \vDash \UA_{\U}\).
\end{proposition}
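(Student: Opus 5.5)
We obtain the required model from the literature rather than building it: a standard model of homotopy type theory with a univalent universe, checked against the extra structural requirements of \cref{sec:von-glehn-polynomial}. Two choices are natural. The first is the cubical set model of Bezem, Coquand, and Huber~\cite{BezemCoquandHuber2013}, or a later variant (Cartesian or De Morgan cubical sets with a suitable notion of fibration). The second is Voevodsky's simplicial set model. Moss and Von Glehn~\cite[\S6]{MossGlehn2018} already indicate that such models are suitable inputs. Our plan is to fix a cubical model presented as a category with families in a presheaf topos, since its split structure matches our use of categories with families and avoids coherence issues.

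The steps, in order, are as follows.

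First, we recall that the cubical model interprets \(\ITT\): \(\Sigma\), \(\Pi\), intensional identity types, and a universe \(\U\) of small fibrant types closed under these type formers. We also recall that it validates \(\UA_{\U}\) and \(\FE\), the latter holding in all cubical models because paths are functions out of the interval. The strict \(\eta\) rules for \(\Sigma\), \(\Pi\), and the unit type hold because these types are interpreted as in the underlying presheaf category.

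Second, we verify the finite coproducts. In the presheaf category, the coproduct of fibrant families \(A_0+A_1\) over \(\Gamma\) is computed pointwise as a disjoint union, and it is again fibrant: a filling problem in \(A_0+A_1\) lands in one summand, because cubes are connected, so it can be solved there. The empty family is trivially fibrant. Since these are genuine presheaf coproducts, stable under reindexing on the nose, they satisfy the strict \(\eta\) rules of \cref{sec:von-glehn-polynomial}, and their split fibred structure is strict. Extensivity, or large elimination, holds as well: for a family \(P\) over \(\Gamma.(A_0+A_1)\), the family \([P_0,P_1]\) is defined by case split on which summand an element lies in. This works because coproducts in a presheaf topos are disjoint and universal, and fibrancy is local over the summands. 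Finally, the universe must be closed under \(+\) and \(0\), which follows because smallness is preserved by these constructions.

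We expect the main obstacle to be this coproduct step, specifically the requirement that the coproduct be the \emph{strict} presheaf coproduct rather than a higher inductive or fibrant-replaced version. Some cubical models, notably those using Cartesian cubes with regularity issues or certain HIT presentations, define sum types with their own composition structure, and strict \(\eta\) could then fail. We would first try the De Morgan cubical model of Cohen, Coquand, Huber, and M\"ortberg, and check directly that the pointwise coproduct carries a composition structure inherited summandwise, which is possible because the representable cubes are connected. If that check proved delicate, we would fall back to Kan simplicial sets, where the disjoint union of Kan fibrations is evidently a Kan fibration and strictness comes from a split presentation (for example, via a universe-based splitting). Having established the conditions, the proposition follows, and combining it with \Cref{prop:poly-inherits-famcua} and \Cref{prop:polynomial-model-no-funext-universe} yields the independence result.
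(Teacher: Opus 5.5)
Your proposal is correct and follows essentially the same route as the paper: take the De Morgan cubical model of Cohen, Coquand, Huber, and M\"ortberg, interpret coproducts as genuine coproducts of dependent cubical sets (fibrant because the interval is connected, which is what Orton and Pitts' Theorem 5.14, cited by the paper, provides), and obtain strict \(\eta\) from this and extensivity from the fact that every topos is extensive. Your caution about Cartesian cubical models is unnecessary, since the paper remarks that any model in the style of Orton--Pitts or Angiuli et al., as well as the simplicial model, works just as well, but it does not affect the argument.
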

\begin{proof}
  Take the model of \(\ITT + \FE + \UA_{\U}\) constructed by Cohen, Coquand, Huber, and M\"ortberg \cite{CohenCoquandHuberMortberg2015}, whose category of contexts is the category of presheaves on the De Morgan cube category and whose types are dependent cubical sets equipped with a uniform Kan filling operation.
  Orton and Pitts \cite[Theorem 5.14]{OrtonPitts2018} show that binary coproducts of types can be modeled by coproducts of dependent cubical sets, and it is easy to check the same for nullary coproducts.
  Thus these coproducts satisfy the strict \(\eta\) law and, since every topos is an extensive category \cite[Remark 4.10]{CarboniLackWalters1993}, are also extensive.
\end{proof}

\begin{remark}
  The particular choice of cubical model is not important in the proof above; any model in the style of Orton and Pitts \cite{OrtonPitts2018} or Angiuli et al.\ \cite{AngiuliBrunerieCoquandHarperFavoniaLicata2021} will do, as will Voevodsky's (non-constructive) simplicial model \cite{KapulkinLumsdaine2021}.
  There are, however, models of \(\ITT + \UA_{\U}\) that do not support extensive finite coproducts of types; see for example the need for a factorization in Shulman~\cite[Proposition 6.2]{Shulman2019}.
\end{remark}

\begin{theorem}\label[theorem]{prop:famcua-not-implies-fe}
  \(\ITT + \FamCUA_{\U} \nvdash \FE_{\U}\).
\end{theorem}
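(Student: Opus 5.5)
The plan is to exhibit a single model of \(\ITT\) that validates \(\FamCUA_{\U}\) and refutes \(\FE_{\U}\); by soundness of \(\ITT\) with respect to its models, this shows \(\FE_{\U}\) is not derivable. The model will be the polynomial model \(\Poly{\cat{C}}\) built over a suitable univalent base \(\cat{C}\).

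First, take the base model \(\cat{C}\) from \Cref{prop:extensive-univalent-base-model}. This is a cubical model with extensive finite coproducts of types satisfying the strict \(\eta\) rule, and it validates both \(\FE\) and \(\UA_{\U}\). Because \(\cat{C}\) has these coproducts, the constructions of \Cref{sec:von-glehn-polynomial} apply, and \(\Poly{\cat{C}}\) is a model of \(\ITT\). Concretely, it has
\begin{itemize}
\item \(\Sigma\), identity, \(\Pi\), finite coproduct and universe types (\Cref{prop:polynomial-model:sigma-types,prop:polynomial-model:identity-types,prop:polynomial-model:pi-types,prop:polynomial-model:finite-coproducts,prop:polynomial-model:universe});
\item the strict \(\eta\) rules for \(\Sigma\), \(\Pi\) and unit types, which the paper also assumes as part of \(\ITT\).
\end{itemize}

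Second, check that \(\cat{C} \vDash \FamCUA_{\U}\). This step needs a small argument, because \(\FamCUA_{\U}\) is not literally among the hypotheses.
\begin{itemize}
\item Inside \(\cat{C}\), fix \(I \co \U\). Univalence gives that \(\U^I\) is equivalent to \(I \to \U\) with identity types computed pointwise, using \(\FE\).
\item Since \(\FE\) holds, \(\CEqToEq\) identifies isomorphisms in \(\U^I\) with families of equivalences \(\prod_{i} A(i) \simeq B(i)\) (\Cref{prop:funext-characterizations}).
\item By \(\FE\), \(\UA_{\U}\) and the distributivity of \(\Pi\) over \(\Sigma\), the type \(\sum_{B \co I \to \U} \prod_i A(i) \simeq B(i)\) is equivalent to \(\prod_i \sum_{B \co \U} A(i) \simeq B\). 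This is a product of contractible types, and weak \(\FE\) makes it contractible.
\item \Cref{prop:contractability-characterization-of-univalence} then yields \(\FamCUA_{\U}\).
\end{itemize}
This is the only step that is not a direct citation, and I expect it to be the main (though modest) obstacle. The care needed is in identifying the wild-categorical isomorphism type with pointwise equivalences, which is compatible up to path with the identity isomorphism because \(\FE\) holds in \(\cat{C}\).

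Finally, \Cref{prop:poly-inherits-famcua} gives \(\Poly{\cat{C}} \vDash \FamCUA_{\U}\), while \Cref{prop:polynomial-model-no-funext-universe} gives \(\Poly{\cat{C}} \vDash \neg\FE_{\U}\). Since \(\Poly{\cat{C}}\) is a model of \(\ITT + \FamCUA_{\U}\) in which \(\FE_{\U}\) fails, soundness yields \(\ITT + \FamCUA_{\U} \nvdash \FE_{\U}\).
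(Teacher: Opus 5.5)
Your proposal is correct and follows the paper's proof: you take the cubical base model \(\cat{C}\) of \(\ITT + \FE + \UA_{\U}\) with extensive coproducts, note that \(\FE + \UA_{\U}\) gives \(\cat{C} \vDash \FamCUA_{\U}\), and then combine \Cref{prop:poly-inherits-famcua} with \Cref{prop:polynomial-model-no-funext-universe}. Your contractibility argument for \(\FamCUA_{\U}\) in \(\cat{C}\) spells out the paper's one-line remark, namely that \(\FE\) reduces \(A =_{I \to \U} B\) to pointwise identities and \(\UA_{\U}\) is then applied pointwise.
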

\begin{proof}
  Take \(\cat{C}\) to be a model of \(\ITT + \FE + \UA_{\U}\) with extensive finite coproducts of types, as provided by \cref{prop:extensive-univalent-base-model}.
  The combination \(\FE + \UA_{\U}\) implies \(\FamCUA_{\U}\), as \(\FE\) tells us that \((A =_{I \to \U} B) \simeq (A \sim B)\).
  Thus \(\Poly{\cat{C}} \vDash \ITT + \FamCUA_{\U}\) by \Cref{prop:poly-inherits-famcua}, while \(\Poly{\cat{C}} \nvDash \FE_{\U}\) by \cref{prop:polynomial-model-no-funext-universe}.
\end{proof}

\section{Variations}\label{sec:other-univalences}

Once \(\UA_{\U}\) was proposed by Voevodsky, it was quickly taken up as the canonical axiom for its intended purpose.
Inequivalent variations on \(\UA_{\U}\) usually turn out to be significantly weaker, as in case of the ``isomorphism reflection'' that holds in Bauer and Winterhalter's cardinal model \cite[\S8.3]{Winterhalter2020}, or else inconsistent, as in the case of ``\(\mathsf{qinv}\)-univalence'' \cite[Exercise 4.6]{HoTTBook2013}.

Unfortunately, we do not see evidence for a canonical form of ``\(\FE\)-free univalence''.
In this section, we show that a few possible candidates are inequivalent; none stands out as the most natural.
In \Cref{sec:other-univalences:non-familial}, we show that \(\CUA_{\U}\) does not imply \(\FamCUA_{\U}\).
In \Cref{sec:other-univalences:categorical-categorical}, we identify an axiom \(\CCUA_{\U}\) that also satisfies \(\ITT + \CCUA_{\U} \nvdash \FE_{\U}\) and \(\ITT + \CCUA_{\U} + \FE_{\U} \vdash \UA_{\U}\) but is not equivalent to \(\CUA_{\U}\) or \(\FamCUA_{\U}\).

In \Cref{sec:other-univalences:approximate}, we recall a variant of univalence used by Van den Berg \cite[Definition 2.13]{VanDenBerg2020} which we call \emph{approximate univalence} or \(\PUA_{\U}\).
It is an open question whether \(\PUA_{\U}\) implies \(\FE_{\U}\); we do not resolve the question, but we pose a related question that avoids mention of a universe.

\subsection{Non-familial categorical univalence}
\label{sec:other-univalences:non-familial}

Our \Cref{prop:famcua-not-implies-fe} is a priori more than an answer to Dorais' question of whether \(\ITT + \CUA_{\U}\) proves \(\FE_{\U}\): we prove not only that \(\ITT + \CUA_{\U} \nvdash \FE_{\U}\) but that \(\ITT + \FamCUA_{\U} \nvdash \FE_{\U}\).
One may then wonder if \(\FamCUA_{\U}\) is strictly stronger than \(\CUA_{\U}\).
This is indeed the case.

\begin{theorem}\label[theorem]{prop:cua-does-not-imply-famcua}
  \(\ITT + \CUA_{\U} \nvdash \FamCUA_{\U}\).
\end{theorem}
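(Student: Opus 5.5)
The plan is to build a model of \(\ITT + \CUA_{\U}\) in which \(\FamCUA_{\U}\) fails, by modifying the polynomial model so that the analysis of \Cref{sec:familial-categorical-univalence} works for \(I = 1\) but breaks for nontrivial \(I\). The guiding observation is \Cref{prop:good-characterization-of-Familial-categorical-equivalence-in-polynomial-model}. The shapes of \(A \CEq_{\U^I} B\) split into an isomorphism of shape families in \(\U^{\Shape{I}}\) and an isomorphism of position families in the partial-function category \(\U^{\widetilde I}_{1+\Pos{I}}\). For \(I = 1\) the error type is just \(1\), and \(\Pos{I}\) plays no role.

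Concretely, I would take a base model \(\cat{C}\) in which the wild category \(\U^{X}\) is univalent for enough index types \(X\) to handle the \(I=1\) instance, namely those of the form \(X = \sum_{\Shape I}\Shape A\) with \(\Shape I\) trivial. I would arrange that univalence fails for some family \(\U^X\) with \(X\) a genuine \(\Sigma\)-type of shapes. Running the proof of \Cref{prop:poly-inherits-famcua} with \(I = 1\) then yields \(\Poly{\cat{C}} \vDash \CUA_{\U}\). Here only \Cref{prop:type-of-partial-equivalences-is-equivalent-to-equivalences} with \(J = 1\) and \Cref{prop:types-are-prop-iff-shapes-are} are needed, and both hold in any base model.

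For the failure, fix a specific \(I\) in \(\Poly{\cat{C}}\), say one with shape \(1\) and positions some chosen type, in the style of \(\TopGun{-}\) from \Cref{defn:topgun}. Then \Cref{prop:characterization-of-Familial-categorical-equivalence-in-polynomial-model} identifies \(\sum_B A \CEq_{\U^I} B\) up to shapes with a type built from \(\U^{\widetilde I}_{1+\Pos I}\)-isomorphisms. By \Cref{prop:types-are-prop-iff-shapes-are}, it suffices to show this shape type is not a proposition in \(\cat{C}\). I would do that by exhibiting two distinct elements coming from a nontrivial automorphism that is not witnessed by a path in \(\U^{\widetilde I}\) in \(\cat{C}\).

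I expect the main obstacle to be the base model itself. It must satisfy \(\CUA_{\U}\), or enough familial univalence for the index types arising when \(I=1\), while failing it for some family \(\U^X\). That is essentially the original problem one level down. My first attempt would be to take \(\cat{C}\) itself to be a polynomial model \(\Poly{\cat{D}}\) over a univalent \(\cat{D}\). I would then check by hand whether the iterated construction shifts the failure in the required way, using the remark that the construction iterates. If iteration preserves \(\FamCUA_{\U}\), as the remark following \Cref{prop:poly-inherits-famcua} suggests, I would instead alter the universe \(\U\) of \(\Poly{\cat{C}}\) from \Cref{prop:polynomial-model:universe}, for instance by restricting which position families \(\Pos{A}\) are allowed, so that universe elements as index types carry less structure than the positions of arbitrary families.
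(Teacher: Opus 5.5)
There is a genuine gap. The separation you aim for cannot come from the choice of base model along the route you describe.

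When \(I=1\) in \(\Poly{\cat{C}}\), the index types that occur are \(\widetilde I=\sum_{1}\Shape{A}\sequiv\Shape{A}(\star)\). Since \(\Shape{\U}=\sum_{X\co\U}\U^X\) (\Cref{prop:polynomial-model:universe}) places no constraint on shapes, \(\Shape{A}\) ranges over \emph{every} \(X\co\U\) of \(\cat{C}\), and \(\Pos{A}\) ranges over every family \(X\to\U\). So running the proof of \Cref{prop:poly-inherits-famcua} for \(I=1\) already requires \(\U^X\) to be univalent for all \(X\co\U\). That is the full hypothesis \(\cat{C}\vDash\FamCUA_{\U}\), and then \Cref{prop:poly-inherits-famcua} gives \(\Poly{\cat{C}}\vDash\FamCUA_{\U}\) as well.

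The dichotomy you propose is therefore vacuous. A ``genuine \(\Sigma\)-type of shapes'' \(\sum_{\Shape{I}}\Shape{A}\) is itself an element of \(\U\), so it already arises as some \(\widetilde I\) for \(I=1\). Your claim that the \(I=1\) case needs only \Cref{prop:type-of-partial-equivalences-is-equivalent-to-equivalences} and \Cref{prop:types-are-prop-iff-shapes-are} omits exactly this univalence of \(\U^{\Shape{A}}\).

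Your guiding observation is also misleading. \Cref{prop:type-of-partial-equivalences-is-equivalent-to-equivalences} holds for every error family \(J\), so \(\Pos{I}\) is irrelevant for \emph{all} \(I\). The index \(I\) influences the analysis only through \(\widetilde I\co\U\).

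Your fallbacks do not close the gap. Iterating \(\Poly{-}\) preserves \(\FamCUA_{\U}\). Restricting the allowed position families is left unspecified, and you give no mechanism by which it would treat single types and families differently.

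The missing idea is to use the failure of \(\FE\) in \(\Poly{\cat{C}}\) itself. Without \(\FE\), an equivalence of universes \(\U'\simeq\U\) transports \(\CUA\), which concerns only paths between elements of the universe. It need not induce an equivalence \((\U')^I\simeq\U^I\).

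The paper's construction goes as follows.
\begin{itemize}
\item Start from \(\Poly{\cat{C}}\vDash\FamCUA_{\U}\) for a univalent \(\cat{C}\) with \(\FE\).
\item Pass to the slice over \(t\co\TopGun{1}\), in which \(\TopGun{1}\) is contractible by \Cref{prop:types-are-prop-iff-shapes-are}.
\item Reinterpret the universe as \(\U'\coloneqq\U\times\TopGun{1}\) with \(\El'\pair{A,t}\coloneqq A\).
\end{itemize}
The projection \(\pi\co\U'\to\U\) is an equivalence. Moreover, \(\idToCEq\) for \(\U'\) is homotopic to \(\idToCEq\) for \(\U\) precomposed with \(\ap{\pi}\), so \(\CUA_{\U'}\) holds.

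To refute \(\FamCUA_{\U'}\), take \(I\coloneqq\TopGun{1+1}\) and the distinct \(f_0,f_1\co I\to\TopGun{1}\) of \Cref{prop:evil-functions}. Set \(A_k\coloneqq\lambda i.\pair{1,f_k(i)}\). Then \(A_0\cong_{(\U')^I}A_1\) is by definition \(1\cong_{\U^I}1\), hence inhabited, whereas \(A_0=A_1\) would give \(f_0=f_1\).

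So your instinct to alter the universe points in the right direction. What works, however, is adding a contractible factor whose function spaces are not contractible, not restricting positions.
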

\begin{proof}
  Take \(\cat{C}\) to be a model of \(\ITT + \FE + \UA_{\U}\) with extensive finite coproducts of types, as provided by \cref{prop:extensive-univalent-base-model}.
  Then \(\Poly{\cat{C}} \vDash \FamCUA_{\U}\) by \Cref{prop:poly-inherits-famcua}, and in particular \(\Poly{\cat{C}} \vDash \CUA_{\U}\).
  We now consider the slice model \(\Poly{\cat{C}}/\TopGun{1}\) for \(\TopGun{-}\) from \Cref{defn:topgun}.
  That is, we work in the context \(t \co \TopGun{1}\).
  However, we modify the interpretation of the universe \(\U\).

  Define \((\U',\El')\) by \(\U' \coloneqq \U \times \TopGun{1} \in \Ty[\Poly{\cat{C}}](1)\) with \(\El'\pair{A,t} \coloneqq A \in \Ty[\Poly{\cat{C}}](\U')\).
  In the slice model, \(\TopGun{1}\) is contractible by \Cref{prop:types-are-prop-iff-shapes-are}, so this universe is closed under the same type formers as \(\U\) and the projection \(\pi \co \U' \to \U\) is an equivalence.
  For \(A,B \co \U'\) in the slice model, the interpretation of \(\idToCEq\) for \(\U'\) is homotopic (by path induction) to the composite of \(\ap{\pi} \co (A =_{\U'} B) \to (\pi A =_{\U} \pi B)\) followed by \(\idToCEq \co (\pi A =_{\U} \pi B) \to (\pi A \CEq \pi B)\).
  Since \(\ap{\pi}\) is an equivalence, \(\CUA_{\U'}\) holds in the slice model.

  To see that \(\FamCUA_{\U'}\) fails in the slice model, take \(I \coloneqq \TopGun{1+1}\) and recall from \Cref{prop:evil-functions} that there exist distinct \(f_0 \neq f_1 \co I \to \TopGun{1}\).
  For \(k \in \{0,1\}\), set \(A_k \coloneqq (\lambda i.\pair{1,f_k(i)}) \in (\U')^{I}\).
  Then \(A_0 \cong_{(\U')^{I}} A_1\) is by definition \(1 \cong_{(\U)^I} 1\) and thus inhabited, while \(A_0 =_{(\U')^{I}} A_1\) would imply \(f_0 = f_1\) and is thus empty.
\end{proof}

\subsection{Categorical categorical univalence}\label{sec:other-univalences:categorical-categorical}

In our formulation of \(\CUA_{\U}\), we could have required that \(\idToCEq\) be a \emph{categorical} equivalence.

\begin{definition}\label{defn:categorical-categorical-univalence}
  \emph{Categorical categorical univalence} (\(\CCUA_{\U}\)) is the principle that the canonical map \(\idToCEq \co (A =_{\U} B) \to (A \catIso B)\) is a categorical equivalence for all \(A,B \co \U\).
\end{definition}

A point in favor of \(\CCUA_{\U}\) is that it is a proposition (cf.~\Cref{prop:being-isomorphism-is-proposition}); the structure of ``being an equivalence'' need not be a proposition without \(\FE\) (cf.~implication \eqref{prop:funext-characterizations:isequiv-is-prop} \(\implies\) \eqref{prop:funext-characterizations:funext} of \Cref{prop:funext-characterizations}).
However, it is unusually strong relative to other identity type characterizations.
For example, the equivalence \((\pair{a,b} =_{A \times B} \pair{a',b'}) \simeq (a =_{A} a') \times (b =_{B} b')\) characterizing identities in \(\Sigma\) types cannot be shown to be a categorical equivalence in \(\ITT\).
\(\CCUA_{\U}\) also seems brittle.
Note that the ``canonical'' map \(\idToCEq \co A =_{\U} B \to A \catIso B\) is only canonically defined \emph{up to homotopy} by the requirement \(\idToCEq(\refl[A]) = \id\)!
It is not clear to us that different formulations of \(\CCUA_{\U}\) using homotopic definitions of \(\idToCEq\) are interderivable.

In any case, using \(\Poly{-}\), we will show that \(\CCUA_{\U}\) is strictly stronger than \(\CUA_{\U}\) and moreover not implied by \(\FamCUA_{\U}\), yet still does not imply \(\FE_{\U}\).
We strengthen \Cref{prop:poly-inherits-famcua} to \(\CCUA_{\U}\) by exploiting properties of types in the essential image of \(\Mod\).
By general properties of reflective subcategories, these are exactly those with strictly invertible unit.
In fact, they are also exactly those with categorically invertible unit, but we will not need this.

\begin{proposition}\label[proposition]{prop:modal-types-characterization}
  Naturally in \(\Gamma\), for \(A \in \Ty[\Poly{\cat{C}}](\Gamma)\), the following are equivalent:
  \begin{enumerate}
    \item\label{prop:modal-types-characterization:strict-isomorphism} \(\eta_A \co A \to \Mod A\) is a strict isomorphism,
    \item\label{prop:modal-types-characterization:sub-initial-positions} there is a map \(\Pos{A} \to 0\) in \(\Ty*[\cat{C}](\Shape\Gamma.\Shape{A})\),
    \item\label{prop:modal-types-characterization:initial-positions} \(\Pos{A}\) is strictly isomorphic to \(0\) in \(\Ty*[\cat{C}](\Shape\Gamma.\Shape{A})\).
  \end{enumerate}
\end{proposition}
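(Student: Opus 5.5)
The plan is to prove the cycle \eqref{prop:modal-types-characterization:strict-isomorphism} \(\implies\) \eqref{prop:modal-types-characterization:sub-initial-positions} \(\implies\) \eqref{prop:modal-types-characterization:initial-positions} \(\implies\) \eqref{prop:modal-types-characterization:strict-isomorphism}. Throughout we unfold morphisms of types in \(\Poly{\cat{C}}\) over \(\Gamma\), i.e.\ terms in \(\Tm(\Gamma.A, B\p)\), by means of \Cref{def:polynomial-model:sections} and \Cref{prop:polynomial-model:cwa}.

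First we compute \(\eta_A\) explicitly. Since \(\Mod A\) has shapes \(\Shape{A}\) and positions \(0\), the unit \(\eta_A \in \Tm(\Gamma.A, (\Mod A)\p)\) has shape component \(\q_{\Shape{A}}\). Its position component is the unique map out of \(0\) into \(\Pos{\Gamma}\p + \Pos{A}\), namely \(\elim_0\). A map in the opposite direction \(\Mod A \to A\), i.e.\ an element of \(\Tm(\Gamma.\Mod A, A\p)\), consists of a shape map \(\Shape\Gamma.\Shape A \to \Shape A\) together with a position map \(\Pos{A}[\ldots] \to \Pos\Gamma\p + 0 \sequiv \Pos\Gamma\p\).

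For \eqref{prop:modal-types-characterization:strict-isomorphism} \(\implies\) \eqref{prop:modal-types-characterization:sub-initial-positions}, suppose \(\epsilon\) is a strict inverse of \(\eta_A\). Taking shapes, \(\Shape{\epsilon}\) is strictly \(\q\). Now consider the composite \(\epsilon \cc \eta_A = \id\) on \(\Gamma.A\). By the composition formula in the proof of \Cref{prop:characterization-of-shapes-of-composite}, its position component factors through the positions of \(\Mod A\), which are \(0\). More precisely, it is \([\inc0, \Pos{\eta}]\cc \Pos{\epsilon}\), and \(\Pos{\eta}\) has domain \(0\), so the \(\inc1\) summand can only be reached through \(0\). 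The identity, however, has position component \(\inc1 \co \Pos{A} \to \Pos{\Gamma}\p + \Pos{A}\). Comparing the two, we get \(\inc1 \seq [\inc0, \elim_0]\cc \Pos{\epsilon}\) as maps \(\Pos{A} \to \Pos\Gamma\p + \Pos A\).

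The key step, and the one I expect to be the main obstacle, is to extract a map \(\Pos{A} \to 0\) from this strict equation. I would apply the large eliminator \(\isInc{1}\) to both sides. The left-hand side gives \(1\) strictly. The right-hand side gives \(\isInc1([\inc0,\elim_0](\Pos\epsilon(x)))\), which is strictly \(0\): by the strict \(\eta\) rule for coproducts, \(\isInc1 \cc [\inc0,\elim_0]\) agrees with the constant family \(0\) after precomposing with each inclusion, hence equals it. A strict equality of types \(1 \seq 0\) in context \(\Pos{A}\) then transports \(\star\) to a term of \(0\), giving the required map \(\Pos{A} \to 0\).

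For \eqref{prop:modal-types-characterization:sub-initial-positions} \(\implies\) \eqref{prop:modal-types-characterization:initial-positions}, the given map \(\Pos{A} \to 0\) and \(\elim_0 \co 0 \to \Pos{A}\) are strictly inverse. One composite is an endomorphism of \(0\), which is the identity by the \(\eta\) rule for \(0\). The other is an endomorphism of \(\Pos A\) in a context that contains a variable of \(0\) (obtained by substitution along the given map), so the \(\eta\) rule for \(0\) again makes it equal to the identity.

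For \eqref{prop:modal-types-characterization:initial-positions} \(\implies\) \eqref{prop:modal-types-characterization:strict-isomorphism}, define \(\epsilon\) to have shape \(\q\) and position component \(\elim_0 \cc (\Pos A \sequiv 0)\). Both composites have identity shapes. Their position components are maps out of \(0\) or out of \(\Pos A \sequiv 0\), and these agree with the identity by the \(\eta\) rule for \(0\).

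Finally, naturality in \(\Gamma\) holds because every construction above is built from substitution-stable operations: \(\elim_0\), coproduct injections, and the large eliminator.
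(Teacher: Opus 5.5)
Your proof is correct. For (2)$\iff$(3) and (3)$\implies$(1) you fill in what the paper dismisses as ``clear'' and ``the strict $\eta$ rule for $0$'', using the same argument: $\eta$ for $0$, applied after substituting the given term of $0$. For (1)$\implies$(2) you isolate the same equation the paper uses, $[\inc{0},\elim_0] \cc \Pos{\epsilon} \seq \inc{1}$ on $\Pos{A}$; the difference is how you conclude from it. The paper reads $\Pos{\epsilon}$ and $\elim_0$ as mutually inverse morphisms in the wild category $\U^{\Shape{A}}_{\Pos{\Gamma}}$. It then invokes \Cref{prop:partial-equivalences-are-total}, via the left cancellation \Cref{prop:total-function-left-cancel}, to get that $\Pos{\epsilon}$ is total, i.e.\ lands in the $0$ summand. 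You instead apply the large eliminator $\isInc{1}$ to the strict equation directly. Your route is more elementary and self-contained, and it makes visible that a strict retraction of $\eta_A$ already suffices. The paper's route reuses totality machinery it needs anyway for \Cref{prop:type-of-partial-equivalences-is-equivalent-to-equivalences}.

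\noindent One justification needs repair. The strict $\eta$ rules available are for terms, not type families. Over a coproduct one only has the strict isomorphism $P \sequiv [P\inc{0},P\inc{1}]$, and over $z \co 0$ nothing makes $\isInc{1}(\elim_0(z)) \seq 0$. So ``agrees with $0$ on each inclusion, hence equals it'' is not a valid step for types. Your claimed equality still holds, though. The term-level $\eta$ rules for $+$ and $0$ give $[\inc{0},\elim_0] \seq \inc{0} \cc [\id,\elim_0]$, hence $\isInc{1}([\inc{0},\elim_0](w)) \seq \isInc{1}(\inc{0}([\id,\elim_0](w))) \seq 0$ by the $\beta$ rule for large elimination. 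Alternatively, build a map $\isInc{1}([\inc{0},\elim_0](w)) \to 0$ by coproduct elimination on $w$ and apply it to $\star$.
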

\begin{proof}
  Consider the morphism \(\eta_A \co \Gamma.A \to \Gamma.\Mod A\) in \(\Poly{\cat{C}}\) over \(\Gamma\).
  The shape component is given by \(\id_A \co \Shape\Gamma.\Shape A \to \Shape\Gamma.\Shape A\).
  The positions component is given by \([\inc{0}, !_{\Pos{A}}] \co \Shape\Gamma.\Shape{A}.\Pos\Gamma\p + 0 \to \Pos\Gamma.\Shape{A}.\Pos\Gamma\p + \Pos{A}\).
  Since the shape component is an isomorphism, \(\eta_A\) is an isomorphism exactly if the position component is.

  We work in the internal language of \(\cat{C}\).
  The direction~\eqref{prop:modal-types-characterization:initial-positions}~\(\implies\)~\eqref{prop:modal-types-characterization:strict-isomorphism} is clear.
  The equivalence~\eqref{prop:modal-types-characterization:sub-initial-positions}~\(\iff\)~\eqref{prop:modal-types-characterization:initial-positions} follows from the strict \(\eta\) rule for \(0\).
  It is left to show~\eqref{prop:modal-types-characterization:strict-isomorphism}~\(\implies\)~\eqref{prop:modal-types-characterization:sub-initial-positions}.
  Suppose we are given a family of strict inverses \(\lambda a. [\inc{0}, i_a] \co \prod_{a \co \Shape{A}} \Pos\Gamma + \Pos{A}(a) \to \Pos\Gamma + 0\) to \(\lambda a. [\inc{0}, \elim_0] \co \prod_{a \co \Shape{A}} \Pos\Gamma + 0 \to \Pos\Gamma + \Pos{A}(a)\).
  Then \(\lambda a.i_a \co \prod_{a \co \Shape{A}} \Pos{A}(a) \to \Pos{\Gamma} + 0\) and \(\lambda a.\elim_0 \co \prod_{a \co \Shape{A}} 0 \to \Pos{\Gamma} + \Pos{A}(a)\) form an equivalence in \(\U^{\Shape{A}}_{\Pos{\Gamma}}\).
  Hence, the family of maps \(i\) is total by \Cref{prop:partial-equivalences-are-total}.
\end{proof}

\begin{definition}
  A type \(A \in \Ty[\Poly{\cat{C}}](\Gamma)\) is \emph{\(\Mod\)-modal} if the conditions from \Cref{prop:modal-types-characterization} hold.
\end{definition}

Let \(A, B \in \Ty[\cat{C}](\Gamma)\), \(f \in \Tm[\cat{C}](\Gamma, A \to B)\), and \(F \co \cat{C} \to \cat{D}\) a pseudomorphism.
We write
\[
  \widetilde{F} \co \Tm(\Gamma, A \to B) \longrightarrow \Tm(F\Gamma, FA \to FB), \qquad f \longmapsto \lambda(F(\app(f))).
\]
The image of \(f\) under \(F \co \cat{C} \to \cat{D}\) is \(Ff \in \Tm(F\Gamma, F(A \to B))\).
There is always a comparison map \(\lambda(F (\app(\q_{A\to B}))) \co {F(B^A)} \to (F B)^{(F A)}\)~\cite[\S4]{KaposiHuberSattler2019}, and the image of \(Ff\) under this map coincides with \(\widetilde{F}f\).

\begin{lemma}
  The pseudomorphism \(\Mod \co \cat{C} \to \Poly{\cat{C}}\) preserves paths between functions:
  naturally in \(\Gamma\), given \(A, B \in \Ty(\Gamma)\) and \(f, g \in \Tm(\Gamma, A \to B)\), there is a map 
  \[
    \Tm(\Gamma, f =_{A \to B} g) \longrightarrow \Tm(\Mod\Gamma, \ModF {f} =_{\Mod A \to \Mod B} \ModF g).
  \]
\end{lemma}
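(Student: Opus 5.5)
The plan is to reduce the claim to the dependent right adjoint property of \(\Mod\) together with the description of identity types in \(\Poly{\cat{C}}\) from \Cref{prop:polynomial-model:identity-types}. The key observation is that \(\Mod\Gamma\) has empty positions, and the identity type \(\ModF f =_{\Mod A \to \Mod B} \ModF g\) has empty positions as well. So a term of it in context \(\Mod\Gamma\) should amount to nothing more than a term of its shape type in context \(\Gamma\). The substance of the proof is therefore to compute \(\Shape{(\ModF f)}\) and \(\Shape{(\ModF g)}\) and to compare the resulting shape identity type with \(f =_{A\to B} g\).

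First I will unfold \(\ModF f = \lambda(\Mod(\app f))\). Here \(\Mod(\app f) \in \Tm(\Mod\Gamma.\Mod A, \Mod B)\) has shape \(\app f\) and position map \(\elim_0\), since \(\Pos{(\Mod B)} \seq 0\). Applying the bijection of \Cref{prop:polynomial-model:pi-types}, \(\Shape{(\ModF f)}\) becomes a pair \(\pair{f, \lambda a.\lambda b.\elim_0(b)}\). Its second component lives in \(\prod_{a\co A} 0 \to 1 + 0\), and by the strict \(\eta\) rules for \(\Pi\) and \(0\) this component is strictly the same for \(f\) and \(g\). Call it \(z\).

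By \Cref{prop:polynomial-model:identity-types}, the shape of the target identity type is then \(\pair{f,z} =_{\Sigma} \pair{g,z}\), taken in context \(\Shape{(\Mod\Gamma)} \seq \Gamma\). Its positions are \(0\). A term in context \(\Mod\Gamma\) consists of a shape term together with a position map from \(0\) to \(\Pos{(\Mod\Gamma)} \seq 0\), and that position map is forced to be \(\elim_0\). It therefore suffices to produce, from \(p \in \Tm(\Gamma, f = g)\), a term of \(\pair{f,z} = \pair{g,z}\) in \(\cat{C}\). I will take \(\ap{\lambda h.\pair{h,z}}(p)\), which is well typed because \(z\) does not depend on \(h\). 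Equivalently, it is the pairing of \(p\) with \(\refl\) on the second component, transported along the \(\Sigma\)-path characterization.

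Naturality in \(\Gamma\) follows because every step commutes strictly with substitution: \(\ap{}\) does, the bijection of \Cref{prop:polynomial-model:pi-types} does, and \(\Mod\) is a pseudomorphism that is strict on terms. The main obstacle is the bookkeeping in the second step. I have to check that the curry-uncurry strict isomorphism really produces a second shape component independent of \(f\), up to strict equality and not merely up to a path. This uses the strict \(\eta\) rule for \(0\) and the fact that \(\isInc{0}\) composed with \(\elim_0\) yields the empty family. If it only held up to a path, I would instead transport along the contractibility of \(\prod_{a\co A}0 \to 1+0\), which is a strict proposition by the \(\eta\) rule for \(0\).
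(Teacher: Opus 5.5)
Your proof is correct and is essentially the paper's argument carried out by hand. The paper applies \(\Mod\) to the path, uses that \(\Mod\) preserves identity types, and then takes the action on paths of the comparison map \(\Mod(B^A) \to (\Mod B)^{(\Mod A)}\), whose shape component is exactly your map \(h \mapsto \pair{h,z}\); so your \(\ap{\lambda h.\pair{h,z}}(p)\) is the shape of the paper's term. Your check that \(z\) is strictly independent of \(f\) by the \(\eta\) rules for \(\Pi\) and \(0\) is right, so the fallback you mention is not needed.
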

\begin{proof}
  Let \(H \in \Tm(\Gamma, f = g)\).
  We have \(\Mod H \in \Tm(\Mod\Gamma, \Mod\paren{f =_{A \to B} g})\).
  By the definition of the identity types in \(\Poly{\cat{C}}\) (\Cref{prop:polynomial-model:identity-types}), they are preserved by the pseudomorphism \(\Mod\).
  In particular, we have \(\Tm(\Mod\Gamma, \Mod\paren{f =_{A \to B} g}) \cong \Tm(\Mod\Gamma, \Mod f =_{\Mod(A \to B)} \Mod g)\).
  By lifting the comparison map \({\Mod(B^A)} \to (\Mod B)^{(\Mod A)}\) to identity types, we obtain the desired element.
\end{proof}

\begin{corollary}\label[corollary]{prop:inclusion-preserves-categorical-equivalences}
  The pseudomorphism \(\Mod \co \cat{C} \to \Poly{\cat{C}}\) preserves categorical equivalences:
  naturally in \(\Gamma\), given \(A, B \in \Ty(\Gamma)\), \(f \in \Tm(\Gamma, A \to B)\) we have a map 
  \[
    \Tm(\Gamma, \isCEq(f)) \longrightarrow \Tm(\Mod \Gamma, \isCEq(\ModF f)).
  \]  
\end{corollary}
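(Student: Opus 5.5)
The plan is to unfold \(\isCEq(f)\) as a \(\Sigma\) type and push each component through \(\Mod\), using the lemma just proved for the path components. By definition, \(\isCEq(f) \seq \paren{\sum_{s \co B \to A} f \cc s = \id_B} \times \paren{\sum_{r \co B \to A} r \cc f = \id_A}\).

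First, given \(w \in \Tm(\Gamma, \isCEq(f))\), project out \(s, r \in \Tm(\Gamma, B \to A)\) together with \(S \in \Tm(\Gamma, f \cc s =_{B \to B} \id_B)\) and \(R \in \Tm(\Gamma, r \cc f =_{A \to A} \id_A)\). Apply \(\widetilde{\Mod}\) to obtain \(\ModF s, \ModF r \in \Tm(\Mod\Gamma, \Mod B \to \Mod A)\).

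Next, apply the preceding lemma to \(S\) and \(R\). This yields paths \(\ModF{(f \cc s)} = \ModF{\id_B}\) and \(\ModF{(r \cc f)} = \ModF{\id_A}\) in \(\Mod\Gamma\).

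The remaining step, which I expect to be the main point of care, is to check that \(\widetilde{\Mod}\) strictly preserves composition and identities. Concretely, I want \(\ModF{(f \cc s)} \seq \ModF f \cc \ModF s\) and \(\ModF{\id_B} \seq \id_{\Mod B}\), so that the paths above already have the right types.

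\begin{itemize}
\item For identities, \(\app(\id_B) \seq \q\), and the pseudomorphism \(\Mod\) preserves variables. Hence \(\lambda(\Mod\q) \seq \lambda\q \seq \id_{\Mod B}\).
\item For composition, \(\app(f \cc s) \seq \app(f)\pair{\p, \app(s)}\). Since \(\Mod\) is a pseudomorphism, it commutes with substitution and context extension up to the canonical isomorphisms.
\end{itemize}

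For \(\Mod\) specifically these canonical isomorphisms are strict, because \(\Pos{(\Mod\Gamma.\Mod A)} \seq 0 + 0 \sequiv 0\). Composition is then preserved strictly, using the strict \(\eta\) rule for \(0\), or at worst up to a canonical path that we can transport along.

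If the strict identification turns out to fail, the fallback is to exhibit paths \(\ModF{(g \cc h)} = \ModF g \cc \ModF h\) and \(\ModF{\id} = \id\) directly in \(\Poly{\cat{C}}\) and concatenate. This should be harmless, because all position components are maps out of \(0\) and are therefore strictly unique.

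Finally, naturality in \(\Gamma\) follows because every operation used is stable under substitution: projections, \(\lambda\), \(\app\), the pseudomorphism \(\Mod\), and the map from the preceding lemma, which is itself natural.
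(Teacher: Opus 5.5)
Your proposal is correct and follows the paper's route: unfold \(\isCEq(f)\), transport the two paths with the preceding lemma, and use that the action on functions \(\ModF\) strictly preserves composition and identities. That last fact is the paper's entire one-line proof. Your hedging and path-based fallback are unnecessary: strict preservation holds for any pseudomorphism, by functoriality on contexts together with the comparison isomorphisms \(\Mod(\Gamma.A) \cong \Mod\Gamma.\Mod A\), and does not depend on the \(\eta\) rule for \(0\).
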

\begin{proof}
  The action on functions \(\ModF\) preserves composition and identities.
\end{proof}

The \(\Mod\)-modal types in \(\Poly{\cat{C}}\) behave like types of the base model.
In particular, when the base model enjoys function extensionality, homotopy equivalences between \(\Mod\)-modal types can be improved to categorical equivalences.

\begin{lemma}\label[lemma]{prop:on-fe-model-modal-types-rectify-equivalences}
  If \(\cat{C} \vDash \FE\), then equivalences coincide with categorical equivalences between \(\Mod\)-modal types.
\end{lemma}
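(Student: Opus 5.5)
The plan is to reduce the statement to function extensionality in the base model \(\cat{C}\), using the fact that \(\Mod\)-modal types are, up to strict isomorphism, of the form \(\Mod A'\) for \(A'\) a type of \(\cat{C}\).

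Concretely, let \(A, B \in \Ty[\Poly{\cat{C}}](\Gamma)\) be \(\Mod\)-modal and \(f \in \Tm(\Gamma, A \to B)\). We must show that \(\isEquiv(f)\) implies \(\isCEq(f)\). The converse direction, \(\CEqToEq\), is always available, so only the improvement direction needs an argument. By \Cref{prop:modal-types-characterization}, the units \(\eta_A\) and \(\eta_B\) are strict isomorphisms. Since strict isomorphisms are categorical equivalences, and isomorphisms satisfy 2-out-of-3 (\Cref{prop:isomorphisms-satisfy-2-out-of-3}), it suffices to treat the case where \(A \seq \Mod A'\) and \(B \seq \Mod B'\) for \(A', B'\) types over \(\Shape\Gamma\) in \(\cat{C}\) (pulled back along \(\eta_\Gamma\)).

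The next step is to show that functions \(\Mod A' \to \Mod B'\) come from \(\cat{C}\). By \Cref{prop:polynomial-model:pi-types}, with \(\Pos{A} \seq 0\) and \(\Pos{B} \seq 0\), the shape of \(\Mod A' \to \Mod B'\) is \(\sum_{g \co A' \to B'} \prod_{a} 0 \to 1 + 0\). The second component is strictly contractible by the strict \(\eta\) rule for \(0\) and \(\Pi\), so every such \(f\) is of the form \(\ModF g\) with \(g \co A' \to B'\) in \(\cat{C}\) (after reindexing along \(\eta_\Gamma\)).

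Now we transport the equivalence data down to \(\cat{C}\). Homotopies and identity types in \(\Poly{\cat{C}}\) have shapes given by the corresponding homotopies and identity types in \(\cat{C}\) (\Cref{prop:polynomial-model:identity-types}, \Cref{prop:equivalences-in-polynomial-model}). Taking shapes of the data \(\isEquiv(\ModF g)\) therefore yields \(\isEquiv(g)\) in \(\cat{C}\), with the inverses again of the form above. Since \(\cat{C} \vDash \FE\), \Cref{prop:funext-characterizations} turns this into \(\isCEq(g)\) in \(\cat{C}\). Then \Cref{prop:inclusion-preserves-categorical-equivalences} gives \(\isCEq(\ModF g)\) in \(\Poly{\cat{C}}\), which is what we need. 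Naturality in \(\Gamma\) follows because each step is stable under substitution.

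I expect the main obstacle to be the bookkeeping in the reduction to \(\Mod A'\) over an arbitrary context \(\Gamma\) rather than \(\Mod\Gamma\). The key technical point is checking that the shape projection of the equivalence data really lands in \(\isEquiv(g)\) over \(\Shape\Gamma\). I would handle this through the dependent right adjoint \(\Tm(\Gamma, \Mod A') \cong \Tm(\Shape\Gamma, A')\), applied to the \(\Sigma\)-type \(\isEquiv\), whose components are all modal because identity types have empty positions.
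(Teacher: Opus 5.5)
Your proposal is correct and uses the same ingredients as the paper. Taking shapes turns the equivalence into one in \(\cat{C}\), \(\FE\) there upgrades it to a categorical equivalence, \Cref{prop:inclusion-preserves-categorical-equivalences} transports it along \(\Mod\), and the strict invertibility of \(\eta_A, \eta_B\) together with 2-out-of-3 finishes. The only difference is the order of steps: instead of reducing to \(A \seq \Mod A'\) and checking that every map \(\Mod A' \to \Mod B'\) has the form \(\ModF g\), the paper applies 2-out-of-3 directly to the naturality square \(\Mod f \circ \eta_A \seq \eta_B \circ f\), so it never needs your explicit computation of maps between modal types.
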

\begin{proof}
  Let \(A, B \in \Ty[\Poly{\cat{C}}](\Gamma)\) and \(f \co A \to B\) an equivalence.
  Since \(\Shape{-}\) preserves identity types, the map \(\Shape{f} \co \Shape{A} \to \Shape{B}\) is an equivalence, and by \(\FE\) also a categorical equivalence.
  The pseudomorphism \(\Mod\) preserves categorical equivalences by \Cref{prop:inclusion-preserves-categorical-equivalences}.
  Hence, so does the dependent right adjoint \(\Mod\) since it is defined by substituting the image of the pseudomorphism along the unit.
  Since \(\Mod f \cc \eta_A \seq \eta_B \cc f\) the claim follows by \Cref{prop:modal-types-characterization} and \(2\)-out-of-\(3\) for categorical equivalences (\Cref{prop:isomorphisms-satisfy-2-out-of-3}).
\end{proof}

We show that the type family \(I \co \U, A, B \co \Fam(I) \vdash \CEq*(I, A, B) \coloneqq (A \CEq_{\U^I} B)\) from \cref{sec:familial-categorical-univalence} is \(\Mod\)-modal, which we can use to improve the equivalence in the definition of \(\FamCUA_{\U}\) to a categorical equivalence over well-behaved base models.

\begin{lemma}\label[lemma]{prop:type-of-familial-categorical-equivalences-is-modal}
  If \(\cat{C} \vDash \FamCUA_{\U}\), then \(\CEq*\) in \(\Poly{\cat{C}}\) is \(\Mod\)-modal.
\end{lemma}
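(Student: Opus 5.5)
The plan is to verify condition~\eqref{prop:modal-types-characterization:sub-initial-positions} of \Cref{prop:modal-types-characterization}. Concretely, in the internal language of \(\cat{C}\), over the context \(I \co \Shape{\U}\), \(A, B \co \Shape{\Fam}(I)\), \(x \co \Shape{\CEq*}(I,A,B)\), I will construct a map \(\Pos{\CEq*}(x) \to 0\). By \Cref{prop:characterization-of-Familial-categorical-equivalence-in-polynomial-model}, \(\Pos{\CEq*}\) is equivalent to the threefold coproduct~\eqref{prop:characterization-of-Familial-categorical-equivalence-in-polynomial-model:pos}. A map out of a coproduct is given by maps out of each summand, so it suffices to show the following: for the data \((s,f,r,S,R,\widetilde{f},\widetilde{s},\widetilde{r},\ldots)\) of~\eqref{prop:characterization-of-Familial-categorical-equivalence-in-polynomial-model:shape-shape} and~\eqref{prop:characterization-of-Familial-categorical-equivalence-in-polynomial-model:shape-pos}, each of \(\widetilde{f}\), \(\widetilde{s}\), \(\widetilde{r}\) is \emph{total} as a morphism of \(\U^{\widetilde{I}}_{J}\) (with \(J \coloneqq 1 + \Pos{I}\)). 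Totality gives \(\isInc{1}\) of each value. Since \(\isInc{0}(u) \times \isInc{1}(u) \to 0\) holds by coproduct elimination (using \(\isInc{0} \seq [1,0]\) and \(\isInc{1} \seq [0,1]\)), each summand then maps to \(0\). Composing with the equivalence of \Cref{prop:characterization-of-Familial-categorical-equivalence-in-polynomial-model} yields an honest term of \(0\) in the extended context, hence a morphism \(\Pos{\CEq*} \to 0\) in \(\Ty*[\cat{C}]\). This construction is stable under substitution, so it is natural in \(\Gamma\).

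To prove totality, I regard it as a property \(P(B,x)\) of the point \((B,x)\) of \(\sum_{B \co \Shape{\Fam}(I)} \Shape{\CEq*}(I,A,B)\). This property is a homotopy proposition by \Cref{prop:being-total-is-proposition}, applied three times. By \Cref{prop:famcua-on-shapes} (which uses \(\cat{C} \vDash \FamCUA_{\U}\)), that \(\Sigma\)-type is contractible. Hence \(P\) follows from \(P\) at a single point, and I may transport along the contraction. The natural choice is the point corresponding, under \Cref{prop:good-characterization-of-Familial-categorical-equivalence-in-polynomial-model}, to \(B \seq A\), \(e \seq \id\), and the identity of \(\Pos{A}\). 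Rather than computing totality at this particular point, it is cleaner to argue directly for all points over \(e \seq \id\). In the proof of \Cref{prop:good-characterization-of-Familial-categorical-equivalence-in-polynomial-model}, the remaining data \(E(I,A,A,\id)\) is literally \(\Pos{A} \catIso[\U^{\widetilde{I}}_J] \Pos{B}\): the first component \(\widetilde{f}\) is an isomorphism in \(\U^{\widetilde{I}}_J\), and \(\widetilde{s}, \widetilde{r}\) are its section and retraction. These are again isomorphisms by \Cref{prop:isomorphisms-satisfy-2-out-of-3}. By \Cref{prop:partial-equivalences-are-total}, all three are therefore total. Then, by the fundamental theorem of identity types, i.e.\ induction on the contractible type of pairs \((\Shape{B}, e)\), totality holds for arbitrary \(e\).

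The main obstacle I expect is bookkeeping the transport. The predicates \(\isInc{0}(\widetilde{f}(i,a,u))\) depend on the shape data (\(\Pos{B}\) is reindexed along \(f\)), and the characterization is only an equivalence, not a strict isomorphism. I would handle this by stating the totality claim uniformly as a proposition-valued family over the whole type \(\Shape{\CEq*}\) expressed in the form of \Cref{prop:characterization-of-Familial-categorical-equivalence-in-polynomial-model}. Propositions transport along any equivalence and any path, so I only ever produce an inhabitant of a proposition and never need a strict computation. The strict part, obtaining an actual map into \(0\), is recovered at the end because \(0\) is eliminated by its strict \(\eta\) rule, which gives condition~\eqref{prop:modal-types-characterization:initial-positions} as well.
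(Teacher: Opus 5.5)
Your proof is correct, but its key step differs from the paper's. The paper transports the goal \(\Pos{\CEq*}(I,A,B,e) \to 0\) along the contraction of \(\sum_{B} \Shape{\CEq*}(I,A,B)\) from \Cref{prop:famcua-on-shapes}. It therefore only has to inspect the identity point. There \(\widetilde f,\widetilde s,\widetilde r\) are literally \(\inc{1}\), so every summand of~\eqref{prop:characterization-of-Familial-categorical-equivalence-in-polynomial-model:pos} is strictly empty.

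You instead isolate an intermediate property: the three position maps are total. You verify it on the whole fiber over \(e \seq \id\), for arbitrary \(\Pos{B}\) and arbitrary position data, by observing that this data is an isomorphism in \(\U^{\widetilde I}_J\) and applying \Cref{prop:isomorphisms-satisfy-2-out-of-3} and \Cref{prop:partial-equivalences-are-total}. Then you contract only the shape-level pairs \((\Shape{B}, e)\). That step needs \(\FamCUA_{\U}\) in \(\cat{C}\) just for \(\U^{\Shape{I}}\), not the full \Cref{prop:famcua-on-shapes}, which also uses univalence of \(\U^{\widetilde I}\) on positions.

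The paper's route is shorter given what has already been proved. Yours is more self-contained and explains why the positions vanish: a categorical isomorphism of partial maps cannot be undefined anywhere.

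Two minor remarks.
\begin{itemize}
\item Either of your two transports suffices on its own. The appeal to \Cref{prop:famcua-on-shapes} ends up unused once you contract the pairs \((\Shape{B}, e)\). Totality being a proposition is also not needed, since transport along a path works for any type family.
\item At \(e \seq \id\) the equations read \(\widetilde s \cc \widetilde f = \id\) and \(\widetilde f \cc \widetilde r = \id\). So \(\widetilde s\) is the retraction and \(\widetilde r\) the section of \(\widetilde f\), not the other way round. The swap is harmless, since both are isomorphisms either way.
\end{itemize}
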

\begin{proof}
  By \Cref{prop:modal-types-characterization}, it suffices to show that the positions of \(\CEq* \in \Ty(1.\U.\Fam \times \Fam, \CEq*)\) are empty.
  We work internally to \(\cat{C}\).
  We show for all \(I \co \Shape\U\), \(A \seq (\Shape A, \Pos A), B \seq (\Shape B, \Pos B) \co \Shape{\Fam}(I)\), \(e \co \Shape{\CEq*}(I, A, B)\) that \(\Pos{\CEq*}(I, A, B, e) \to 0\), which suffices by strict initiality of \(0\).
  By \Cref{prop:famcua-on-shapes}, we can assume \(A \seq B\) and \(e \seq \id\).
  In this case, the functions on positions are given by \(\inc{1}\) and therefore the type \eqref{prop:characterization-of-Familial-categorical-equivalence-in-polynomial-model:pos} is empty.
\end{proof}

\begin{proposition}
  \label[proposition]{prop:poly+fe-inherits-ccua}
  If \(\cat{C}\) is a model of \(\ITT + \FE + \UA_{\U}\) with extensive finite coproducts of types, then \(\Poly{\cat{C}} \vDash \CCUA_{\U}\).
\end{proposition}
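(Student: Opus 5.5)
The plan is to show that \(\idToCEq \co (A =_{\U} B) \to (A \catIso B)\) is a homotopy equivalence in \(\Poly{\cat{C}}\) between two \(\Mod\)-modal types, and then to rectify it to a categorical equivalence using \Cref{prop:on-fe-model-modal-types-rectify-equivalences}.

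First, since \(\FE + \UA_{\U}\) implies \(\FamCUA_{\U}\) in \(\cat{C}\) (as in the proof of \Cref{prop:famcua-not-implies-fe}), \Cref{prop:poly-inherits-famcua} gives \(\Poly{\cat{C}} \vDash \FamCUA_{\U}\). Specialising to \(I = 1\), and using the strict \(\eta\) laws for unit and \(\Pi\), we get \(\CUA_{\U}\). So, in the context \(A, B \co \U\), the map \(\idToCEq\) is an equivalence.

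Second, we check modality of both sides.
\begin{itemize}
\item For the domain \(A =_{\U} B\): by \Cref{prop:polynomial-model:identity-types} its positions are the constant family \(0\), so it is \(\Mod\)-modal by \Cref{prop:modal-types-characterization}.
\item For the codomain \(A \catIso B\): this is the substitution of \(\CEq*\) along \(I \coloneqq 1\). \Cref{prop:type-of-familial-categorical-equivalences-is-modal} says \(\CEq*\) is \(\Mod\)-modal whenever \(\cat{C} \vDash \FamCUA_{\U}\). Modality is stable under substitution, because condition~\eqref{prop:modal-types-characterization:sub-initial-positions} is stable under substitution and \Cref{prop:modal-types-characterization} holds naturally in \(\Gamma\).
\end{itemize}
There is one bookkeeping point. \(A \catIso B\) for \(\U\) is only strictly isomorphic to \(\CEq*(1,A,B)\), because of our redefinition \(\U^I(A,B) \coloneqq \prod_{u \co \sum_I A} B(\pi_0 u)\). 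A strict isomorphism of types in \(\Poly{\cat{C}}\) is in particular a categorical equivalence, and by \Cref{prop:isomorphisms-satisfy-2-out-of-3} categorical equivalences compose. So it suffices to prove the claim for either presentation.

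Third, we apply \Cref{prop:on-fe-model-modal-types-rectify-equivalences}, which uses \(\cat{C} \vDash \FE\). It says that any equivalence between \(\Mod\)-modal types is a categorical equivalence. Applied to \(\idToCEq\), this yields \(\isCEq(\idToCEq)\) in context \(A,B \co \U\). That statement is exactly \(\CCUA_{\U}\).

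I expect the main obstacle to be making sure the rectification lemma applies in a non-empty context \(\Gamma = 1.\U.\U\) and to an internally given equivalence. Concretely, the shape map \(\Shape{f}\) is an equivalence in \(\cat{C}\) over \(\Shape\Gamma\), and \(\FE\) in \(\cat{C}\) must upgrade it to a categorical equivalence there. Then \(\Mod\) has to transport that categorical-equivalence structure along the unit \(\eta_\Gamma\). I would verify this naturality in \(\Gamma\) carefully, together with the stability of modality under the substitution \(I \mapsto 1\).
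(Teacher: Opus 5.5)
Your proposal is correct and follows the paper's proof step for step: you obtain \(\CUA_{\U}\) from \Cref{prop:poly-inherits-famcua}, get modality of \(A =_{\U} B\) from the empty positions of identity types and modality of \(A \catIso B\) from \Cref{prop:type-of-familial-categorical-equivalences-is-modal}, and then apply \Cref{prop:on-fe-model-modal-types-rectify-equivalences}. One small refinement to your bookkeeping: transport \emph{modality} across the strict isomorphism \((A \catIso B) \sequiv \CEq*(1,A,B)\), for instance by naturality of \(\eta\) and \Cref{prop:modal-types-characterization}, rather than transporting \(\isCEq\) of the map by 2-out-of-3; without \(\FE\), a map that is only pointwise homotopic to a categorical equivalence need not be one, whereas the rectification lemma applies to every equivalence between modal types and therefore works for whichever homotopic choice of \(\idToCEq\) you fix.
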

\begin{proof}
  By \Cref{prop:poly-inherits-famcua}, we have \(\Poly{\cat{C}} \vDash \CUA_{\U}\).
  Since the types declared equivalent in the statement of \(\CUA_{\U}\) are \(\Mod\)-modal, by \Cref{prop:type-of-familial-categorical-equivalences-is-modal} and the definition of identity types (\Cref{prop:polynomial-model:identity-types}), the claim follows from \Cref{prop:on-fe-model-modal-types-rectify-equivalences}.
\end{proof}

\begin{corollary}
  \(\ITT + \CCUA_{\U} \nvdash \FE_{\U}\).
\end{corollary}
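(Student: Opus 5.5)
The plan is to reuse the argument of \Cref{prop:famcua-not-implies-fe} with \Cref{prop:poly+fe-inherits-ccua} in place of \Cref{prop:poly-inherits-famcua}. I would exhibit a single model, namely \(\Poly{\cat{C}}\) for a suitable base model \(\cat{C}\), that validates \(\ITT + \CCUA_{\U}\) but refutes \(\FE_{\U}\).

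\textbf{Step 1: the base model.} By \Cref{prop:extensive-univalent-base-model}, I take \(\cat{C}\) to be a model of \(\ITT + \FE + \UA_{\U}\) with extensive finite coproducts of types satisfying the strict \(\eta\) rule, for instance the cubical model of Cohen, Coquand, Huber, and M\"ortberg. This is exactly the hypothesis of \Cref{prop:poly+fe-inherits-ccua}. That proposition is stated for \(\CUA_{\U}\), i.e.\ the case \(I = 1\) of \(\FamCUA_{\U}\), so it gives \(\Poly{\cat{C}} \vDash \CCUA_{\U}\).

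\textbf{Step 2: the model interprets \(\ITT\).} The results of \Cref{sec:von-glehn-polynomial} show that \(\Poly{\cat{C}}\) is a model of \(\ITT\). \Cref{prop:polynomial-model:cwa} supplies the substitution calculus. \Cref{prop:polynomial-model:sigma-types,prop:polynomial-model:identity-types,prop:polynomial-model:pi-types,prop:polynomial-model:finite-coproducts} supply the type formers, and \Cref{prop:polynomial-model:universe} supplies the universe. The strict \(\eta\) laws for \(\Sigma\), \(\Pi\) and unit are inherited from \(\cat{C}\) via the strict isomorphisms used in those constructions.

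\textbf{Step 3: failure of function extensionality.} \Cref{prop:polynomial-model-no-funext-universe} gives \(\Poly{\cat{C}} \vDash \lnot\FE_{\U}\), witnessed by the functions \(f_0, f_1 \co \TopGun{1+1} \to \TopGun{1}\) of \Cref{prop:evil-functions}. These are homotopic but not equal. Since a model of \(\ITT + \CCUA_{\U}\) refutes \(\FE_{\U}\), soundness gives \(\ITT + \CCUA_{\U} \nvdash \FE_{\U}\).

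The only point needing care is the brittleness of \(\CCUA_{\U}\) noted above. The statement refers to one chosen definition of \(\idToCEq\), and we must be sure \Cref{prop:poly+fe-inherits-ccua} establishes the axiom for that definition. I expect this to cause no difficulty. The proof of \Cref{prop:poly+fe-inherits-ccua} upgrades whatever equivalence \(\idToCEq\) is to a categorical equivalence using modality of its domain and codomain (\Cref{prop:on-fe-model-modal-types-rectify-equivalences}). It is therefore insensitive to which homotopic representative of \(\idToCEq\) we chose, since both representatives are equivalences between the same \(\Mod\)-modal types.
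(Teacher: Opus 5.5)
Your proposal is correct and is the paper's own proof: take the base model from \Cref{prop:extensive-univalent-base-model}, get \(\Poly{\cat{C}} \vDash \CCUA_{\U}\) from \Cref{prop:poly+fe-inherits-ccua}, and get \(\Poly{\cat{C}} \vDash \lnot\FE_{\U}\) from \Cref{prop:polynomial-model-no-funext-universe}. Your extra remark on the choice of \(\idToCEq\) is sound but goes beyond what the paper argues: being an equivalence is invariant under homotopy, and \Cref{prop:on-fe-model-modal-types-rectify-equivalences} upgrades any equivalence between \(\Mod\)-modal types to a categorical one.
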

\begin{proof}
  \Cref{prop:extensive-univalent-base-model} provides a model \(\cat{C}\) of of \(\ITT + \FE + \UA_{\U}\) with extensive finite coproducts of types.
  We have \(\Poly{\cat{C}} \vDash \CCUA_{\U}\) by \Cref{prop:poly+fe-inherits-ccua} and \(\Poly{\cat{C}} \vDash \lnot\FE_{\U}\) by \Cref{prop:polynomial-model-no-funext-universe}.
\end{proof}

Finally, we show that while \(\CCUA_{\U}\) is still strictly weaker than \(\UA_{\U}\), it is strictly stronger than \(\CUA_{\U}\).

\begin{proposition}\label[proposition]{prop:categorical-univalence-does-not-imply-familial}
  \(\ITT + \FamCUA_{\U} \nvdash \CCUA_{\U}\).
\end{proposition}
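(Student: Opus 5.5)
We need a model of \(\ITT + \FamCUA_{\U}\) in which \(\CCUA_{\U}\) fails. The plan is to reuse the slice trick from \Cref{prop:cua-does-not-imply-famcua}, but this time to break the categorical invertibility of \(\idToCEq\) rather than its invertibility. Throughout, let \(\cat{C}\) be a model of \(\ITT + \FE + \UA_{\U}\) with extensive finite coproducts, as in \Cref{prop:extensive-univalent-base-model}. Then \(\Poly{\cat{C}} \vDash \FamCUA_{\U}\) by \Cref{prop:poly-inherits-famcua}.

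We work in the slice \(\Poly{\cat{C}}/\TopGun{1}\), and take the universe to be \(\U' \coloneqq \U \times \TopGun{1}\), with \(\El'\pair{A,t} \coloneqq A\). In this slice \(\TopGun{1}\) is contractible, so the projection \(\pi \co \U' \to \U\) is an equivalence. The same holds for the induced maps \((\U')^I \to \U^I\): these are \(\pi_*\) on \(\U\times\TopGun{1}\)-valued functions. Here I would check that the \(\TopGun{1}\)-component contributes nothing except on paths.

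\textbf{Step 1: \(\FamCUA_{\U'}\) holds.} This step is the delicate one, since the argument of \Cref{prop:cua-does-not-imply-famcua} showed that exactly this fails for index types like \(\TopGun{1+1}\). So the first definite step is to modify the construction so that functions into the extra factor are unique. Concretely, replace \(\TopGun{1}\) by a type \(T\) such that \(I \to T\) is contractible for all \(I \co \U\), yet \(T\) still has a poorly behaved function type at the level of categorical equivalences. Alternatively, keep \(\U\) itself as the universe and instead exhibit, directly in \(\Poly{\cat{C}}\) over a suitable context, a failure of \(\CCUA_{\U}\).

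I will commit to the second route. Take \(\Poly{\cat{C}}\) itself (or an iterate \(\Poly{\Poly{\cat{C}}}\), which still satisfies \(\FamCUA_{\U}\) by the remark following \Cref{prop:poly-inherits-famcua}). Then show that \(\idToCEq\) has no categorical section, because functions out of \(A \catIso B\) may differ on positions.

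\textbf{Step 2: refute \(\CCUA_{\U}\).} Here one exploits the fact that, for a base model \(\cat{C}\) without \(\FE\), \(A \catIso B\) need not be \(\Mod\)-modal. Its positions are given by \eqref{prop:characterization-of-Familial-categorical-equivalence-in-polynomial-model:pos}, and \Cref{prop:type-of-familial-categorical-equivalences-is-modal} needs only \(\FamCUA_{\U}\) in the base. The real source of failure is therefore the rectification step \Cref{prop:on-fe-model-modal-types-rectify-equivalences}, which used \(\FE\) in \(\cat{C}\). Using \(\cat{D} \coloneqq \Poly{\cat{C}}\) as the new base, \(\FamCUA_{\U}\) holds but \(\FE\) fails, so in \(\Poly{\cat{D}}\) I would try to show the following: \(\idToCEq\) has shape component an equivalence (given by the shape of \(\CUA\) in \(\cat{D}\)) that is not a categorical equivalence in \(\cat{D}\). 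Since \(\Shape{-}\) preserves identity types and function composites, a categorical inverse in \(\Poly{\cat{D}}\) would yield one in \(\cat{D}\) on shapes. So it suffices to show that \(\idToCEq\) in \(\cat{D}\) is not a categorical equivalence, i.e.\ that \(\cat{D} = \Poly{\cat{C}} \nvDash \CCUA_{\U}\).

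\textbf{Main obstacle.} This step circles back to a direct refutation in \(\Poly{\cat{C}}\), which \Cref{prop:poly+fe-inherits-ccua} forbids when \(\cat{C} \vDash \FE\). Hence the base must lack \(\FE\) while keeping \(\FamCUA\). The honest obstacle is therefore to find, in \(\Poly{\cat{D}}\), two maps \(g_0,g_1 \co (A\catIso B) \to (A =_\U B)\) that agree on shapes but differ on positions, modelled on \Cref{prop:evil-functions}. This would produce a non-unique categorical retraction and contradict \Cref{prop:sections-of-equivalence-are-unqiue}. I would first try to choose \(A,B\) so that \(A \catIso B\) has nonempty positions, by using index types of the form \(\TopGun{-}\) coming from \(\cat{D}\).
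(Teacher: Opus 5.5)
Your proposal does not arrive at a model, so there is a genuine gap. You are right to discard the slice with the padded universe, since that is exactly the model where \(\FamCUA_{\U}\) fails (\Cref{prop:cua-does-not-imply-famcua}). But the route you commit to instead cannot work as planned. That route is \(\Poly{\cat{C}}\), or \(\Poly{\cat{D}}\) with \(\cat{D} \coloneqq \Poly{\cat{C}}\), in both cases with the standard identity types.

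\begin{itemize}
\item For \(\Poly{\cat{C}}\), \Cref{prop:poly+fe-inherits-ccua} gives \(\CCUA_{\U}\). For \(\Poly{\cat{D}}\), your Step 2 reduction asks you to refute \(\CCUA_{\U}\) in \(\cat{D} = \Poly{\cat{C}}\), which is the same impossible task, as you notice yourself.
\item Your fallback needs two maps \(g_0,g_1 \co (A \catIso B) \to (A =_{\U} B)\) that agree on shapes but differ on positions. These cannot exist in \emph{any} polynomial model with the standard identity type. By \Cref{prop:polynomial-model:identity-types}, the positions of \(A =_{\U} B\) are the constant family \(0\). So the partial position map \(\ShapePos{g}\) of any map \(g\) into \(A =_{\U} B\) has empty domain, and is strictly unique by the \(\eta\) rules for \(0\) and \(\Pi\).
\item \Cref{prop:evil-functions} works only because its codomain \(\TopGun{1}\) has nonempty positions.
\item Your claim that \(A \catIso B\) need not be \(\Mod\)-modal over a base without \(\FE\) contradicts your own next sentence. \Cref{prop:type-of-familial-categorical-equivalences-is-modal} needs only \(\FamCUA_{\U}\) in the base, which \(\cat{D}\) has. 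So in \(\Poly{\cat{D}}\) both sides of \(\idToCEq\) are \(\Mod\)-modal, and no position data is left to exploit.
\end{itemize}

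The missing idea is that \(\FamCUA_{\U}\) is invariant under replacing the identity type by an equivalent one, whereas \(\CCUA_{\U}\) is not. This is the brittleness discussed right after the definition of \(\CCUA_{\U}\). So keep your slice over \(\TopGun{1}\), but pad the identity type instead of the universe: \((u ='_{A} v) \coloneqq (u =_{A} v) \times \TopGun{1}\).

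\begin{itemize}
\item \(\TopGun{1}\) is contractible in the slice by \Cref{prop:types-are-prop-iff-shapes-are}. Hence the projection \((u ='_{A} v) \to (u =_{A} v)\) is an equivalence, and \('=\) is again an identity type.
\item Wild-categorical isomorphisms formed with \(='\) are therefore equivalent to the old ones, so \(\FamCUA_{\U}\) transfers from \(\Poly{\cat{C}}\).
\item Now \(A ='_{\U} B\) has constant positions \(1\). By contrast, \(A \catIso' B\) contains two padded path factors and has constant positions \(1+1\); its remaining positions are empty by \Cref{prop:type-of-familial-categorical-equivalences-is-modal}.
\item A categorical equivalence between these types would induce, as in \Cref{prop:good-characterization-of-Familial-categorical-equivalence-in-polynomial-model}, a fiberwise categorical isomorphism between the position families \(1\) and \(1+1\). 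That is impossible over inhabited shapes, and the shapes are inhabited, for instance by reflexivity.
\end{itemize}

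So \(\idToCEq\) formed with \('=\) is an equivalence but not a categorical equivalence, which is exactly the separation required.
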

\begin{proof}
  Take \(\cat{C}\) to be a model of \(\ITT + \FE + \UA_{\U}\) with extensive finite coproducts of types, as provided by \cref{prop:extensive-univalent-base-model}.
  Then \(\Poly{\cat{C}} \vDash \FamCUA_{\U}\) by \Cref{prop:poly-inherits-famcua}.
  As in the proof of \cref{prop:cua-does-not-imply-famcua}, we now consider the slice model \(\Poly{\cat{C}}/\TopGun{1}\) for \(\TopGun{-}\) from \Cref{defn:topgun}.
  This time, however, we modify the interpretation of the identity type.

  We define our new identity types by \((u ='_A v) \coloneqq (u =_A v) \times \TopGun{1}\), where \(u =_A v\) is the identity type in \(\Poly{\cat{C}}\).
  Since \(\TopGun{1}\) is a proposition by \Cref{prop:types-are-prop-iff-shapes-are}, it is contractible in the slice, so the projection \((u ='_A v) \to (u =_A v)\) is an equivalence and in particular \(u ='_A v\) is an identity type.
  If we write \(\cong'\) for wild-categorical isomorphisms defined with \(='\), it follows also that \((a \cong_{{\cat{D}}} b) \simeq (a \cong'_{{\cat{D}}} b)\) for any wild category \(\cat{D}\).
  Thus \(\FamCUA_{\U}\), which holds in \(\Poly{\cat{C}}\) by \Cref{prop:poly-inherits-famcua}, transfers to the slice model with the new identity type.

  However, \(\CCUA_{\U}\) cannot hold (in or out of the slice) when formulated with \(='\) and \(\cong'\).
  For \(A,B \co \U\), the family of positions for \((A ='_{\U} B)\) is the constant family \(1\).
  The family of positions for \(A \cong'_{\U} B\), which is categorically equivalent to \((A \cong_{\U} B) \times \TopGun{1} \times \TopGun{1}\), is the constant family \(1 + 1\) (using \Cref{prop:type-of-familial-categorical-equivalences-is-modal}).
  Thus, by \Cref{prop:good-characterization-of-Familial-categorical-equivalence-in-polynomial-model}, the equivalence \((A ='_{\U} B) \simeq (A \catIso' B)\) is only categorical when both sides are empty.
\end{proof}

\subsection{Approximate univalence}
\label{sec:other-univalences:approximate}

Van den Berg \cite[Definition 2.13]{VanDenBerg2020} defines another weak form of \(\UA_{\U}\), in the language of path categories, which can be rendered in type theory as follows.

\begin{definition}\label{defn:approximate-univalence}
  \emph{Approximate univalence} (\(\PUA_\U\)) is the principle that for all \(A,B \co \U\) and \(e \co A \simeq B\), we have some \(p \co A =_{\U} B\) such that \(\idToEquiv(p) \sim e\).
\end{definition}

Notably, \(\PUA_\U\) can be expressed as an inference rule without \(\Pi\) types.
In the presence of \(\Pi\) types, Swan \cite[Remark 4.6]{Swan2024} comments that it is an open question whether \(\PUA_{\U}\) implies \(\FE_{\U}\).
An immediate but subtle consequence of \(\PUA_{\U}\) is that there is a composite map \((A \simeq B) \to (A =_{\U} B) \to (A \cong B)\) that improves any homotopy equivalence to a homotopic categorical equivalence.
In light of the decomposition of \(\UA_\U\) in \Cref{sec:decomposing-univalence:funext}, it is natural to consider an analogue of \Cref{defn:equivalence-improvement}:

\begin{definition}\label{defn:approximate-equivalence-improvement}
  \emph{Approximate equivalence improvement} (\(\PEI\)) is the principle that for all types \(A,B\) and \(e \co A \simeq B\), we have some \(e' \co A \CEq B\) such that \(\CEqToEq(e') \sim e\).
\end{definition}

One \(\FE\)-like corollary of \(\PEI\) is that if \(P\) is a contractible type, then \(A \to P\) is also contractible for every type \(A\): by \Cref{prop:cat-isomorphism-characterization}, we have \((A \to P) \simeq (A \to 1) \sequiv 1\).
This is not provable in \(\ITT\), as \Cref{prop:types-are-prop-iff-shapes-are,prop:evil-functions} show.
However, the exact relationship between \(\PEI\) and \(\FE\) is a mystery to us:

\begin{ques}\label[question]{prop:pei-implies-fe}
  Does \(\ITT + \PEI \vdash \FE\)?
\end{ques}

An answer to \Cref{prop:pei-implies-fe} might not tell us whether \(\ITT + \PUA_{\U} \vdash \FE_{\U}\), but it may be a more tractable question.
The polynomial models refute \(\PEI\): \(\TopGun{1}\) and \(\TopGun{1+1}\) are equivalent (\Cref{prop:equivalences-in-polynomial-model}) but not categorically equivalent (\Cref{prop:good-characterization-of-Familial-categorical-equivalence-in-polynomial-model}).
Boulier, P\'{e}drot, and Tabareau's \emph{intensional function translation} \cite[\S3]{BoulierPedrotTabareau2017} sends a theory with \(\FE\) to a syntactic model with \(\PEI \land \lnot\FE\), but its function types do not satisfy any \(\eta\) rule, so this does not answer the question for \(\ITT\) as we define it.
Shulman \cite{Shulman2014} has a recipe for expressing universal properties without \(\FE\) that suggests stronger forms of \(\PEI\); for example, one can also ask that homotopic categorical equivalences are equal.
It is not clear to us how these strengthenings relate to \(\PEI\) or to \(\FE\).

\begin{remark}\label{defn:approximate-categorical-univalence}
  Naturally, we can also consider \emph{approximate categorical univalence} \(\PCUA_{\U}\): the principle that for all \(A,B \co \U\) and \(e \co  A \CEq B\), we have some \(p \co A =_{\U} B\) such that \(\idToCEq(p) \sim e\).
  This is the weakest of all the univalence principles we have considered, but we do not know if it is strictly weaker than \(\CUA_{\U}\).
\end{remark}

\section{Related work}\label{sec:related-work}

To conclude, we comment on the status of weak forms of univalence in other known models of type theory without function extensionality.

\subsection{Realizability models}

Realizability is a standard source of models of \(\ITT\) that refute extensionality principles, including \(\FE\); see Streicher \cite[Theorem 2.9, \S3.7]{Streicher1993}.
However, most work combining features of realizability and homotopical semantics, such as that of Frumin and Van den Berg \cite{FruminVanDenBerg2018} and Uemura \cite{Uemura2019}, constructs models that \emph{do} satisfy \(\FE\).
An exception is Speight's \emph{groupoidal realizability} \cite{Speight2024}; his function types have neither \(\FE\) nor the \(\eta\) rule.
Speight constructs an impredicative universe of modest fibrations, but we do not know if this or any other universe in the model satisfies some kind of univalence.

\subsection{P{\'{e}}drot and Tabareau's parametric exceptional translation}

The \emph{parametric exceptional translation} \cite{PedrotTabareau2018} is another source of models of type theory without \(\FE\).
Presented as a syntactic translation, it induces a construction \(\ParEx{-}\) on models.
Unlike \(\Poly{-}\), however, \(\ParEx{-}\) does not preserve any form of univalence that we know of.
We sketch here a reason for the simplest form of the translation (\(\mathbb{E} = 1\) and \(\Omega_i(\star) = 1\)).
Kov{\'a}cs \cite{Kovacs2024antifunext} has formalized this case in Agda.

Given a model \(\cat{C}\), the category of contexts in \(\ParEx{\cat{C}}\) is \(\int_{\Gamma \in \cat{C}} \Ty*(\Gamma)\): objects are pairs \(\Gamma = (\Shape{\Gamma},\Pos{\Gamma})\) as in \(\Poly{\cat{C}}\), but a morphism \(\sigma \co \Delta \to \Gamma\) is a pair of \(\Shape{\sigma} \co \Shape{\Delta} \to \Shape{\Gamma}\) in \(\mathbb{C}\) and \(\Pos{\sigma} \co \Pos{\Delta} \to \Pos{\Gamma}\Shape{\sigma}\) in \(\Ty*(\Shape{\Delta})\).
We think of \(\Pos{\Gamma}\) as selecting ``valid'' elements of \(\Shape{\Gamma}\).
Types \(A \in \Ty(\Gamma)\) have components \(\Shape{A} \in \Ty(\Shape{\Gamma})\), \(\Pos{A} \in \Ty(\Shape{\Gamma}.\Pos{\Gamma}.\Shape{A})\), and \(\Err A \in \Tm(\Gamma, \Shape{A})\).
Again we think of \(\Pos{A}\) as selecting valid elements of \(\Shape{A}\), while \(\Err{A}\) is a distinguished ``error'' element.
While intuitively \(\Err{A}\) should not be valid, this is not enforced.

The mismatch between \(A \CEq B\) and \(A =_\U B\) is that for the former, categorical equivalences of the \(\Shape{-}\) and \(\Pos{-}\) components suffice, while the latter requires also that \(\Err{A}\) corresponds to \(\Err{B}\).
For example, define \(X^0,X^1 \in \Ty(1)\) by \(\Shape{X^k} \coloneqq 1 + 1\), \(\Pos{X^k}(b) \coloneqq 1\), and \(\Err{X^k} = \inc{k}(\star)\).
The identity equivalence on \(1 + 1\) defines a strict isomorphism \(X^0 \sequiv X^1\) that cannot induce a path \(X^0 =_\U X^1\) because it does not send \(\Err{X^0}\) to \(\Err{X^1}\).

\subsection{Bordg's projective model}

Bordg~\cite{Bordg2015,Bordg2017} describes a model of type theory with \(\Sigma\) types, \(\Pi\) types, identity types, and a universe \(\U\) in the category \([\B{C_2},\Gpd]\) of groupoid-valued presheaves on the two-element group.
This model is based on the \emph{projective Quillen model structure}: types are morphisms of \([\B{C_2},\Gpd]\) whose underlying \(\Gpd\)-morphisms are isofibrations.
Bordg observes that the model refutes \(\FE\) \cite[Proposition 6.3]{Bordg2017} and that the natural choice of universe is not univalent \cite[Proposition 6.1]{Bordg2017}, despite the fact that a stronger condition on types corresponding to the \emph{injective model structure} yields a model of \(\FE\) and \(\UA_{\U}\) \cite[\S5.4]{Bordg2015}.

Considering our weaker forms of \(\UA_{\U}\) in this model is fruitless: although \(\FE\) fails, \(\FE_{\U}\) holds (cf.\ \cite[Remark 6.4]{Bordg2017}).
This is because the groupoids classified by \(\U\) are strict sets, as in Hofmann and Streicher's groupoid model \cite{HofmannStreicher95}.
Thus, \(\CUA_{\U}\) for example cannot hold, for if it did then \(\UA_{\U}\) would follow.

\bibliographystyle{./entics}
\bibliography{references}

\end{document}